\newcommand{\bZ}{\mathbb{Z}}
\newcommand{\boldx}{\mathbf{x}}
\newcommand{\boldX}{\mathbf{X}}
\newcommand{\boldY}{\mathbf{Y}}
\newcommand{\range}[1]{\llbracket #1 \rrbracket}
\newcommand{\zrange}[1]{\llbracket #1 \rrbracket}
\newcommand{\red}[1]{\textcolor{red}{#1}}
\def\namedlabel#1#2{\begingroup
    #2%
    \def\@currentlabel{#2}%
    \phantomsection\label{#1}\endgroup
}
\newif\ifFULL
\newtheorem{theorem}{Theorem}
\newtheorem{remark}{Remark}
\newtheorem{proposition}{Proposition}
\newtheorem{lemma}{Lemma}
\newtheorem{definition}{Definition}
\newtheorem{corollary}{Corollary}
\begin{document}

\title[Single Fragment Forensic Coding\\from Discrepancy Theory]{Single Fragment Forensic Coding from Discrepancy Theory}


\author{\fnm{Junsheng} \sur{Liu}}\email{junsheng@wustl.edu}
\author{\fnm{Netanel} \sur{Raviv}}\email{netanel.raviv@wustl.edu}

\affil{\orgdiv{Department of Computer Science and Engineering}, \orgname{Washington University in St.\ Louis}, \city{St.\ Louis}, \state{MO}, \country{USA}}


\abstract{Three-dimensional (3D) printing's accessibility enables rapid manufacturing but also poses security risks, such as the unauthorized production of untraceable firearms and prohibited items. 
    To ensure traceability and accountability, embedding unique identifiers within printed objects is essential, in order to assist forensic investigation of illicit use.
    This paper models data embedding in 3D printing using principles from error-correcting codes, aiming to recover embedded information from partial or altered fragments of the object. 
    Previous works embedded one-dimensional data (i.e., a vector) inside the object, and required almost all fragments of the object for successful decoding.
    In this work, we study a problem setting in which only one sufficiently large fragment of the object is available for decoding. 
    We first show that for one-dimensional embedded information the problem can be easily solved using existing tools.
    Then, we introduce novel encoding schemes for two-dimensional information (i.e., a matrix), and three-dimensional information (i.e., a cube) which enable the information to be decoded from any sufficiently large rectangle-shaped or cuboid-shaped fragment.
    Lastly, we introduce a code that is also capable of correcting bit-flip errors, using techniques from recently proposed codes for DNA storage.
    Our codes operate at non-vanishing rates,
    and involve concepts from discrepancy theory called Van der Corput sets and Halton-Hammersely sets in novel ways.}

\keywords{Coding theory, security, 3D printing, discrepancy theory}



\maketitle

\section{Introduction}
{\let\thefootnote\relax\footnote{Parts of this work were presented in the 2025 IEEE International Symposium on Information Theory. This work was supported in part by NSF grant CNS-2223032.}}
3D printing technology has revolutionized manufacturing by enabling on-demand production using commodity printers. 
However, its accessibility and versatility also present significant security challenges, particularly concerning the unauthorized or untraceable production of firearms and other prohibited items. 
Addressing these challenges requires innovative mitigation strategies that ensure traceability and accountability without hindering the legitimate uses of 3D printing. 

To counteract these security challenges, various information embedding techniques are employed to tag objects with identifying information, so they can be traced and verified during forensic investigation. 
For instance, physical modifications to the printing process (e.g., altering layer width or orientation) can affect the object in a way that embeds bits into the object without changing its material properties.
While multiple effective embedding techniques have been proposed in the literature (e.g.~\cite{delmotte2019blind,narendra2022watermarking}), including secure mechanisms for embedding information using untrusted printers~\cite{wang2024secureinformationembeddingextraction}, the embedded information is exposed to tampering by the malicious actor which owns both the printer and the object.

For example, malicious actors can attempt to erase, modify, or obscure the embedded bits to evade detection and traceability. 
While some actions taken by the malicious actor create errors of types that are well-studied within coding theory, this problem setting gives rise to unique threats that were not studied before.
Our paper focuses on one such threat, in which the malicious actor breaks the object apart and potentially conceals some of the resulting fragments.

We model this problem as one of communication under noise, where an encoder (i.e., the printer) transmits a codeword (i.e., a unique identifier) through an adversarial channel, in which the adversary (i.e., the malicious actor) may break it to pieces and conceal some of them, while being restricted by some security parameters.
The decoder (i.e., law-enforcement) receives only one connected fragment of this codeword. 
In addition, the encoding and decoding process described above might induce bit-flips due to improper reading or writing of bits into the object, hence bit-flip resilience is necessary.
A code is considered effective if the decoder can accurately reconstruct the original message from the received fragment with bit-flip errors, under any adversarial behavior which falls within the security parameters.
This \textit{forensic-coding} model, which can be seen as a variant of the recently proposed torn-paper model (\cite{barlev,wang2024breakresilientcodesforensic3d,wang2024secureinformationembeddingextraction,tornpaper}, see next) provides a realistic abstraction for the information embedding problem in 3D printing. 

The forensic coding model can be instantiated using various bit embedding schemes, which can embed either a vector, a matrix, or even a three-dimensional cube, each of which resulting in different system requirements and information density. 
The vector case can be addressed using the well-studied concept of
the cyclically permutable codes~\cite{QA1992ConstructionsOB}. 
These codes ensure that any connected fragment of sufficient length contains enough information to reconstruct the original message (details in Section~\ref{section:problem definition for 2D}). 

However, when extending this problem to matrices or cubes, new challenges emerge due to the increased complexity of possible fragment shapes and sizes. 
In both cases, fragments can vary not only in length but also in width and overall shape, leading to a vast number of potential fragment configurations. 
For ease of presentation, our paper separates the problem to its two-dimensional (i.e., a matrix data) and three-dimensional (i.e., cube data) variants, which are treated separately in Section~\ref{Section:pre for 2D} and in Section~\ref{section: pre for 3D}, respectively. 
In Section~\ref{Section:pre for 2D} we develop matrix encoding which accommodates decoding from any fragment which contains a sufficiently large axis-parallel rectangle with some minimum side length.
In Section~\ref{section: pre for 3D} we develop cube encoding which accommodates decoding from any fragment which contain sufficiently large cuboid, again, with minimum side length.
Further, in Section~\ref{section:error correction} we develop matrix encoding and decoding which can additionally tolerate bit-flips; this is done using techniques from the recently introduced sliced-channel~\cite{sima2024robust}.

Our results make use of concepts from discrepancy theory~\cite{chen2014panorama}, a mathematical theory which studies how evenly elements can be distributed across combinatorial or geometric arrangements.
We use known constructions of such arrangements in order to disperse information evenly throughout the matrix (or cube), so that any fragment contains all decodable information.
In particular, Section~\ref{Section:pre for 2D} uses a concept called Van der Corput sets~\cite{zbMATH02531767}, a set of points in a two dimensional square which minimizes the area of the largest empty rectangle.
Section~\ref{section: pre for 3D} uses an extension of this concept to three dimensions due to Halton~\cite{halton1960efficiency}.



Previous works about forensic coding addressed a similar adversary which can break the object (and hence the information) apart, but focused on the less realistic case where (almost) all fragments are given to the decoder, and the less information-dense case of embedding vectors~\cite{wang2024breakresilientcodesforensic3d} (see also~\cite{wang2024secureinformationembeddingextraction}). 
Further, similar problems have been studied in the area of DNA storage~\cite{sima2023error}. 
For example,~\cite{tornpaper} introduced \textit{torn-paper coding}, designed to enhance data reliability in channels where information fragments may be torn and reordered by a \textit{probabilistic} (i.e., not adversarial) process.
Closer to the current paper,~\cite{barlev} presents \textit{adversarial} torn-paper codes, extending torn-paper coding to worst-case scenarios. However,~\cite{barlev} focused on decoding vector information from almost all fragments, and restricted the fragments in length.
Additionally,~\cite{sima2023robustindexingslicedchannel,sima2024robust} studies a channel model in which information is sliced at prescribed locations, which is less relevant to our target application. 
Our paper extends the above line of works and provides a new scheme for single-fragment decoding of matrix or cube data (Fig.~\ref{figure:problemDefinition}).

\begin{figure}[h]
    \centering    
 \includegraphics[width=0.9\linewidth]{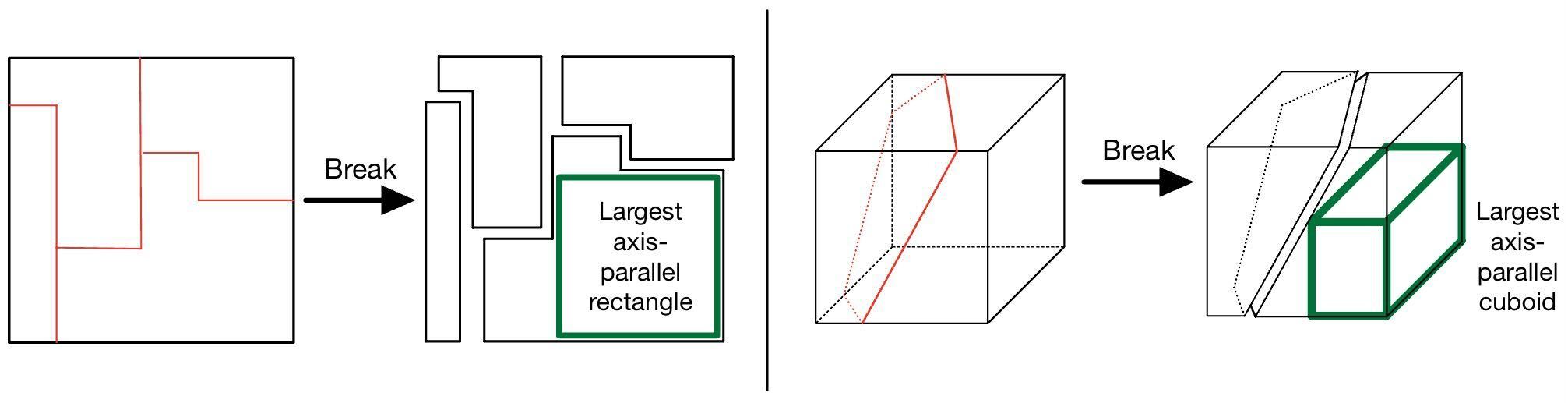}

\caption{\textbf{(Left)} Two dimensional embedded information (a matrix) is broken along the red lines, creating a collection of fragments, only one of which is received by the decoder. 
Our schemes guarantee correct decoding as long as the recieved fragment contains an axis-parallel rectangle of a certain minimum area and a certain minimum side length.
This problem setting is discussed in Section~\ref{Section:pre for 2D}.
\textbf{(Right)} A similar problem setting with three dimensional embedded information (a cuboid), see Section~\ref{section: pre for 3D}.}\label{figure:problemDefinition}
\end{figure}

\section{Forensic coding for two dimensional information}\label{Section:pre for 2D}

\subsection{Problem Definition}\label{section:problem definition for 2D}
In order to embed a fingerprint in an object, we formulate the problem as one of communication under adversarial noise. 
To this end, we model the printer as an encoding mechanism (or encoder), which maps a message~$\boldx$ containing the information to be embedded (e.g., printer ID, geolocation, etc.), into a codeword. 
Since the vector case is easily addressable using cyclically permutable codes\footnote{A cyclically permutable code (CPC,~\cite{QA1992ConstructionsOB}) is one where no codeword is a cyclic permutation of another, and they can be constructed~\cite{chen2023enumeration} from cyclic codes~\cite[Ch.~8]{roth2006introduction}.
In vector forensic codes, where one fragment of length at least~$M$ is available to the decoder, it is an easy exercise to verify that an optimal construction is given by encoding the information using a CPC of length~$M$ and then repeating it~$n/M$ times to form a codeword.
To the best of our knowledge, CPCs do not directly extend to our
two dimensional setting.
Applying one-dimensional CPC row by row is possible but highly inefficient as it results in vanishing rates.}, we focus on matrix codewords.
Specifically, a codeword in this setting is a matrix~$C\in\Sigma_q^{n\times n}$, where~$n$ is a parameter of the system and~$\Sigma_q$ is an alphabet of size $q$.
We consider the message~$\boldx$ as residing in~$\Sigma_q^k$ for some parameter~$k$.

After encoding, an adversary receives~$C$, breaks it apart, and discards of all fragments except one. 
That one fragment arrives at the decoder (i.e., law enforcement) without any noise (the noisy setting is studied separately in Section~\ref{section:error correction}), and the decoder must reconstruct the message~$\boldx$ in all cases. 
Clearly, this task is impossible without clear restrictions on the power of the adversary, e.g., the size and shape of the received fragment. 
To this end, we impose the restriction that the fragment contains a rectangle of area at least~$M$, where~$M$ is an additional parameter of the system. 
The discussion is not limited if the decoder receives multiple fragments, or a fragment which contains larger rectangles, as long as there exists one fragment which contains a rectangle of area at least~$M$. 

Due to the challenging nature of the problem, we further assume that the received fragment contains a rectangle of area~$M$ which is neither ``too thin'' nor ``too thick,'' i.e., the received fragment contains an~$a\times b$ rectangle for some integers~$a$ and~$b$ such that~$ab\ge M$ and~$\min\{a,b\}\ge h$ for an additional parameter~$h$; such a fragment is called \textit{legal}. 

Recently, Sun and Lu~\cite{sun2025criss} studied a setting in which an~$n\times n$ array code suffers a \emph{$(t_r,t_c)$-criss–cross deletion}---the simultaneous removal of~$t_r$ consecutive rows and~$t_c$ consecutive columns.  
By employing non-binary Varshamov–Tenengolts (VT) codes~\cite{varshamov1965codes}, they attain nearly optimal redundancy, albeit under specific constraints on $t_r$, $t_c$, and on the alphabet size~$q$.
In this paper we tackle a closely related yet distinct scenario.  
Our problem can be considered as constructing a code which simultaneously accommodates any parameters~$t_r,t_c$ such that
\begin{align*}
  (n-t_r)(n-t_c) &\;\ge\; M, \\
  \min\{\,n-t_r,\;n-t_c\,\} &\;\ge\; h, 
\end{align*} 
and additionally, assumes that row deletions occur in two bursts, one at the top and one at the bottom of the matrix (and similarly for columns).
Further, we allow $t_r,t_c=O(n)$ with \emph{no} restriction on~$q$.
Hence, our constructions solve a different problem at a wider parameter regime.

Our goal is constructing encoding and decoding functions which enable the decoder to retrieve the message~$\boldx$ in all cases where the received fragment contains a rectangle of area at least~$M$ with sides of length at least~$h$. 
For brevity, we call such codes~\textit{$(q,n,M,h)$-forensic}, or \textit{forensic codes} for short. 
The figure of merit by which we assess a given forensic code is the \textit{rate}~$k/M$. 
Notice that we normalize by~$M$ (fragment area) rather than by~$n^2$ (codeword area) since the decoder only receives~$M$ bits to decode.

\subsection{Overview}\label{section:overview}

There are two challenges in constructing forensic codes:
\begin{enumerate}
    \item[\namedlabel{item:enoughInfo}{(I)}] One must guarantee that every legal fragment contains sufficient information in order to reconstruct the original message. 
    \item[\namedlabel{item:alignment}{(II)}] The positioning (or alignment) of the received fragment with respect to the codeword is unknown.
\end{enumerate}
To address these challenges, we begin by partitioning the~$n\times n$ codeword into smaller~$d\times d$ submatrices called \textit{units}. 

To address~\ref{item:enoughInfo} we use a known discrepancy-theoretic notion called Van der Corput sets~\cite{Dumitrescu_Jiang_2012} in order to disperse the information in a uniform manner, so that every sufficiently large rectangle of units contains the information in its entirety. 
In more detail, we address~\ref{item:enoughInfo} by associating the different units in a codewords with different \textit{colors}, i.e., each unit is mapped to an integer, and units of the same color contain the same information. 
The colors are chosen in accordance with the properties of Van der Corput sets so that any fragment of at least a certain area will contain all different colors, and hence all different types of units, and the message could be decoded.

To address challenge~\ref{item:alignment} we allocate one designated unit to contain a certain fixed pattern (rather than information), that is discernible from the content of any other unit.
That distinguished unit serves as a marker by which the position of all other units is determined. 
The presence of such unit in every legal fragment is again guaranteed by the properties of Van der Corput sets.


Van der Corput sets are finite sets of points in the two-dimensional unit square such that every axis-parallel rectangle of at least a given area must contain a point. 
Largely speaking, we use variants of Van der Corput sets as color sets for units, and hence large enough fragments will correspond to large enough axis parallel rectangles, which will therefore contain all colors. 
Section~\ref{section:VDC} below introduces Van der Corput sets and their variants which are necessary for constructing forensic codes.
Section~\ref{section:encoding for 2D} and Section~\ref{section:decoding for 2D} which follow introduce the resulting encoding and decoding mechanisms.

\subsection{Van der Corput sets}\label{section:VDC}

To introduce the Van der Corput set, we begin by defining the bit reversing function. 
\begin{definition}[Bit reversing function]\label{definition:Bit reversing function}
    Let $z$ be  an integer $0\leq z<2^c$ for some integer $c\in \bZ^+$, and let~$(a_0,\ldots,a_{c-1})$ be the binary representation of~$z$, i.e., $z=\sum_{i=0}^{c-1}a_i2^i$. Then, the bit reversing function~$f_{2^c}$ is defined as $f_{2^c}(z)=\sum_{i=0}^{c-1}a_i2^{c-1-i}$. 
\end{definition}
The bit reversing function can be viewed as reversing the bits of the binary representation for an integer given as input. 
For example, for~$c=5$ the binary representation of~$z=13$ is $01101$, and hence $f_{32}(z)=22$, since the reverse of $01101$ is $10110$. 
Notice that the bit reversing function is bijective.
Using the bit reversing function we define the Van der Corput set. 
\begin{definition}[Van der Corput set~\cite{zbMATH02531767,Dumitrescu_Jiang_2012}]\label{definition:VDCset}
    For a positive integer power of two~$w$, the $w$ Van der Corput set ($w$-VDC set, for short) is~$C_w\triangleq\{(q,f_w(q))\}_{q=0}^{w-1}$. 
\end{definition}
The Van der Corput set is a crucial notion in discrepancy theory, especially for investigating the largest empty axis-parallel rectangle for a given set of points. 
Specifically, for a finite set of points~$T$ in $R=[0,r]\times[0,r]$, with $r$ being any positive integer, 
let~$A_r(T)$ be the area of the largest empty integer axis-parallel rectangle\footnote{An integer axis-parallel rectangle is one in which the vertices have integer values, and all edges are parallel to one of the axes.} (abbrv. largest empty rectangle) contained in~$R$. 

The quantity~$A_r(T)$ is a fundamental property of distributions of sets in space, as evenly spaced points will result in a smaller~$A_r(T)$. 
The challenge is to bound~$A_r(T)$ for any set of~$w$ points~\cite{Dumitrescu_Jiang_2012}, and to construct sets with small~$A_r(T)$ explicitly. 
One such explicit set is the Van der Corput set (Definition~\ref{definition:VDCset}). 
The following lemma is a special case of a well known one; 
\ifFULL
the proof is given in full due to minor differences and extensions that will be discussed later.
\else
a full proof is given in~\cite{combinedWebsites} due to space constraints.
\fi
In what follows we denote $\range{n}\triangleq\{0,1,\ldots,n\}$.
\begin{lemma}\label{lemma:VDCarea}
    For a positive integer~$c$ and~$w=2^c$, the Van der Corput set $C_w\subseteq[0,w]\times[0,w]$ satisfies $A_w(C_w)<4w$.
\end{lemma}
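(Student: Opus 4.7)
The plan is to fix an arbitrary empty integer axis-parallel rectangle $R = [x_1, x_2] \times [y_1, y_2] \subseteq [0,w] \times [0,w]$ with width $a = x_2 - x_1$ and height $b = y_2 - y_1$, and to prove $ab < 4w$ by producing two complementary bounds: an upper bound on $a$ coming from a dyadic maximality argument on the horizontal projection of $R$, and an upper bound on $b$ coming from the bit-reversal structure of $C_w$ on the resulting dyadic column.

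First, I would let $s \ge 0$ be the largest integer for which some dyadic interval $[j \cdot 2^s, (j+1) \cdot 2^s]$ with $j \in \bZ$ fits inside $[x_1, x_2]$. A short counting argument on dyadic intervals of length $2^{s+1}$ shows that if $a \ge 2^{s+2}$ then the real interval $[x_1/2^{s+1},\, x_2/2^{s+1} - 1]$ has length $a/2^{s+1} - 1 \ge 1$ and therefore contains an integer $k$, which in turn yields a dyadic interval $[k \cdot 2^{s+1}, (k+1) \cdot 2^{s+1}] \subseteq [x_1, x_2]$ and contradicts the maximality of $s$. Hence $a < 2^{s+2} = 4 \cdot 2^s$.

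Next, I would use Definition~\ref{definition:Bit reversing function} to extract the structure of $C_w$ on that dyadic column. For integer $q$ ranging over $[j \cdot 2^s, (j+1) \cdot 2^s)$, the top $c - s$ bits of $q$ are fixed (to the binary representation of $j$) while the bottom $s$ bits enumerate $\{0, 1, \ldots, 2^s - 1\}$. Applying $f_w$ swaps these roles: the bottom $c - s$ bits of $f_w(q)$ are fixed to some $r_j \in \{0, \ldots, 2^{c-s} - 1\}$ (the bit reversal of $j$), while the top $s$ bits run over all $2^s$ patterns. Consequently, the $y$-coordinates of these $2^s$ points of $C_w$ form exactly the arithmetic progression $\{r_j + k \cdot d : k = 0, \ldots, 2^s - 1\} \subseteq [0, w)$ of common difference $d = 2^{c - s} = w/2^s$.

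Finally, emptiness of $R$ forces $[y_1, y_2]$ to miss every term of this AP. A short case analysis on whether $[y_1, y_2]$ lies below $r_j$, strictly between two consecutive AP terms, or above $r_j + (2^s-1) d$ gives $b < d = w/2^s$ in every case, using that $0 \le r_j < d$. Multiplying with $a < 4 \cdot 2^s$ yields $ab < 4w$, as desired. The main obstacle I anticipate is the dyadic maximality step, since the rounding of $x_1, x_2$ modulo $2^{s+1}$ has to be handled carefully, together with the edge cases $s = 0$ (a very thin rectangle) and $s = c$ (where the dyadic interval coincides with $[0, w]$ and the AP degenerates to all integers in $[0, w)$), both of which must still be verified to give $ab < 4w$ from the same argument.
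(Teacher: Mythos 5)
Your proof is correct and takes essentially the same approach as the paper's: both locate the largest dyadic (there called ``canonical'') interval inside one projection of the empty rectangle to bound that side by four times its length, then use the bit-reversal structure of $C_w$ to show the complementary coordinates form an arithmetic progression of gap $w/2^s$, and multiply the two bounds. The only cosmetic difference is that you dyadically decompose the $x$-projection while the paper decomposes the $y$-projection $S_y$; since $f_w$ is an involution and $C_w$ is therefore symmetric under swapping coordinates, the two versions are interchangeable.
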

\ifFULL
    \begin{proof}
    Let~$B\subseteq[0,w]\times[0,w]$ be an empty rectangle. 
    To prove the theorem we ought to show that the area of~$B$ is smaller than~$4w$. Since~$B$ is a rectangle, there exists two segments~$S_x,S_y\subseteq[0,w]$ such that~$B=S_x\times S_y$.
    
    For a positive integer~$v\in\range{c}$ and positive integer $u\in \range{2^{v}-1}$, 
    the interval $[u\cdot 2^{c-v}, (u+1)\cdot 2^{c-v})$ is called \textit{canonical}. 
    Let~$I$ be the longest canonical interval in~$S_y$, and denote $I=[a\cdot 2^{c-b}, (a+1)\cdot 2^{c-b})$ for some integers~$a$ and~$b$.
    Notice that since~$I$ is the longest canonical interval contained in~$S_y$, it follows that the length of~$S_y$ is strictly less than $2\cdot 2^{c-b+1}$; 
    otherwise, $S_y$ would have contained $[a'\cdot 2^{c-b+1},(a'+1)2^{c-b+1}]$ for some~$a'\in[0,2^v-1]$. 

    Let~$D\subseteq C_w$ be the set of points whose~$y$ coordinates are in~$I$.
    First, by the definition of~$C_w$, it follows that the~$b$ most significant bits of~$y$ are identical to the~$b$ least significant bits of~$x$, for any~$(x,y)\in C_w$.
    Second, observe that all~$y$'s such that~$(x,y)\in D$ agree on their~$b$ most significant bits.
    Therefore, since the bit reversing function is bijective, the distance between the~$x$ coordinates of every two adjacent points in~$D$, when sorted by the $x$ coordinate, is~$2^b$.    
    Since~$B$ is an empty rectangle, it follows that the length of~$S_x$ is less than~$2^b$; otherwise $B$ will contain a point.
    Therefore, $A_w(C_w)<2^b\cdot 2\cdot 2^{c-b+1}=4w$. \qedhere
\end{proof}
Therefore, the van der Corput set is an explicit set whose largest empty rectangle is small. 
In the sequel, we will require the following generalization of VDC sets.
\fi
 
\begin{definition}[Shifted Van der Corput set]\label{Definition:shifted vdc set}
    
    For a VDC set $C_w=\{(q,f_w(q))\}_{q=0}^{w-1}$ and~$i\in\range{w-1}$, let $C^i_w$ be the cyclic shift of $C_w$ along the $y$-axis, i.e., $C^i_w\triangleq\{(q,(f_w(q)+i)\bmod w\}_{q=0}^{w-1}$. 
    
\end{definition}
Notice that $\bigcup_{0\leq i\leq w-1} C_w^i$ covers all entries of~$\range{w-1}^2$
since the reversing bit function is bijective. 
Further, it is easy to prove 
\ifFULL\else
(see~\cite{combinedWebsites})
\fi
that shifted VDC sets retain the same bound on the largest empty rectangle.

\begin{lemma}\label{lemma:shifted vdc area}
    For any~$i\in \llbracket w-1\rrbracket$ we have~$A_w(C_w^i)<4w$.
   
\end{lemma}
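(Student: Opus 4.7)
The plan is to mirror the proof of Lemma~\ref{lemma:VDCarea} almost verbatim, reinterpreting the $y$-axis as the circle $\bR/w\bZ$. The guiding observation is that $C_w^i$ is obtained from $C_w$ by a cyclic translation of $y$-coordinates by $i$, which is an isometry of the circle and therefore preserves the combinatorial structure (canonical intervals, and the consequences of the bit-reversal function on $x$-coordinates) that drove the original proof.

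First I would take an arbitrary empty axis-parallel rectangle $B = S_x \times S_y \subseteq [0,w]\times[0,w]$ for $C_w^i$, and define $S_y' = \{(y - i) \bmod w : y \in S_y\}$. Then $S_y'$ is an arc of length $|S_y|$ on the circle, and by construction the ``cylindrical rectangle'' $S_x \times S_y'$ contains no point of $C_w$, even though $S_y'$ may correspond to a union of two intervals in $[0,w]$ when it crosses~$0$.

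Next I would adapt the canonical-interval argument: call the sets $[u \cdot 2^{c-v}, (u+1) \cdot 2^{c-v})$ for $v \in \range{c}$ and $u \in \range{2^v - 1}$ \emph{canonical arcs}, since for each fixed $v$ they partition the circle. Let $I$ be the longest canonical arc contained in $S_y'$, of length $2^{c-b}$. The same pigeonhole estimate as in Lemma~\ref{lemma:VDCarea} applies on the circle: if $|S_y'| \ge 2 \cdot 2^{c-b+1}$, the arc $S_y'$ must contain two cyclically consecutive multiples of $2^{c-b+1}$ and hence a canonical arc of length $2^{c-b+1}$, contradicting the maximality of~$I$. Thus $|S_y| = |S_y'| < 2^{c-b+2}$; the degenerate case $|S_y|=w$ is handled by $b=0$, which forces $|S_x|<1$ and area $0$.

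Finally, the bit-reversal step carries over unchanged: the points $(q, f_w(q)) \in C_w$ with $f_w(q) \in I$ share the top $b$ bits of $f_w(q)$, so their $x$-coordinates $q$ share their bottom $b$ bits and form an arithmetic progression with common difference $2^b$ in $\range{w-1}$. Emptiness of $S_x \times S_y'$ forces $|S_x| < 2^b$, giving $|B| = |S_x|\,|S_y| < 2^b \cdot 2^{c-b+2} = 4w$. The main subtlety I anticipate is verifying that the canonical-arc pigeonhole step really produces a \emph{standard} canonical arc rather than a cyclically shifted variant; this is ensured by the fact that canonical arcs of a fixed length partition the circle, so any two consecutive partition endpoints contained in $S_y'$ bound a canonical arc in the original sense.
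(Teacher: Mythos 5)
Your proof is correct and takes the same route as the paper: reduce to the canonical-interval argument of Lemma~\ref{lemma:VDCarea}, noting that a cyclic shift in $y$ preserves the relevant structure. The paper states this reduction in a single sentence without spelling out the circle-picture details, while you carefully recast the canonical intervals as canonical arcs on $\bR/w\bZ$ and verify that the pigeonhole step still yields a standard canonical arc; this is a more rigorous treatment of the same idea, not a different argument.
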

    \begin{proof}
The proof is identical to the proof of Lemma~\ref{lemma:VDCarea}, since shifting van der Corput set along $y$ axis will not change the bound on $S_y$ and $S_x$. 
Observe that the length of $S_x$ does not change since the shift is along the~$y$-axis, and the bound on~$S_y$ does not change since the distance between the~$y$-coordinates of adjacent points is not affected by the shift.
\end{proof}

The preceding lemmas bound the area of the largest axis-parallel empty rectangle that can occur in a VDC set and in any of its translations parallel to the \(y\)-axis.  
Applying the same argument to translations parallel to the \(x\)-axis yields an identical bound.  
Combining these two cases, we obtain the following statement.
\begin{lemma}\label{lemma: vdc area mutiple shifts}
    Let $C^{i,j}_w$ be the cyclic shift of $C_w$ along the $y$-axis for distance $i$ then shift $C^{i}_w$ along the $x$-axis for distance $j$.
    Then, we have $A_w(C_w^{i,j})<4w$.
\end{lemma}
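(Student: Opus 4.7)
The plan is to run the argument of Lemma~\ref{lemma:VDCarea} directly on the doubly shifted set $C_w^{i,j}$, absorbing both shifts into the same calculation rather than invoking Lemma~\ref{lemma:VDCarea} as a black box. Let $B = S_x \times S_y \subseteq [0,w]\times[0,w]$ be an arbitrary empty integer axis-parallel rectangle with respect to $C_w^{i,j}$; I would prove $|S_y|\cdot|S_x| < 4w$ by bounding the two factors separately.

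The first factor $|S_y|$ is handled by the same geometric argument that appears in Lemma~\ref{lemma:VDCarea}: take the longest canonical interval $I\subseteq S_y$, write $|I|=2^{c-b}$, and conclude $|S_y|<2^{c-b+2}$. This step uses only the hierarchical structure of canonical intervals in $[0,w]$ and is indifferent to both shifts. For the second factor, I would look at the subset $D$ of points of $C_w^{i,j}$ whose (post-shift) $y$-coordinate lies in $I$. Writing out the definition $\{((q+j)\bmod w,\,(f_w(q)+i)\bmod w)\}$, the $y$-condition becomes $f_w(q)\in(I-i)\bmod w$, so the $x$-coordinates of $D$ are the cyclic translates by $j$ of the set $\{q : f_w(q)\in (I-i)\bmod w\}$. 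When $(I-i)\bmod w$ is itself a canonical interval, the bit-reversing bijection forces those $q$'s to lie in a single residue class modulo $2^b$ and hence to be spaced by $2^b$ in arithmetic progression; the cyclic shift by $j$ preserves consecutive cyclic gaps, and hence leaves all line gaps on $[0,w]$ (including the two boundary gaps) strictly below $2^b$. Emptiness of $B$ then yields $|S_x|<2^b$, and the product is $<2^{c-b+2}\cdot 2^b=4w$.

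The main obstacle is that, for arbitrary $i$, the set $(I-i)\bmod w$ need not be a canonical interval: it may be a non-aligned interval of length $2^{c-b}$, or even a wrapped interval. I would resolve this by splitting $(I-i)\bmod w$ into at most two maximal sub-intervals, each contained in a different canonical interval of length $2^{c-b}$. The pre-images under $f_w$ of these two sub-intervals lie in two distinct residue classes modulo $2^b$, namely the bit-reversals of the two canonical addresses, and the key observation I would have to verify is that the interlacing of these two residue classes on the circle $\bZ/w\bZ$ still leaves every consecutive cyclic gap strictly less than $2^b$. Since a cyclic translation by $j$ preserves the multiset of cyclic gaps, and unfolding to $[0,w]$ can only split a single cyclic gap into two smaller boundary gaps, the bound $|S_x|<2^b$ is preserved regardless of the values of $i$ and $j$. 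Multiplying the two bounds yields $|B|<4w$, as desired.
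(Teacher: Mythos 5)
Your overall plan---running the argument of Lemma~\ref{lemma:VDCarea} directly on $C_w^{i,j}$---is sound in spirit, but the step you flag as the ``key observation I would have to verify'' is exactly where the argument breaks, and it cannot be verified as stated. After splitting $(I-i)\bmod w$ into sub-intervals $J_1$ and $J_2$, each a \emph{strict} sub-interval of a canonical interval of length $2^{c-b}$, the set $\{q: f_w(q)\in J_1\cup J_2\}$ is only a \emph{proper subset} of the two residue classes modulo $2^b$ you identify: within a canonical interval, $f_w$ maps a residue class onto that interval bijectively but in a scrambled order, so a sub-interval pulls back to a scattered subset of the class. The interlacing of the two \emph{full} residue classes indeed has every cyclic gap strictly below $2^b$, but deleting points from them can open much larger gaps, and the bound $|S_x|<2^b$ collapses. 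Concretely, take $w=16$, $c=4$, $b=2$, $i=2$, $I=[4,8)$: then $(I-i)\bmod 16=[2,6)$ splits as $[2,4)\cup[4,6)$, whose preimages under $f_{16}$ are $\{4,12\}$ and $\{2,10\}$, and the union $\{2,4,10,12\}$ has maximal cyclic gap $6>2^b=4$.

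The repair is to de-shift \emph{before} choosing the canonical interval. Let $\widetilde S_y=(S_y-i)\bmod w$ and $\widetilde S_x=(S_x-j)\bmod w$, cyclic intervals of the same lengths as $S_y,S_x$; emptiness of $B$ against $C_w^{i,j}$ is exactly the statement that no $q$ satisfies both $q\in\widetilde S_x$ and $f_w(q)\in\widetilde S_y$. Now choose $I$ to be the longest canonical interval contained in the \emph{cyclic} interval $\widetilde S_y$. The bound $|\widetilde S_y|<4|I|$ survives the wrap because $0$ and $w$ are boundaries of every canonical interval: if $\widetilde S_y$ wraps and has length at least $4|I|$, one of its two linear pieces has length at least $2|I|$ and sits flush against $0$ or $w$, hence contains a canonical interval of length $2|I|$. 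Since $I$ is now genuinely canonical, $D=\{q: f_w(q)\in I\}$ is a \emph{full} residue class modulo $2^b=w/|I|$ with all cyclic gaps exactly $2^b$; the cyclic interval $\widetilde S_x$ must avoid $D$, so $|S_x|=|\widetilde S_x|<2^b$, giving $|S_x|\,|S_y|<4w$. (The paper offers no written proof of this lemma---it is stated as a consequence of Lemmas~\ref{lemma:VDCarea} and~\ref{lemma:shifted vdc area}---but this de-shift-first argument is the one that its remark must intend.)
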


In the upcoming sections, we will also require tiling of a VDC set~$C_w$ in a larger~$z\times z$ codeword for some~$z$ divisible by~$w$. To bound the largest empty rectangle of this tiling, 
\ifFULL
we have the following lemma.
\else
we have the following lemma, whose proof is given in~\cite{combinedWebsites}.
\fi

\begin{lemma}\label{union of vdc lemma}
    Let~$w$ be power of~$2$ and~$z$ be an integer multiple of~$w$. For~$x,y\in\llbracket z/w-1\rrbracket$ let~$C_w^{(x,y)}\triangleq \{(z_1+xw,z_2+yw)\vert (z_1,z_2)\in C_w\}$, and let~$T\triangleq \bigcup_{x,y\in \llbracket z/w-1\rrbracket}C_{w}^{(x,y)}$. 
    Then, we have~$A_z(T)<4w$.
\end{lemma}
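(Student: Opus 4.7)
The plan is to leverage the double $w$-periodicity of $T$ in order to reduce the problem to a single $w \times w$ window, where Lemma~\ref{lemma: vdc area mutiple shifts} can be invoked directly. The key structural fact is that in every integer column $x \in \llbracket z - 1 \rrbracket$ the points of $T$ form an arithmetic progression $\{bw + f_w(x \bmod w) : b \in \llbracket z/w - 1 \rrbracket\}$ with common difference $w$, and symmetrically in every row (using that $f_w$ is self-inverse).

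First I will show that any empty integer axis-parallel rectangle $B = S_x \times S_y$ in $T$ has both side lengths strictly less than $w$. To bound $|S_y|$, pick any integer $x \in S_x \cap \llbracket z - 1 \rrbracket$; if $|S_y| \geq w - 1$, then $S_y$ contains $w$ consecutive integers and therefore hits every residue class modulo $w$, so it meets the above arithmetic progression at column $x$, contradicting emptiness. The bound on $|S_x|$ follows symmetrically by viewing $T$ row by row.

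Since $|S_x|, |S_y| < w$, the rectangle $B$ can be enclosed inside a half-open $w \times w$ window $W = [x_0, x_0 + w) \times [y_0, y_0 + w)$ itself contained in $[0, z) \times [0, z)$; there is enough slack to slide $W$ away from the boundary precisely because $|S_x|, |S_y| < w$. The crucial observation is then that $T \cap W$ contains exactly $w$ points (one per integer column of $W$), and after translating by $(-x_0, -y_0)$ these points form precisely the doubly shifted Van der Corput set $C_w^{(-y_0) \bmod w,\,(-x_0) \bmod w}$ appearing in Lemma~\ref{lemma: vdc area mutiple shifts}.

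The translated copy of $B$ is an empty integer axis-parallel rectangle of the same area, lying inside $[0, w] \times [0, w]$, so Lemma~\ref{lemma: vdc area mutiple shifts} yields area strictly less than $4w$. The main obstacle I expect is the bookkeeping in the third paragraph: carefully identifying the cyclic shifts $s = x_0 \bmod w$ and $t = y_0 \bmod w$ with the shift parameters in the definition of $C_w^{i,j}$, so that the translated point set matches the lemma exactly. Once that identification is clean, the conclusion is immediate from the already-established bounds.
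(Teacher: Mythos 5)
Your proof is correct and rests on the same two facts as the paper's: both side lengths of any empty integer rectangle are strictly below $w$ (by the $w$-periodicity of $T$ along every row and every column), and a $w\times w$ window of $T$, after translation, is exactly a doubly shifted VDC set to which Lemma~\ref{lemma: vdc area mutiple shifts} applies. The genuine difference is organizational: the paper splits into four cases according to whether $B$ lies inside one $w\times w$ tile or straddles two or four of them, whereas you observe that once both side lengths are below $w$ you can always slide a single $w\times w$ window over $B$, collapsing the paper's Cases 1--3 into Case 4. You are also more careful than the paper in identifying the shift parameters explicitly as $i=(-y_0)\bmod w$, $j=(-x_0)\bmod w$ (the paper only gestures at this with ``can be viewed as a shift''), and in invoking the involutive property of $f_w$ to justify the row/column symmetry. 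Nothing essential is missing; the only bookkeeping to spell out is that the window can always be kept inside $[0,z]\times[0,z]$ when $B$ hugs the boundary, which follows from $|S_x|,|S_y|<w$ together with $w\mid z$ and $z\ge w$, exactly as you indicate.
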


\begin{proof}
    Let~$B=S_a\times S_b=[\alpha,\beta]\times[\gamma,\eta]\subseteq[0,z]\times[0,z]$ be an empty rectangle, i.e.  $S_a=[\alpha,\beta]$ and~$S_b=[\gamma,\eta]$. 
     Recall that the bit reversing function is bijective and that $T$ is constructed as a union of shifted VDC sets. 
     Therefore, the distance between any two adjacent points of~$T$ on the same row or column is exactly $w$, and hence $|S_a|, |S_b|< w$. 
     To prove the theorem we show that the area of~$B$ is smaller than~$4w$. We split to cases:
     \begin{itemize}
         \item Case 1: $B\subseteq  [xw,(x+1)w]\times [yw,(y+1)w]$ for some~$x,y\in\llbracket z/w-1\rrbracket$. Since $B$ is located inside a shift of $[0,w]\times[0,w]$, the maximum area of an empty axis-parallel rectangle that does not contain a point from $C_w^{(x,y)}$ is $4w$ by Lemma~\ref{lemma:VDCarea}, and hence~$A_z(B)<4w$.
         \item Case 2: $B$ intersects $[xw,(x+1)w]\times [yw,(y+1)w]$ and $ [(x+1)w,(x+2)w]\times [yw,(y+1)w]$ for some~$x,y\in\llbracket z/w-1\rrbracket$. Let $\alpha\in[0,z]\times[0,z]$ be an integer. Points inside $[\alpha,\alpha+w]\times [yw,(y+1)w]$ can be viewed as shift $C_w^{(x,y)}$ to the right by $\alpha-xw$. 
         By Lemma~\ref{lemma:shifted vdc area} it follows that a shift of a VDC set retains the area of the largest empty rectangle inside it, and thus~$A_z(B)<4w$.
         \item Case 3: $B$ intersects $[xw,(x+1)w]\times [yw,(y+1)w]$ and $ [xw,(x+1)w]\times [(y+1)w,(y+2)w]$.
         Similar to Case 2.
         \item Case 4: $B$ intersects $[xw,(x+1)w]\times [yw,(y+1)w]$, $ [xw,(x+1)w]\times [(y+1)w,(y+2)w]$, $ [(x+1)w,(x+2)w]\times [yw,(y+1)w]$ and $[(x+1)w,(x+2)w]\times[(y+1)w,(y+2)w]$.
         Points inside $[\alpha,\alpha+w]\times [\gamma,\gamma+w]$ can be viewed as a shift of $C_w^{(x,y)}$ to the right by $\alpha-xw$ followed by a shift to the top by~$\gamma-yw$.
         By Lemma~\ref{lemma: vdc area mutiple shifts} it follows that a shift of a VDC set retains the area of largest empty rectangle inside it, and thus~$A_z(B)<4w$.\qedhere
     \end{itemize}
\end{proof}

Lemma~\ref{union of vdc lemma} shows that tiling small VDC sets in a large matrix does not significantly affect the area of largest empty rectangle. 
We will also require shifts of the set~$T$, i.e., for~$i\in\llbracket w-1\rrbracket$ let
\begin{align}\label{equation:T_i}
    T_i\triangleq\{(u,(v+i)\bmod z)\vert (u,v)\in T \}.
\end{align}
Since $\bigcup_{0\leq i\leq w-1} C_w^i$ covers every entry in the $w\times w$ matrix, it follows that $\cup_{i\in \llbracket w-1\rrbracket} T_i$ covers all entries in the $z\times z$ matrix. In addition, it follows from previous lemmas that $A_z(T_i)<4w$  for every~$i\in\llbracket w-1\rrbracket$. 

\subsection{Encoding}\label{section:encoding for 2D}
To construct $(q,N,M,h)$-forensic code we wish to disperse the information in a way such that:
\begin{enumerate}
    \item[\textbf{A}.] Every legal fragment (i.e., whose area is at least~$M$ and side length is at least~$h$) contains all information.
    \item[\textbf{B}.] The information in the fragment can be aligned against the original codeword.
\end{enumerate}
To this end, let~$d$ be the largest integer such that~$h \geq 3d-1$, and let~$m$ be the largest integer power of~$2$ such that~$M \geq (4d-1)((m+1)d+d-1)$;
these parameters are chosen to ensure that each fragment will contain at least $m$ units. 
For convenience we also assume that~$m|n$ and~$d|n$, although the construction can be generalized to other cases as well.
Partition the~$n\times n$ codeword into~$d\times d$ rectangles called \textit{units}.
Then, the~$\frac{n}{d}\times \frac{n}{d}$ grid of units is colored in~$m'\triangleq\frac{m}{4}$ colors. 

To guarantee~\textbf{A}, the information word~$\boldx$ is partitioned to~$m'$ different parts, and each part is associated with one of the~$m'$ colors. 
The~$m'$ parts of the information word are embedded into the codeword according to a coloring function, meaning, the grid of units is colored in~$m'$ colors, and all units colored in color~$i$ contain the~$i$'th information part.
The challenge in devising this coloring scheme is ensuring that every large enough fragment contains a unit of each color.

To guarantee~\textbf{B}, \textit{Zero Square Identification} encoding is introduced, i.e., all units of color~$0$ are set to only contain zeros, while avoiding the all zero square everywhere else.
Zero square identification encoding
(i) writes an all–zero square on every unit of color~$0$, and
(ii) on every unit of color other than~$0$ it sets the \emph{top-left}, \emph{top-right}, \emph{bottom-left} and \emph{bottom-right} entries of the unit to be~$1$.
We defer the proof that the only $d\times d$ all-zero squares are precisely the color~$0$ units to Lemma~\ref{lemma:only all zero square are color 0 units}.
 
In detail, let~$\boldx\in\Sigma_q^{k}$ be the message to be encoded. We begin by mapping~$\boldx$ to~$m'-1$ \textit{distinct} strings, each containing~$R=d^2-4$ bits, using an invertible function~$g_1(\boldx)=\{\boldx_i\}_{i=1}^{m'-1}$, where~$k=(m'-1)(R-\log_q(m'-1))$.
The function~$g_1$ is given in Proposition~\ref{proposition:g1}.

\begin{proposition}\label{proposition:g1}
    For $k=(m'-1)(R-\log_q(m'-1))$, the function~$g_1$ is computed in linear time by partitioning~$\boldx$ into~$m'-1$ segments of length~$R-\log_q(m'-1)$ each, and appending to each segment its index in base~$q$.
    It is readily verified that the resulting segments are distinct and that the encoding is easily invertible. 
\end{proposition}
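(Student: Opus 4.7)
The proposition has three things to verify: that the stated procedure correctly computes $g_1$, that the $m'-1$ output strings are pairwise distinct, and that the map is invertible (with everything in linear time). My plan is to check each in turn; the argument is essentially bookkeeping, and I do not anticipate a real obstacle.

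First, I would confirm that the partition is well defined. Since $\boldx\in\Sigma_q^k$ with $k=(m'-1)(R-\log_q(m'-1))$, it splits exactly into $m'-1$ consecutive blocks $\boldx_1',\ldots,\boldx_{m'-1}'$, each of length $R-\log_q(m'-1)$ over $\Sigma_q$. To each block I would append the length-$\log_q(m'-1)$ base-$q$ representation of its index $i\in\{1,\ldots,m'-1\}$, padded with leading zeros if necessary; denote this suffix $\sigma_q(i)\in\Sigma_q^{\log_q(m'-1)}$. Setting $\boldx_i=(\boldx_i',\sigma_q(i))$ yields a string in $\Sigma_q^R$, as required.

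Second, for distinctness I would observe that whenever $i\neq j$, the final $\log_q(m'-1)$ symbols of $\boldx_i$ and $\boldx_j$ are $\sigma_q(i)$ and $\sigma_q(j)$, which differ by injectivity of the positional base-$q$ representation. Hence $\boldx_i\neq\boldx_j$ regardless of the information payload carried in the prefixes. For invertibility, the inverse map simply reads the last $\log_q(m'-1)$ symbols of each received string to recover its index, strips the suffix, and concatenates the prefixes in order of increasing index to reassemble $\boldx$.

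Finally, for the complexity claim I would note that each of the operations in both the forward and inverse directions — splitting, writing an index, stripping a suffix, and concatenating — touches each symbol of $\boldx$ (or of the $\boldx_i$'s) a constant number of times, so the total running time is $O(k)$. The only mild subtlety worth flagging, but not belaboring, is the non-integrality of $\log_q(m'-1)$ when $m'-1$ is not a perfect power of $q$; this is handled by taking the ceiling and absorbing the at-most one extra symbol into the constants, which does not alter the construction or the argument.
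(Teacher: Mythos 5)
Your proof is correct and takes essentially the same approach the paper intends — the paper treats the proposition as self-evident by construction, and you supply exactly the routine bookkeeping (well-definedness of the partition, distinctness via distinct index suffixes, invertibility by stripping the suffix, and $O(k)$ time). Your closing remark about the non-integrality of $\log_q(m'-1)$ and taking a ceiling is a worthwhile clarification that the paper glosses over but does not change the argument.
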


Next, we introduce a function $g_2:\Sigma_q^{R}\to\Sigma_q^{d\times d}$, which sets the top-left, top-right, bottom-left and bottom-right corners of the output to~$1$, and the remaining~$R=d^2-4$ bits of the output contain the~$R$ bits of the input.
We then let~$\boldX_i\triangleq g_2(\boldx_i)\in\Sigma_q^{d\times d}$ for each~$i\in\{1,\ldots,m'-1\}$, and embed each~$\boldX_i$ into the $n\times n$ codeword as follows. 
We color the~$\frac{n}{d}\times \frac{n}{d}$ grid of units in~$m'$ colors using a coloring function $c:\llbracket n/d-1\rrbracket\times \llbracket n/d-1\rrbracket\to \llbracket m'-1\rrbracket$, defined as
\begin{align}\label{coloring function}
    c(i,j)=(j-f_{m'}(i \bmod m'))\bmod m'
\end{align}
for every~$(i,j)\in \llbracket n/d-1\rrbracket\times \llbracket n/d-1\rrbracket$, where~$f_{m'}$ is the bit-reversal function  (Definition~\ref{definition:Bit reversing function}), 
and place~$\boldX_i$ in all units colored by~$i$ for~$i\in\{1,\ldots,m'-1\}$,  while all entries of all units of color~$0$ are set to~$0$.
That is, the codeword~$C(\boldx)\in\Sigma_q^{n\times n}$ is such that the~$(i,j)$'th $d\times d$ submatrix equals~$\boldX_{c(i,j)}$, for every $(i,j)\in \llbracket n/d-1\rrbracket\times \llbracket n/d-1\rrbracket$.
An example is given in Fig.~\ref{fig:enter-label}.

Now let~$L_0,\ldots,L_{m'-1}\subseteq \llbracket n/d-1\rrbracket\times \llbracket n/d-1\rrbracket$ be the color sets induced by~\eqref{coloring function}, i.e.,~$L_r=\{(i,j)\vert c(i,j)=r\}$. We prove that each~$L_i$ is a union of VDC sets in Lemma~\ref{lemma:valid coloring as vdc set}; this fact will be required for decoding in Section~\ref{section:decoding for 2D}.
\begin{lemma}\label{lemma:valid coloring as vdc set}
    Every color set induced by the coloring function~\eqref{coloring function} is of the form~$T_i$ mentioned in~\eqref{equation:T_i}, i.e. $L_i=T_i$ for all $i\in \llbracket m'-1\rrbracket$.
\end{lemma}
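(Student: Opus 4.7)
The plan is to prove $L_i = T_i$ by unpacking both definitions into the same explicit parametric description of a set of lattice points in $\llbracket z-1\rrbracket^2$, where $z = n/d$. Throughout I assume $m' \mid z$, which follows from the divisibility assumptions $d\mid n$ and (implicitly) $m' \mid n/d$ built into the construction.

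First I would rewrite $L_i$. For any index $a \in \llbracket z-1\rrbracket$, decompose it in ``base $m'$'' as $a = q_a + x_a\,m'$ with $q_a = a \bmod m' \in \llbracket m'-1\rrbracket$ and $x_a = \lfloor a/m'\rfloor \in \llbracket z/m' -1\rrbracket$. Applying this to $(a,b)$, the coloring rule $c(a,b) = (b - f_{m'}(q_a)) \bmod m' = i$ is equivalent to $q_b \equiv f_{m'}(q_a) + i \pmod{m'}$, with no constraint on $x_a$ or $x_b$. Therefore
\[
L_i = \bigl\{\bigl(q + x m',\ ((f_{m'}(q)+i)\bmod m') + y m'\bigr)\ \bigl|\ q\in\llbracket m'-1\rrbracket,\ x,y\in\llbracket z/m' - 1\rrbracket\bigr\}.
\]

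Next I would unpack $T_i$ in the same form. By definition, $T$ is the union of the translates $C_{m'}^{(x,y)} = \{(q + xm',\,f_{m'}(q) + ym')\}$, so a point of $T$ is exactly $(q + xm',\, f_{m'}(q) + ym')$ for some $q,x,y$ in the ranges above. Applying the shift that defines $T_i$, points of $T_i$ take the form $(q + xm',\, (f_{m'}(q) + ym' + i) \bmod z)$.

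The key technical step is rewriting the mod $z$ in the second coordinate. Since $0 \le f_{m'}(q) + i \le 2m'-2 < 2m'$ and $m' \mid z$, split into two cases: if $f_{m'}(q)+i < m'$, then $(f_{m'}(q)+ym'+i)\bmod z = ((f_{m'}(q)+i)\bmod m') + ym'$ directly. If $f_{m'}(q)+i \ge m'$, write $f_{m'}(q)+i = m' + s$ with $s = (f_{m'}(q)+i)\bmod m'$; then the second coordinate becomes $((y+1)m' + s) \bmod z$, which as $y$ ranges over $\llbracket z/m'-1\rrbracket$ is just a cyclic permutation of $\{s + y'm' : y'\in \llbracket z/m'-1\rrbracket\}$ (because the map $y \mapsto (y+1)\bmod(z/m')$ is a bijection). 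In either case the set swept out by $(x,y)$ is identical to that defining $L_i$, so $T_i = L_i$.

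The only real obstacle is the wraparound case $f_{m'}(q)+i \ge m'$, which is why I would highlight the bijectivity of $y \mapsto (y+1)\bmod(z/m')$ explicitly; everything else is a direct comparison of the two parametric descriptions.
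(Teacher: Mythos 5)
Your proof is correct and reaches the same parametric description of $T_i$ (the paper's equation~\eqref{equation: updated Tk}) that the paper derives, then handles the $\bmod\ z$ wraparound in essentially the same way. The main organizational difference is that the paper proves $L_k = T_k$ via two separate inclusions --- $T_k \subseteq L_k$ by applying the coloring function $c$ to a generic element of $T_k$ and case-splitting on $y$ and on whether wraparound occurs, then $L_k \subseteq T_k$ by rewriting a generic $(i,j)\in L_k$ --- whereas you unpack both $L_i$ and $T_i$ into the same parametric form at the outset and observe that the wraparound amounts to the bijection $y \mapsto (y+1)\bmod(z/m')$ on the index set. This is a slightly cleaner packaging of the same underlying computation, and your explicit remark that $0 \le f_{m'}(q)+i \le 2m'-2 < 2m'$ (so only a single wraparound can occur) is the key arithmetic fact in both arguments. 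One small thing worth stating explicitly if you flesh this out: the equality of the second-coordinate sets is shown for each \emph{fixed} $(q,x)$ as $y$ ranges, and since the first coordinate $q+xm'$ is unaffected by the shift, this pointwise-in-$(q,x)$ equality lifts to equality of the full sets $L_i = T_i$; you gesture at this but it deserves one sentence.
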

\begin{proof}
    See Appendix~\ref{section:omittedProofs}, Lemma~\ref{app:proof-lemma:valid coloring as vdc set}.
\end{proof}

\begin{lemma}\label{lemma:only all zero square are color 0 units}
    The only $d\times d$ all-zero squares are the units of color~$0$.
\end{lemma}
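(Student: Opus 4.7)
The plan is to prove both inclusions: first, that every color-$0$ unit is indeed an all-zero $d\times d$ square (immediate from the zero-square identification rule), and second, the converse that any all-zero $d\times d$ square $W$ in the codeword must coincide with a color-$0$ unit. I would fix the top-left corner of $W$ at position $(r_0,c_0)$ and set $i_1=\lfloor r_0/d\rfloor$, $j_1=\lfloor c_0/d\rfloor$, then split into cases depending on whether $W$ is aligned to the unit grid.

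If $d\mid r_0$ and $d\mid c_0$, then $W$ coincides exactly with the unit at grid position $(i_1,j_1)$. If that unit had nonzero color, all four of its corner entries would be $1$ by construction, contradicting that $W$ is all-zero; hence the unit has color $0$, as desired. The remaining case is when at least one of $r_0,c_0$ is not a multiple of $d$, in which case $W$ straddles either two or four adjacent units, and the goal is to derive a contradiction by exhibiting a forced $1$ inside $W$. The key combinatorial fact is that horizontally adjacent units satisfy $c(i,j{+}1)\equiv c(i,j)-1\pmod{m'}$ (nonzero since $m'\ge 2$), and vertically adjacent units satisfy $c(i{+}1,j)-c(i,j)\equiv f_{m'}(i\bmod m')-f_{m'}((i{+}1)\bmod m')\pmod{m'}$, which is nonzero because $f_{m'}$ is a bijection. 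Consequently, in any $2\times 2$ block of units, at most two can be color $0$, and those two must lie on a diagonal.

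For the subcase in which exactly one coordinate is misaligned (say $d\mid r_0$ but $c_0>j_1 d$), $W$ overlaps the two horizontally adjacent units $(i_1,j_1)$ and $(i_1,j_1{+}1)$, which have distinct colors and hence at least one is nonzero-colored. I would verify that both the rightmost column of the left unit (column $(j_1{+}1)d-1$) and the leftmost column of the right unit (column $(j_1{+}1)d$) lie within the column range $[c_0,c_0+d-1]$ of $W$, using $j_1 d<c_0<(j_1{+}1)d$; the row range of $W$ then captures the two corner entries of the nonzero-colored unit along this shared boundary, forcing a $1$. The symmetric argument handles the column-aligned, row-misaligned case. For the subcase in which both coordinates are misaligned, $W$ overlaps four units forming a $2\times 2$ block and contains the meeting point $((i_1{+}1)d,(j_1{+}1)d)$ in its interior, so all four entries adjacent to that point—namely the bottom-right corner of the top-left unit, the bottom-left of the top-right, the top-right of the bottom-left, and the top-left of the bottom-right—lie inside $W$. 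By the diagonal-only obstruction above, at least two of these four units are nonzero-colored, so at least two of these interior corner entries are equal to $1$, again contradicting $W$ being all-zero.

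The main obstacle is not conceptual but bookkeeping: one must carefully check, using the inequalities $i_1 d\le r_0<(i_1{+}1)d$ together with the strict inequality $r_0>i_1 d$ (or its column counterpart) that misalignment provides, that the specified corner entries of the overlapping units actually lie within the index range of $W$ in each subcase. Once this verification is done uniformly across the row-aligned, column-aligned, and fully misaligned subcases, combining with the adjacency-forces-distinct-colors observation closes the argument.
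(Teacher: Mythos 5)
Your proof is correct and follows essentially the same approach as the paper's: split on whether the all-zero square is grid-aligned, and in the misaligned case use the fact that the coloring function makes adjacent units have distinct colors, so the square must overlap a non-$0$-colored unit whose boundary $1$'s then fall inside the square. The paper's version is terser — it treats the claim ``if the square intersects a non-zero-colored unit then it contains a $1$'' as self-evident, while you supply the bookkeeping showing why the relevant corner entries fall inside the column/row range of $W$; that extra detail is a genuine improvement in rigor. One small slip: for horizontal neighbors the coloring function gives $c(i,j{+}1)\equiv c(i,j)+1\pmod{m'}$, not $c(i,j)-1$, since $c(i,j)=(j-f_{m'}(i\bmod m'))\bmod m'$ and only the $j$ term changes; this sign error has no effect on your argument because all you use is that the difference is nonzero modulo $m'$.
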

\begin{proof}
    Recall that zero square identification encoding writes
(i) an all–zero square on every unit of color~$0$, and
(ii) on every unit of color other than $0$ it sets the corner entries of the unit to be~$1$.
We prove the statement by case analysis of different $d\times d$ square.
\begin{itemize}
    \item If the $d\times d$ square lies entirely within a non-$0$ unit, then it contains a $1$ at every corner entry.
    \item If the $d\times d$ square intersects multiple units, then as long as it intersects with a unit of color other than zero then it contains at least a $1$. 
    So, the only case which remains to exclude is that the $d\times d$ square only intersects units of color~$0$.
    With Definition~\ref{definition:Bit reversing function} and~\eqref{coloring function}, two adjacent units cannot share the same color as $f_{m'}$ is bijective, i.e. $c(i+1,j)\neq c(i,j)$ and $c(i,j+1)\neq c(i,j)$.
    Thus, any $d\times d$ square intersects at least a unit of color other than zero. \qedhere
\end{itemize}
\end{proof}

\subsection{Decoding}\label{section:decoding for 2D}
Given a legal fragment taken from~$C(\boldx)$ to decode, the decoder first identifies all units contained in the fragment by aligning them against~$C(\boldx)$.
Since~$C(\boldx)$ is unknown, the decoder identifies the boundaries between all units 
by finding 
a~$d\times d$ zero square in the fragment, i.e., the decoder identifies a unit of color~$0$.
It remains to prove there exists a unit of color~$0$ in any legal fragment, and that any~$d\times d$ rectangle of zeros must be a unit, which is done in Lemma~\ref{lemma:enough complete units} and Lemma~\ref{lemma s=m'} given shortly. 

Having identified the boundaries, the decoder places all units in a set~$\{\boldY_i\}_{i=1}^s$ of some size~$s$ (without repetitions). 
The decoder applies~$g_2^{-1}$ on each~$\boldY_i$, and later applies~$g_1^{-1}$  on the resulting set. 
That is, the output of the decoding algorithm is~$\hat{\boldx}=g_1^{-1}(\{g_2^{-1}(\boldY_i)\}_{i=1}^s)$ .
We now turn to prove that the decoding algorithm is well-defined, that the above alignment is possible, and that~$\hat{\boldx}=\boldx$.

To prove these statements, we show that the fragment contains all~$m'$ distinct units.
This is done in two steps: we first show that the fragment contains sufficiently many units (Lemma~\ref{lemma:enough complete units}) and then show that all~$m'$ \textit{distinct} units are among them (Lemma~\ref{lemma s=m'}).
\ifFULL\else
Both lemmas are proved in~\cite{combinedWebsites}.
\fi

\begin{lemma}\label{lemma:enough complete units}
  A legal fragment contains a grid of~$(x+1)\times (y+1)$ units for some positive integers~$x,y$ such that~$xy=m$.
\end{lemma}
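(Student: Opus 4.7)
The plan is to convert the geometric hypothesis into a worst-case count of complete units along each axis of the $a \times b$ rectangle, and then exhibit a valid factorization $xy = m$ via a short case split on $a$. Fix a legal fragment containing an axis-parallel $a \times b$ rectangle with $ab \ge M$ and $\min\{a,b\} \ge h$; without loss of generality I take $a \le b$. A direct alignment analysis on the offset $p \bmod d$ establishes that any interval of length $\ell$ contains at least $\lfloor (\ell+1)/d \rfloor - 1$ complete unit cells, regardless of where the interval begins. Setting $A = \lfloor (a+1)/d \rfloor - 1$ and $B = \lfloor (b+1)/d \rfloor - 1$, the rectangle therefore contains a contiguous $A \times B$ grid of complete units. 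Because $m$ is a power of $2$, every divisor pair of $m$ is a pair of powers of $2$, so the task reduces to exhibiting powers of $2$ $x, y \ge 1$ with $xy = m$, $x+1 \le A$, and $y+1 \le B$.

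From $a, b \ge h \ge 3d - 1$ one obtains $A, B \ge 2$. I split on the threshold $a \le 4d - 1$ versus $a \ge 4d$, which is precisely the boundary at which the area constraint alone forces $B$ to be large. In the thin regime $a \le 4d - 1$, the inequality $ab \ge M \ge (4d-1)((m+2)d - 1)$ yields $b \ge (m+2)d - 1$ and hence $B \ge m+1$, so the pair $(x,y) = (1, m)$ is admissible. In the thick regime $a \ge 4d$ one has $A \ge 3$; if additionally $A \ge m+1$ take $(x,y) = (m, 1)$, and otherwise $3 \le A \le m$, in which case I set $x$ to be the largest power of $2$ with $x \le A - 1$, forcing $2 \le x \le m/2$ and (since $2x$ is the next power of $2$) $2x \ge A$; then put $y = m/x$. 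The bound $x + 1 \le A$ is immediate from the choice.

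The substantive step is verifying $y + 1 \le B$ in the last subcase, equivalently $b \ge (y+2)d - 1$. Combining $a \le (A+2)d - 2 \le (2x+2)d - 2$ with the area bound and simplifying reduces the inequality to the polynomial-in-$d$ statement
\begin{equation*}
  2(x-1)(m-2x)\, d^2 \;+\; \bigl(2x^2 + m(2-x)\bigr)\, d \;-\; x \;\ge\; 0
\end{equation*}
for every $d \ge 1$ and every power of $2$ $x$ with $2 \le x \le m/2$. This is the main obstacle, because the coefficients change sign across the relevant range: the leading coefficient is non-negative but vanishes at $x = m/2$, the $d$-coefficient can become negative once $x$ is large relative to $m$, and the constant term is negative. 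My plan for handling it cleanly is the change of variables $u = x-1 \ge 1$, $v = m - 2x \ge 0$, under which the expression at $d = 1$ collapses to $uv + 3u + v + 3$, manifestly non-negative; and its derivative in $d$, namely $4uv\, d + 4(u+1) + v(1-u)$, is non-negative at $d=1$ (simplifying to $3uv + 4u + v + 4$) and non-decreasing thereafter. Together these imply the polynomial is non-negative on $d \ge 1$, which closes the final subcase and completes the proof.
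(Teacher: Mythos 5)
Your proof is correct and follows essentially the same strategy as the paper's, with tidier bookkeeping. Both arguments pick $x$ as the largest power of two that fits along the short side and then verify a polynomial inequality in $d$ to guarantee that $m/x + 1$ complete units fit along the long side; your choice ``largest power of two $\le A-1$'' with $A=\lfloor (a+1)/d\rfloor - 1$ is exactly the paper's index $j$ characterized by $(2^j+1)d+d-1\le a<(2^{j+1}+1)d+d-1$. The one place the two diverge is the proof of that inequality: the paper isolates it as a separate lemma and disposes of it by noting that $2^i + 2^{p+1-i}\le 2+2^p$ for $1\le i\le p$ (convexity of $i\mapsto 2^i+2^{p+1-i}$), whereas you prove the equivalent (in fact, very slightly slacker, since you bound $a\le(2x+2)d-2$) statement by the substitution $u=x-1$, $v=m-2x$ followed by a value-at-$d=1$ plus monotonicity-in-$d$ argument. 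Both verifications are valid; the paper's is the shorter calculation, while your floor-based count $A,B$ cleanly sidesteps the boundary bookkeeping that the paper's case split on raw pixel lengths of $a$ must juggle (e.g.\ the edge value $a=(m+1)d+d-1$ formally falls outside the paper's stated range $j\in\{1,\dots,p-1\}$).
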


\ifFULL
\begin{proof}
    We prove the statement by case analysis for all different side lengths of a legal fragment. 
    Suppose the fragment is of dimensions~$a\times b$ for some integers~$a,b$ such that~$ab\ge (4d-1)((m+1)d+d-1)$ (as assumed in Section~\ref{section:encoding for 2D}).
    \begin{itemize}
    \item If~$a<3d-1$ the constraint~$h\ge 3d-1$ is violated, and hence such fragments are not considered legal.
    \item If $3d-1\le a<4d-1$, we have $b \ge (m+1)d+d-1$. Hence, it is clear that the fragment contains a~$u\times v$ rectangle of units for~$(u,v)=(2d,(m+1)d)$.
    This pair~$(u,v)$ can be represented as~$u=(x+1)d$ and~$v=(y+1)d$ for~$(x,y)=(1,m)$, and indeed~$xy=m$.

    \item If~$ 4d-1\le a \le (m+1)d+d-1$, recall that $m=2^p$, and let~$j\in \{1,2, \ldots, p-1\}$ be an integer such that $(2^{j} + 1)d + d - 1\le a <(2^{j+1}+1)d + d - 1$. 
 
    In Lemma~\ref{Lemma: forInequlaity} (see Appendix~\ref{section:omittedProofs}) it is shown that
    \begin{align*}
        ((2^{i}+1) d  + d - 1)((2^{p+1-i}+1) d + d - 1) \le (4d-1)((2^p+1)d+d-1),
    \end{align*}
    for every integer~$2\le i\le p-1$. Hence, it follows that

    \begin{align*}
        b &\geq \frac{(4d - 1)((2^p+1) d + d - 1)}{(2^{j+1}+1)d + d - 1} \\
        &> \frac{((2^{j+1}+1) d + d - 1)((2^{p-j}+1) d + d - 1)}{(2^{j+1}+1) d + d - 1} \\
        &= (2^{p-j}+1) d + d - 1.
    \end{align*}
    Therefore, it is evident that the fragment contains a~$u\times v$ rectangle for~$(u,v)=((2^j+1)d,(2^{p-j}+1)d)$ and hence~$(x,y)=(2^j,2^{p-j})$, which satisfy $xy=m$.
    \item If $a> (m+1)d+d-1$, we further separate $b\le 4d-1$ and $b>4d-1$
    \begin{itemize}
        \item If $b\le 4d-1$,  it is clear that the fragment contains a~$u\times v$ rectangle of  units for~$(u,v)=((m+1)d,2d)$ and hence $xy=m$ for $(x,y)=(1,m)$.
        \item If $b>4d-1$, the fragment contains more than $4d-1$ columns, and we remove extra columns so that exactly $4d-1$ columns are left.
        The resulting rectangle is of dimensions $a\times b$ with $a>(m+1)d+d+1$ and $b=4d-1$ which is the condition for the previous case. \qedhere
    \end{itemize}
    \end{itemize}
\end{proof}
\fi

\begin{lemma}\label{lemma s=m'}
    In the above decoding process we have~$s=m'$.
\end{lemma}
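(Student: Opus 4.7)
The plan is to combine Lemma~\ref{lemma:enough complete units} with the VDC-based dispersion bounds of Section~\ref{section:VDC} to show that the complete-unit subgrid guaranteed inside the fragment must contain at least one unit of every color $0,1,\ldots,m'-1$, and then to verify that these $m'$ units are pairwise distinct.

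First, I would invoke Lemma~\ref{lemma:enough complete units} to obtain an $(x+1)\times(y+1)$ subgrid of complete units contained in the received fragment, with $xy=m=4m'$ and $x,y\geq 1$. Viewed as an integer axis-parallel rectangle inside $[0,n/d]\times[0,n/d]$ (the coordinate system of the $n/d\times n/d$ grid of units), its area satisfies
\[
(x+1)(y+1) \;=\; xy+x+y+1 \;\geq\; 4m'+2\sqrt{xy}+1 \;=\; 4m'+2\sqrt{m}+1 \;>\; 4m'.
\]
Next, by Lemma~\ref{lemma:valid coloring as vdc set}, each color set $L_i$ coincides with the shifted VDC tiling $T_i$ of~\eqref{equation:T_i}, instantiated with base VDC size $w=m'$ and ambient grid $z=n/d$. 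The discussion immediately following Lemma~\ref{union of vdc lemma} gives $A_{n/d}(T_i)<4m'$ for every $i\in\llbracket m'-1\rrbracket$, so any integer axis-parallel rectangle of area strictly greater than $4m'$ must contain a point of $T_i$. Applied to the subgrid above, this shows the fragment contains at least one complete unit of each of the $m'$ colors.

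It then remains to verify that distinct colors give distinct units, so that the deduplicated collection $\{\boldY_i\}_{i=1}^{s}$ has size exactly $s=m'$. For $i\neq 0$, the unit equals $\boldX_i=g_2(\boldx_i)$, where the $\boldx_i$'s are pairwise distinct by construction of $g_1$ (Proposition~\ref{proposition:g1}), so the $\boldX_i$'s are distinct; each $\boldX_i$ also carries a $1$ in every corner by definition of $g_2$, so it differs from the all-zero unit assigned to color $0$. The only real obstacle I anticipate is ensuring the strict area inequality $(x+1)(y+1)>4m'$ holds uniformly across all pairs supplied by Lemma~\ref{lemma:enough complete units}; the display above settles this in one line using AM--GM and $x,y\geq 1$, so no casework beyond Lemma~\ref{lemma:enough complete units} is required.
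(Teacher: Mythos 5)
Your proof takes the same route as the paper's: invoke Lemma~\ref{lemma:enough complete units}, contract the $(x+1)\times(y+1)$ grid of units to lattice points, apply Lemma~\ref{lemma:valid coloring as vdc set} and the VDC bound from Lemma~\ref{union of vdc lemma} to show that every color appears, and conclude. Your explicit verification that the $m'$ unit contents are pairwise distinct (using the distinctness from $g_1$ and the fixed corner $1$'s from $g_2$) is a welcome addition that the paper leaves implicit. One inaccuracy, though: the contracted grid consists of $(x+1)(y+1)$ unit-spaced lattice points, and their bounding integer rectangle $[i_0,i_0+x]\times[j_0,j_0+y]$ has side lengths $x$ and $y$, hence area $xy=m=4m'$, not $(x+1)(y+1)$. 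The quantity $(x+1)(y+1)$ counts lattice points rather than measuring the rectangle's area, and it is not the quantity that Lemma~\ref{union of vdc lemma} bounds. The paper instead closes the argument by using the strictness of $A_{n/d}(T_a)<4m'$: any empty rectangle has area strictly below $4m'$, so the rectangle of area exactly $xy=4m'$ must already contain a point of every $T_a$. Consequently your AM--GM step is solving a threshold problem that does not actually arise; the conclusion survives once the area is corrected to $xy$, but as written the key display asserts the wrong area.
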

\ifFULL
\begin{proof}
We show that the rectangle whose existence is guaranteed by Lemma~\ref{lemma:enough complete units} contains all~$m'$ distinct  units.
Let~$P$ be the $(x+1)(y+1)$ grid of  units guaranteed by Lemma~\ref{lemma:enough complete units}. Define a $(x+1)(y+1)$ grid by contracting the~$(i,j)$'the  unit to the point $(i,j)$. 
With this contracting process, the area of the contracted grid $P$ is exactly~$xy=m$ (see example in Fig. \ref{fig:enter-label}).

It suffices to show that the contracted grid contains all colors given by the coloring function in~\eqref{coloring function}. 
To this end, recall that~$L_0,\ldots,L_{m'-1}\subseteq \llbracket n/d-1\rrbracket\times \llbracket n/d-1\rrbracket$ are the color sets induced by~\eqref{coloring function} from Lemma~\ref{lemma:valid coloring as vdc set}. 
Suppose there exists some color $a$ which is not in the rectangle in Lemma~\ref{lemma:enough complete units}. 
Lemma~\ref{lemma:valid coloring as vdc set} shows that $L_a=T_a$, where~$T_a$ is a shifted VDC set. 
Thus, Lemma~\ref{union of vdc lemma} implies that the area of any axis-parallel rectangle which does not contain color~$a$ is strictly smaller than~$4m'$. 
Since the area of $P$ is exactly $m=4m'$, it follows that it contains the color $a$, a contradiction.
\end{proof}
\fi

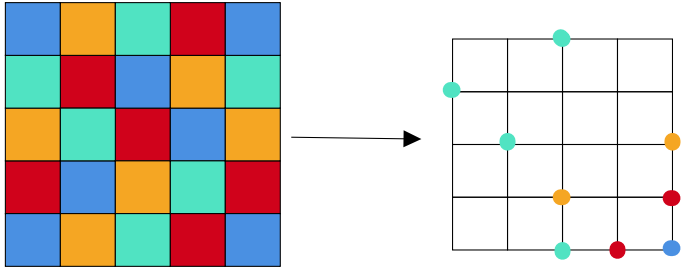
\begin{figure}
    \centering    \begin{tikzpicture}[x=0.75pt,y=0.75pt,yscale=-1,xscale=1]

\draw  [fill={rgb, 255:red, 208; green, 2; blue, 27 }  ,fill opacity=1 ] (99,216.94) -- (126.45,216.94) -- (126.45,243.47) -- (99,243.47) -- cycle ;
\draw  [fill={rgb, 255:red, 245; green, 166; blue, 35 }  ,fill opacity=1 ] (99,190.41) -- (126.45,190.41) -- (126.45,216.94) -- (99,216.94) -- cycle ;
\draw  [fill={rgb, 255:red, 74; green, 144; blue, 226 }  ,fill opacity=1 ] (126.45,216.94) -- (153.89,216.94) -- (153.89,243.47) -- (126.45,243.47) -- cycle ;
\draw  [fill={rgb, 255:red, 208; green, 2; blue, 27 }  ,fill opacity=1 ] (126.45,163.88) -- (153.89,163.88) -- (153.89,190.41) -- (126.45,190.41) -- cycle ;
\draw  [fill={rgb, 255:red, 208; green, 2; blue, 27 }  ,fill opacity=1 ] (153.89,190.41) -- (181.34,190.41) -- (181.34,216.94) -- (153.89,216.94) -- cycle ;
\draw  [fill={rgb, 255:red, 74; green, 144; blue, 226 }  ,fill opacity=1 ] (181.34,190.41) -- (208.78,190.41) -- (208.78,216.94) -- (181.34,216.94) -- cycle ;
\draw  [fill={rgb, 255:red, 74; green, 144; blue, 226 }  ,fill opacity=1 ] (153.89,163.88) -- (181.34,163.88) -- (181.34,190.41) -- (153.89,190.41) -- cycle ;
\draw  [fill={rgb, 255:red, 80; green, 227; blue, 194 }  ,fill opacity=1 ] (126.45,190.41) -- (153.89,190.41) -- (153.89,216.94) -- (126.45,216.94) -- cycle ;
\draw  [fill={rgb, 255:red, 245; green, 166; blue, 35 }  ,fill opacity=1 ] (153.89,216.94) -- (181.34,216.94) -- (181.34,243.47) -- (153.89,243.47) -- cycle ;
\draw  [fill={rgb, 255:red, 245; green, 166; blue, 35 }  ,fill opacity=1 ] (181.34,163.88) -- (208.78,163.88) -- (208.78,190.41) -- (181.34,190.41) -- cycle ;
\draw  [fill={rgb, 255:red, 80; green, 227; blue, 194 }  ,fill opacity=1 ] (181.34,216.94) -- (208.78,216.94) -- (208.78,243.47) -- (181.34,243.47) -- cycle ;
\draw  [fill={rgb, 255:red, 80; green, 227; blue, 194 }  ,fill opacity=1 ] (99,163.88) -- (126.45,163.88) -- (126.45,190.41) -- (99,190.41) -- cycle ;
\draw  [fill={rgb, 255:red, 208; green, 2; blue, 27 }  ,fill opacity=1 ] (181.34,137.35) -- (208.78,137.35) -- (208.78,163.88) -- (181.34,163.88) -- cycle ;
\draw  [fill={rgb, 255:red, 74; green, 144; blue, 226 }  ,fill opacity=1 ] (99,137.35) -- (126.45,137.35) -- (126.45,163.88) -- (99,163.88) -- cycle ;
\draw  [fill={rgb, 255:red, 245; green, 166; blue, 35 }  ,fill opacity=1 ] (126.45,137.35) -- (153.89,137.35) -- (153.89,163.88) -- (126.45,163.88) -- cycle ;
\draw  [fill={rgb, 255:red, 80; green, 227; blue, 194 }  ,fill opacity=1 ] (153.89,137.35) -- (181.34,137.35) -- (181.34,163.88) -- (153.89,163.88) -- cycle ;
\draw  [fill={rgb, 255:red, 208; green, 2; blue, 27 }  ,fill opacity=1 ] (208.78,216.94) -- (236.23,216.94) -- (236.23,243.47) -- (208.78,243.47) -- cycle ;
\draw  [fill={rgb, 255:red, 245; green, 166; blue, 35 }  ,fill opacity=1 ] (208.78,190.41) -- (236.23,190.41) -- (236.23,216.94) -- (208.78,216.94) -- cycle ;
\draw  [fill={rgb, 255:red, 80; green, 227; blue, 194 }  ,fill opacity=1 ] (208.78,163.88) -- (236.23,163.88) -- (236.23,190.41) -- (208.78,190.41) -- cycle ;
\draw  [fill={rgb, 255:red, 74; green, 144; blue, 226 }  ,fill opacity=1 ] (208.78,137.35) -- (236.23,137.35) -- (236.23,163.88) -- (208.78,163.88) -- cycle ;
\draw  [fill={rgb, 255:red, 74; green, 144; blue, 226 }  ,fill opacity=1 ] (99,243.47) -- (126.45,243.47) -- (126.45,270) -- (99,270) -- cycle ;
\draw  [fill={rgb, 255:red, 245; green, 166; blue, 35 }  ,fill opacity=1 ] (126.45,243.47) -- (153.89,243.47) -- (153.89,270) -- (126.45,270) -- cycle ;
\draw  [fill={rgb, 255:red, 80; green, 227; blue, 194 }  ,fill opacity=1 ] (153.89,243.47) -- (181.34,243.47) -- (181.34,270) -- (153.89,270) -- cycle ;
\draw  [fill={rgb, 255:red, 208; green, 2; blue, 27 }  ,fill opacity=1 ] (181.34,243.47) -- (208.78,243.47) -- (208.78,270) -- (181.34,270) -- cycle ;
\draw  [fill={rgb, 255:red, 74; green, 144; blue, 226 }  ,fill opacity=1 ] (208.78,243.47) -- (236.23,243.47) -- (236.23,270) -- (208.78,270) -- cycle ;
\draw    (241.71,205.05) -- (303.67,205.92) ;
\draw [shift={(306.67,205.96)}, rotate = 180.81] [fill={rgb, 255:red, 0; green, 0; blue, 0 }  ][line width=0.08]  [draw opacity=0] (8.93,-4.29) -- (0,0) -- (8.93,4.29) -- cycle    ;
\draw   (322.22,155.65) -- (349.66,155.65) -- (349.66,182.18) -- (322.22,182.18) -- cycle ;
\draw   (377.11,155.65) -- (404.55,155.65) -- (404.55,182.18) -- (377.11,182.18) -- cycle ;
\draw   (322.22,182.18) -- (349.66,182.18) -- (349.66,208.71) -- (322.22,208.71) -- cycle ;
\draw   (349.66,155.65) -- (377.11,155.65) -- (377.11,182.18) -- (349.66,182.18) -- cycle ;
\draw   (349.66,182.18) -- (377.11,182.18) -- (377.11,208.71) -- (349.66,208.71) -- cycle ;
\draw   (377.11,182.18) -- (404.55,182.18) -- (404.55,208.71) -- (377.11,208.71) -- cycle ;
\draw   (404.55,155.65) -- (432,155.65) -- (432,182.18) -- (404.55,182.18) -- cycle ;
\draw   (322.22,208.71) -- (349.66,208.71) -- (349.66,235.24) -- (322.22,235.24) -- cycle ;
\draw   (322.22,235.24) -- (349.66,235.24) -- (349.66,261.77) -- (322.22,261.77) -- cycle ;
\draw   (349.66,208.71) -- (377.11,208.71) -- (377.11,235.24) -- (349.66,235.24) -- cycle ;
\draw   (349.66,235.24) -- (377.11,235.24) -- (377.11,261.77) -- (349.66,261.77) -- cycle ;
\draw   (377.11,208.71) -- (404.55,208.71) -- (404.55,235.24) -- (377.11,235.24) -- cycle ;
\draw   (377.11,235.24) -- (404.55,235.24) -- (404.55,261.77) -- (377.11,261.77) -- cycle ;
\draw   (404.55,182.18) -- (432,182.18) -- (432,208.71) -- (404.55,208.71) -- cycle ;
\draw   (404.55,208.71) -- (432,208.71) -- (432,235.24) -- (404.55,235.24) -- cycle ;
\draw   (404.55,235.24) -- (432,235.24) -- (432,261.77) -- (404.55,261.77) -- cycle ;
\draw  [color={rgb, 255:red, 74; green, 144; blue, 226 }  ,draw opacity=1 ][line width=6] [line join = round][line cap = round] (322.22,155.65) .. controls (322.22,155.65) and (322.22,155.65) .. (322.22,155.65) ;
\draw  [color={rgb, 255:red, 74; green, 144; blue, 226 }  ,draw opacity=1 ][line width=6] [line join = round][line cap = round] (432,154.73) .. controls (432,154.73) and (432,154.73) .. (432,154.73) ;
\draw  [color={rgb, 255:red, 74; green, 144; blue, 226 }  ,draw opacity=1 ][line width=6] [line join = round][line cap = round] (377.11,181.26) .. controls (377.11,181.26) and (377.11,181.26) .. (377.11,181.26) ;
\draw  [color={rgb, 255:red, 74; green, 144; blue, 226 }  ,draw opacity=1 ][line width=6] [line join = round][line cap = round] (405.47,207.79) .. controls (405.47,207.79) and (405.47,207.79) .. (405.47,207.79) ;
\draw  [color={rgb, 255:red, 74; green, 144; blue, 226 }  ,draw opacity=1 ][line width=6] [line join = round][line cap = round] (348.75,234.32) .. controls (348.75,234.32) and (348.75,234.32) .. (348.75,234.32) ;
\draw  [color={rgb, 255:red, 74; green, 144; blue, 226 }  ,draw opacity=1 ][line width=6] [line join = round][line cap = round] (322.22,260.85) .. controls (322.22,260.85) and (322.22,260.85) .. (322.22,260.85) ;
\draw  [color={rgb, 255:red, 74; green, 144; blue, 226 }  ,draw opacity=1 ][line width=6] [line join = round][line cap = round] (432,260.85) .. controls (431.7,260.85) and (431.39,260.85) .. (431.09,260.85) ;
\draw  [color={rgb, 255:red, 245; green, 166; blue, 35 }  ,draw opacity=1 ][line width=6] [line join = round][line cap = round] (349.66,155.65) .. controls (349.66,155.65) and (349.66,155.65) .. (349.66,155.65) ;
\draw  [color={rgb, 255:red, 245; green, 166; blue, 35 }  ,draw opacity=1 ][line width=6] [line join = round][line cap = round] (404.55,182.18) .. controls (404.55,182.18) and (404.55,182.18) .. (404.55,182.18) ;
\draw  [color={rgb, 255:red, 245; green, 166; blue, 35 }  ,draw opacity=1 ][line width=6] [line join = round][line cap = round] (432,207.79) .. controls (432,207.49) and (432,207.18) .. (432,206.88) ;
\draw  [color={rgb, 255:red, 245; green, 166; blue, 35 }  ,draw opacity=1 ][line width=6] [line join = round][line cap = round] (377.11,235.24) .. controls (376.8,235.24) and (376.5,235.24) .. (376.2,235.24) ;
\draw  [color={rgb, 255:red, 245; green, 166; blue, 35 }  ,draw opacity=1 ][line width=6] [line join = round][line cap = round] (348.75,260.85) .. controls (348.75,260.85) and (348.75,260.85) .. (348.75,260.85) ;
\draw  [color={rgb, 255:red, 80; green, 227; blue, 194 }  ,draw opacity=1 ][line width=6] [line join = round][line cap = round] (377.11,155.65) .. controls (376.68,155.65) and (376.2,155.16) .. (376.2,154.73) ;
\draw  [color={rgb, 255:red, 80; green, 227; blue, 194 }  ,draw opacity=1 ][line width=6] [line join = round][line cap = round] (432,181.26) .. controls (432,181.26) and (432,181.26) .. (432,181.26) ;
\draw  [color={rgb, 255:red, 80; green, 227; blue, 194 }  ,draw opacity=1 ][line width=6] [line join = round][line cap = round] (322.22,181.26) .. controls (321.91,181.26) and (321.61,181.26) .. (321.3,181.26) ;
\draw  [color={rgb, 255:red, 80; green, 227; blue, 194 }  ,draw opacity=1 ][line width=6] [line join = round][line cap = round] (349.66,207.79) .. controls (349.66,207.49) and (349.66,207.18) .. (349.66,206.88) ;
\draw  [color={rgb, 255:red, 80; green, 227; blue, 194 }  ,draw opacity=1 ][line width=6] [line join = round][line cap = round] (404.55,235.24) .. controls (404.55,235.24) and (404.55,235.24) .. (404.55,235.24) ;
\draw  [color={rgb, 255:red, 80; green, 227; blue, 194 }  ,draw opacity=1 ][line width=6] [line join = round][line cap = round] (377.11,261.77) .. controls (377.11,262.07) and (377.11,262.38) .. (377.11,262.68) ;
\draw  [color={rgb, 255:red, 208; green, 2; blue, 27 }  ,draw opacity=1 ][line width=6] [line join = round][line cap = round] (404.55,155.19) .. controls (404.55,155.19) and (404.55,155.19) .. (404.55,155.19) ;
\draw  [color={rgb, 255:red, 208; green, 2; blue, 27 }  ,draw opacity=1 ][line width=6] [line join = round][line cap = round] (349.66,180.8) .. controls (349.66,180.8) and (349.66,180.8) .. (349.66,180.8) ;
\draw  [color={rgb, 255:red, 208; green, 2; blue, 27 }  ,draw opacity=1 ][line width=6] [line join = round][line cap = round] (322.22,234.78) .. controls (322.22,234.78) and (322.22,234.78) .. (322.22,234.78) ;
\draw  [color={rgb, 255:red, 208; green, 2; blue, 27 }  ,draw opacity=1 ][line width=6] [line join = round][line cap = round] (431.09,235.69) .. controls (431.39,235.69) and (431.7,235.69) .. (432,235.69) ;
\draw  [color={rgb, 255:red, 208; green, 2; blue, 27 }  ,draw opacity=1 ][line width=6] [line join = round][line cap = round] (404.55,261.31) .. controls (404.55,261.61) and (404.55,261.92) .. (404.55,262.22) ;
\draw  [color={rgb, 255:red, 245; green, 166; blue, 35 }  ,draw opacity=1 ][line width=6] [line join = round][line cap = round] (322.22,208.25) .. controls (322.22,208.25) and (322.22,208.25) .. (322.22,208.25) ;

\draw  [color={rgb, 255:red, 208; green, 2; blue, 27 }  ,draw opacity=1 ][line width=6] [line join = round][line cap = round] (377,209) .. controls (377,209) and (377,209) .. (377,209) ;

\end{tikzpicture}
    \caption{Units in an~$n\times n$ codeword (left) are associated with a colored grid (right), in which every color set is a VDC set.
    }
    \label{fig:enter-label}
\end{figure}
\tikzset{every picture/.style={line width=0.75pt}} 
Lemma~\ref{lemma s=m'} shows that the given legal fragment contains all colors, and hence all information.
In particular, legal fragments also contain a unit of color~$0$, which is necessary for alignment in zero square identification encoding.
Therefore, we reverse the encoding operation for the purpose of decoding as shown in the next theorem which conclude the decoding operation.
Resulting rates relative to the theoretical optimum are discussed in Section~\ref{Bound and rates}.
\ifFULL\else
A proof is given in~\cite{combinedWebsites}.
\fi
\begin{proposition}\label{lemma:g_2^-1}
    The functions $g_2^{-1},g_1^{-1}$ can be computed in polynomial time:
    Given a unit $\boldY_i\in\Sigma_q^{d\times d}$, define $g_2^{-1}(\boldY_i)$ as removing the top-left, top-right, bottom-left and bottom-right corners of $1$.
    Finally, applying~$g_1^{-1}$  amounts to removing the first~$\log_q(m'-1)$ indexing bits.
\end{proposition}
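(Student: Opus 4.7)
The plan is to verify the proposition by unwinding the definitions of $g_2$ and $g_1$ given in Section~\ref{section:encoding for 2D} and Proposition~\ref{proposition:g1}, and checking that the stated inverses recover the original inputs correctly in polynomial time.

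First I would treat $g_2^{-1}$. Recall that $g_2 : \Sigma_q^R \to \Sigma_q^{d \times d}$ places a fixed symbol~$1$ in the four corner positions and writes the $R = d^2 - 4$ input symbols into the remaining entries in some fixed order. Since the corner positions are independent of the input, the map $g_2$ is injective onto its image (the set of $d \times d$ matrices with~$1$ in each corner), and its inverse is obtained by deleting these four fixed coordinates and reading the remaining $d^2-4$ symbols in the same order. This is an $O(d^2)$ operation per unit, so applying it to each of the $m'-1$ recovered units $\boldY_1,\ldots,\boldY_{m'-1}$ takes polynomial time.

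Next I would handle $g_1^{-1}$. By construction, $g_1$ cuts~$\boldx\in\Sigma_q^k$ into $m'-1$ segments of length $R - \log_q(m'-1)$ and appends to each segment its index $i \in \{1,\ldots,m'-1\}$ written in base~$q$; the resulting strings are pairwise distinct precisely because the appended indices are distinct. Given the unordered set $\{g_2^{-1}(\boldY_i)\}_{i=1}^{m'-1}$ returned from the previous step, the decoder reads the trailing $\log_q(m'-1)$ symbols of each string to recover its index, sorts the strings by this index, strips the index suffix from each, and concatenates the results. This reconstructs $\boldx$ exactly, and runs in time polynomial in $m'$ and $R$.

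The only subtlety, and really the main thing to verify, is that every piece of information needed for this inversion is in fact present in the fragment: we must know that all $m'-1$ non-zero-colored units appear (each exactly the distinct string prescribed by $g_1$), so that the unordered collection received can be correctly re-sorted by the appended indices. But this is exactly the content of Lemma~\ref{lemma s=m'}, which guarantees $s = m'$, together with the observation (from the encoding in Section~\ref{section:encoding for 2D}) that units of the same color contain identical data while units of different nonzero colors carry distinct index-tagged payloads. Given Lemma~\ref{lemma s=m'}, correctness of $g_1^{-1}\circ g_2^{-1}$ follows immediately, and the whole decoding pipeline runs in time polynomial in~$n$.
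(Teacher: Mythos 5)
Your proposal is correct and takes the same (essentially definitional) route the paper intends: the proposition has no separate proof block because inverting $g_2$ and $g_1$ amounts to reading off the construction, exactly as you describe, and you rightly identify that the real content guaranteeing correctness lives in Lemma~\ref{lemma s=m'}, which the paper proves separately.

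One small remark worth flagging: the proposition's wording says to remove the ``first'' $\log_q(m'-1)$ indexing symbols, while Proposition~\ref{proposition:g1} defines $g_1$ by \emph{appending} the index to each segment, which would place the index at the end; your use of ``trailing'' is consistent with the definition of $g_1$ and is the reading that makes the inversion go through, so you have in effect silently corrected a minor inconsistency in the paper's phrasing.
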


\subsection{Bounds and Rates}\label{Bound and rates}
Recall that the rate of a forensic code is~$k/M$.
In what follows we compute the asymptotic rate of the forensic code in Section~\ref{section:encoding for 2D}, and later show that in a different parameter regime there exist code of rate~$1$ (asymptotically). 
\begin{theorem}\label{theorem:comparison of both methods} 
    In a parameter regime where $M\rightarrow \infty$, $h=o(\sqrt{M})$, and $h = \omega(\sqrt{\log_q M})$,
    the rate of the~$(q,N,M,h)$-forensic code in Section~\ref{section:encoding for 2D} approaches~$\frac{1}{32}$.
\end{theorem}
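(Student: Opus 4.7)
The plan is to express the key parameters $d$, $m$, $m'$, and $k$ in terms of $M$ and $h$, and then take limits under the two asymptotic hypotheses. First, since $d$ is the largest integer with $3d-1\le h$, we have $d\sim h/3$, hence $d^2=\Theta(h^2)$; the hypothesis $h=o(\sqrt{M})$ converts to $d^2=o(M)$, which forces $m\to\infty$ and justifies the leading-order asymptotic analysis to follow. I would then expand $(4d-1)((m+1)d+d-1)=4(m+2)d^2-O(md)$ and use the maximality of $m$ among powers of two satisfying $M\ge(4d-1)((m+1)d+d-1)$: this yields $m\le M/(4d^2)\cdot(1+o(1))$, and since the next power of two $2m$ violates the constraint, in the worst case (where $M$ just falls short of the next threshold) $m\sim M/(8d^2)$. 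Consequently $m'=m/4\sim M/(32d^2)$ in the worst case.

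For the information content per unit, I would invoke $h=\omega(\sqrt{\log_q M})$, which gives $d^2=\omega(\log_q M)$. Since $m'\le M$, this implies $\log_q(m'-1)=O(\log_q M)=o(d^2)$, so the per-color payload satisfies $R-\log_q(m'-1)=d^2-4-\log_q(m'-1)\sim d^2$. In other words, both the four corner markers of $g_2$ and the indexing bits of $g_1$ (Proposition~\ref{proposition:g1}) contribute only vanishing overhead.

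Putting these pieces together, $k=(m'-1)(R-\log_q(m'-1))\sim m'\cdot d^2$, hence
\begin{equation*}
\frac{k}{M}\sim\frac{m'\,d^2}{M}\sim\frac{(M/(32d^2))\cdot d^2}{M}=\frac{1}{32}
\end{equation*}
in the worst case. The main obstacle will be the careful bookkeeping of lower-order terms and the correct identification of the factor-of-two loss caused by the power-of-two rounding of $m$, which is precisely what separates the guaranteed rate $1/32$ from the ``unconstrained'' rate $1/16$; the two asymptotic hypotheses on $h$ are each used exactly once, one to send $m\to\infty$ (via $h=o(\sqrt{M})$) and the other to annihilate the indexing overhead (via $h=\omega(\sqrt{\log_q M})$).
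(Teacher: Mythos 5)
Your proposal is correct and follows essentially the same route as the paper's own proof: both derive $d=\Theta(h)$, sandwich $m$ between $M/(8d^2)$ and $M/(4d^2)$ (up to lower-order terms) using the power-of-two maximality of $m$, invoke $h=o(\sqrt{M})$ to force $m\to\infty$ and $h=\omega(\sqrt{\log_q M})$ to make $\log_q(m')/d^2\to 0$, and conclude $k/M\sim m'd^2/M\sim 1/32$ in the worst case. The only cosmetic difference is that the paper carries explicit two-sided inequalities and a fully expanded rational expression before taking the limit, whereas you phrase the same bookkeeping in $\sim$-asymptotics.
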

\begin{proof}
    Recall that for any~$N=n^2$, any~$M\le N$, and any~$h$, we let~$d$ be the largest integer such that~$h\ge 3d-1$, and let~$m$ be the largest power of two such that~$M\ge (4d-1)((m+1)d+d-1)$. 
    Therefore, 
    \begin{equation}\label{equation: bound for M}
        (4d-1)((m+1)d+d-1)\le M\le (4d-1)((2m+1)d+d-1)
    \end{equation}
    and hence
    \begin{align}\label{equation:mUpperLower}
        \frac{1}{2d}\left(\frac{M}{4d-1}+1\right)-1\le m \le \frac{1}{d}\left(\frac{M}{4d-1}+1\right)-1.
    \end{align}
    Since~$d=\Theta(h)$, and since $h=o(\sqrt{M})$, it follows that
    \begin{align}\label{equation:Mlimit}
        \lim_{M\rightarrow\infty}\frac{1} {2d}\left(\frac{M}{4d-1}+1\right)= \infty.
    \end{align}
    Therefore, it follows from~\eqref{equation:mUpperLower} and from~\eqref{equation:Mlimit} that in this parameter regime we have $m\xrightarrow[]{}\infty$.
    Further, since~$d=\Theta(h)$, since $h = \omega(\sqrt{\log_q M})$, and since~$M\ge m$, it follows that $$0\le \lim_{M\rightarrow\infty}\frac{\log_q(m)}{d^2}\le \lim_{M\to \infty}\frac{\log_q(M)}{d^2}= 0,$$ 
    therefore $\lim_{M\rightarrow\infty}\frac{\log_q(m)}{d^2}=0$.

    Now, recall that zero square identification encoding requires input of length~$k=(m'-1)(d^2-2-\log_q(m'-1))$ (since it  divides the input into~$m'-1$ parts of length~$d^2-2-\log(m'-1)$ each, where~$m'=\frac{m}{4}$). 
    Therefore, the resulting rate is
    $$\frac{(m/4-1)(d^2-4-\log_q(m/4-1))}{M}.$$
    With~\eqref{equation: bound for M}, the rate is at least 
    \begin{align}\label{equation:rateBoundTheorem}
        \frac{(m/4-1)(d^2-4-\log_q(m/4-1))}{(4d-1)(2md+2d-1)}
    \end{align}
    which equal
    $$\frac{1-4/d^2-4/m-\log_q(m/4-1)/d^2+16/(md^2)+4\log_q(m/4-1)/(md^2)}{32+32/m+4/(md^2)-8/d-24/md}.$$
    Thus, the rate is at least $$\frac{1-4/d^2-4/m-\log_q(m/4-1)/d^2}{32+32/m+4/(md^2)},$$
    and since $m,d\overset{M\to\infty}{\to}\infty$ and $\log_q(m)/d^2\xrightarrow[]{}0$, it follows that the rate approaches $\frac{1}{32}$.
\end{proof}
\begin{table}[h]
    \centering
    \caption{Example values of $M,h,d,m$ and the resulting code rate (converging toward $1/32$).}\label{table:someNumbers}
    \begin{tabular}{r r r r r}
        \toprule
        $M$ & $h$ & $d$ & $m$ & \textbf{rate} \\
        \midrule
        1\,024      & 14  & 5   & 8   & 0.022461 \\
        2\,048      & 14  & 5   & 16  & 0.031370 \\
        8\,192      & 26  & 9   & 16  & 0.028350 \\
        16\,384     & 11  & 4   & 256 & 0.030849 \\
        65\,536     & 14  & 5   & 512 & 0.031028 \\
        262\,144    & 155 & 52  & 16  & 0.030904 \\
        1\,048\,576 & 68  & 23  & 256 & 0.031304 \\
        4\,194\,304 & 626 & 209 & 16  & 0.031241 \\
        \bottomrule
    \end{tabular}
\end{table}

\begin{remark}
   When~$M$ and~$h$ are such that $M=(4d-1)((m+1)d+d-1)$ and~$h=3d-1$ for some integers~$d$ and~$m$, a proof similar to that of Theorem~\ref{theorem:comparison of both methods} shows that the rate of the~$(q,N,M,h)$-forensic code in Section~\ref{section:encoding for 2D} approaches~$\frac{1}{16}$; this follows since~$M$ would be equal to its lower bound in~\eqref{equation: bound for M} and since~$m$ would be equal to its upper bound in~\eqref{equation:mUpperLower}, and hence a more accurate bound on the rate would be achieved in~\eqref{equation:rateBoundTheorem}.
   Table~\ref{table:someNumbers for 1/16} presents some examples.
\end{remark}
\begin{table}[h]
\centering
\caption{Example values of $M,h,d,m$, and the resulting code rate (converging toward $1/16$).}\label{table:someNumbers for 1/16}
\begin{tabular}{r r r r r}
\toprule
$M$ & $h$ & $d$ & $m$ & \textbf{rate} \\
\midrule
100{,}045      & 41  & 14  & 128    & 0.057958 \\
1{,}000{,}825  & 131 & 44  & 128    & 0.059689 \\
9{,}973{,}111  & 104 & 35  & 2{,}048  & 0.062100 \\
99{,}053{,}215 & 116 & 39  & 16{,}384 & 0.062219 \\
971{,}469{,}531& 182 & 61  & 65{,}536 & 0.062448 \\
\bottomrule
\end{tabular}
\end{table}

Recall that~$h$ is an artificial restriction which bounds the minimum side length of the axis-parallel rectangle of area~$M$ within the fragment. 
As such, one would like to have~$h$ as small as possible, while simultaneously enabling high rates.
Theorem~\ref{theorem:comparison of both methods} shows that a rather modest restriction on~$h$, i.e., having~$h$ asymptotically larger than~$\sqrt{\log_qM}$, suffices for non-vanishing rates.
Example parameters are given in Table~\ref{table:someNumbers}, showing rapid convergence to~$1/32=0.03125$.

We now turn to show that better codes exist in some narrower parameter regimes. 
Observe that in the forensic code of Section~\ref{section:encoding for 2D}, $n$ can grow arbitrarily large without increasing~$M$ as a result. 
For this parameter regime, it remains open if better constructions exist.
However, by restricting the growth on~$n$ with respect to~$M$, it is possible to show that forensic codes of asymptotic rate~$1$ exist.

\begin{corollary}\label{observation: existence bounds}
    If~$\frac{\log(n)}{M}\overset{n\to\infty}{\longrightarrow}0$ then there exists a~$(q,n,M,h)$-forensic code of rate~$1$.
\end{corollary}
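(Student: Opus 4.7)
My approach is a Gilbert--Varshamov style greedy construction. Define the \emph{conflict graph} $G$ whose vertex set is $\Sigma_q^{n\times n}$ and in which two distinct matrices $C,C'$ are adjacent iff there exist dimensions $a,b$ with $ab\ge M$ and $\min(a,b)\ge h$, a top-left position $p$ in $C$, and a top-left position $p'$ in $C'$, such that the $a\times b$ submatrices satisfy $C|_p = C'|_{p'}$. An independent set in $G$ is exactly a collection of matrices on which the decoder can never be confused, so I would use any sufficiently large independent set as the codebook.

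Next I would bound the maximum degree $\Delta(G)$. For any fixed $C$, the number of relevant quadruples $(a,b,p,p')$ is at most $n^{6}$ (at most $n^{2}$ choices of $(a,b)$, at most $n^{2}$ positions on each side). For each such quadruple, the number of $C'\neq C$ with $C'|_{p'}=C|_p$ is $q^{n^{2}-ab}\le q^{n^{2}-M}$, since only the entries of $C'$ outside the specified $a\times b$ window at $p'$ are free. Summing, $\Delta(G)\le n^{6}q^{n^{2}-M}$.

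The greedy lemma then produces an independent set $S\subseteq \Sigma_q^{n\times n}$ of size at least $q^{n^{2}}/(\Delta(G)+1)\ge q^{M}/(2n^{6})$ for $n$ sufficiently large. Setting $k=\lfloor\log_q|S|\rfloor\ge M-6\log_q n-1$, I would inject $\Sigma_q^{k}$ into $S$ to obtain the encoder; the decoder reads any legal fragment $F$ and returns the unique element of $S$ in which $F$ appears as a legal submatrix, which exists (the transmitted codeword) and is unique (by independence of $S$). Under the hypothesis $\log n/M\to 0$, the rate $k/M\ge 1-(6\log_q n+1)/M$ tends to $1$, as desired.

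The main subtlety, and the reason one obtains rate $1$ rather than rate $1/2$, is that a direct union bound over \emph{pairs} of randomly chosen codewords loses a factor of two in the rate: it would require $q^{2k}n^{6}q^{-M}<1$ and force $k<M/2$. Replacing the pairwise union bound with the greedy/independent-set argument above bounds only the neighborhood of each fixed codeword and saves this factor of two. One should verify that the conflict relation is symmetric (which it is, since swapping the roles of $(C,p)$ and $(C',p')$ preserves the matching condition), so that the greedy deletion truly yields an independent set in $G$.
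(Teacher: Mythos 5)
Your proof is correct and reaches the same conclusion, but by a genuinely different route from the paper. The paper obtains this corollary as the $\delta=0$ special case of Theorem~\ref{thm: random encoding}, which is a random-coding argument via the Lov\'asz Local Lemma (Lemma~\ref{lemma:lovasz local lemma}): bad events are indexed by a pair of codewords, a shape~$S_u$, and two offsets; each has probability $q^{-M}$, and the dependency count~$\kappa$ in~\eqref{equation:kappaBound} grows only \emph{linearly} in $|C|$ because one codeword is held fixed, which is what lets the LLL deliver $|C|\approx q^{M}/\mathrm{poly}(M,n)$. You replace the LLL by the elementary greedy/Tur\'an bound that an $N$-vertex graph of maximum degree $\Delta$ has an independent set of size at least $N/(\Delta+1)$, applied to a conflict graph on \emph{all} of $\Sigma_q^{n\times n}$ with $\Delta\le n^{6}q^{n^{2}-M}$, yielding $|S|\ge q^{M}/(2n^{6})$ directly. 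The two arguments succeed for the same structural reason, which you articulate correctly: both hold one codeword fixed when counting conflicts, thereby avoiding the $|C|^{2}$ loss that a naive pairwise union bound would incur and that would cap the rate at $1/2$. Your greedy route is arguably cleaner for the noiseless case; what the paper's LLL formulation buys is a single statement covering $\delta>0$ as well, though your conflict graph extends there too by replacing $q^{n^{2}-M}$ with a Hamming-ball-weighted count. One small arithmetic slip: $|S|\ge q^{M}/(2n^{6})$ gives $k\ge M-6\log_q n-\log_q 2-1$, not $M-6\log_q n-1$, but this is an additive constant and does not affect the limit, and your implicit use of $M\to\infty$ is justified because $\log n\to\infty$ together with $\log n/M\to 0$ forces it.
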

\begin{proof}
    Follows by setting~$\delta=0$ in Theorem~\ref{thm: random encoding} in Section~\ref{section:bounds and rates for ecc} which follows.
\end{proof}

\section{Forensic coding for three dimensional information}\label{section: pre for 3D}
\subsection{Problem Definition}\label{section: Problem definition for 3D}
Previous sections discuss matrix codewords, 
under the understanding that to make practical use of our codes, one can embed (2D) matrices in a 3D printed object.
However, this approach clearly overlooks the third dimension, which could be utilized for greater information density.
Therefore, in this section we explore a coding setting in which each codeword is a~3D cuboid (or ``cube''). 
Specifically, a codeword in this setting is $C\in \Sigma_q^{n\times n\times n'}$ with detailed requirements on $n,n'$ illustrated later; we choose the codeword dimensions as $n\times n\times n'$ for simplicity of exposition, even though other dimensions can be accommodated as well with few technical changes. 
As in the 2D case, we assume that the received fragment is not too thin in any dimension. 
Specifically, a fragment is called \textit{legal} if it contains an~$\alpha\times \beta\times \gamma$ cuboid for some integers $\alpha,\beta,\gamma$ such that $\alpha\beta\gamma\geq M$ and $\min\{\alpha,\beta,\gamma\}\geq h$ where $h$ is an additional parameter.

We construct a forensic code using a discrepancy theoretic notion called \textit{Halton-Hammersely (HH) set} (Definition~\ref{definition: Normalized Halton-Hammersely set} below) which achieves a non-vanishing rate.
Similar to Section~\ref{Section:pre for 2D}, we begin by partitioning the $n\times n \times n'$ codeword into smaller $d\times d\times d$ subcubes called units. 
The received fragment must contain sufficiently many units, and to solve this problem, we let~$d$ be the largest integer such that $h\geq 3d-1$.
Let $c$ be the largest integer such that $a=2^c$ and $b=3^{\lceil c\log_3(2) \rceil}$ satisfy
\begin{align}\label{equation:legal3D}
    M\geq (3d-1)^2((ab+1)d+d-1).
\end{align}
These parameters are chosen to ensure that each legal fragment will contain at least $ab$ complete units.
For simplicity, we assume that $ad|n$ and $bd|n'$, but the construction can be generalized to other cases as well.
The encoding process is similar to Section~\ref{section:encoding for 2D}---we assign different units with different colors, while units of the same color contain the same information.
The coloring procedure relies on the properties of the HH set, so that any legal fragment will contain units of all colors, and hence the message can be decoded as it contains all different types of units.
The following section introduces HH sets, as well as necessary discrepancy theoretic theorems. 

\subsection{Halton-Hammersely set}
\begin{definition}[Normalized bit reversing function for primes~\cite{halton1960efficiency}]\label{definition: Normalized Bit reversing function for prime numbers}
    Let $z$ be a positive integer and let~$p$ be a prime. 
    Consider the $p$-ary representation of~$z$, i.e., $z=\sum_{i\geq 0}a_ip^i$ with $a_i\in \llbracket p-1\rrbracket$. Then, the bit reversing function~$\phi_{p}$ is defined as $\phi_{p}(z)=\sum_{i\geq 0}a_ip^{-i-1}$. 
\end{definition}
Notice that unlike the bit-reversing function in the 2D case (Definition~\ref{definition:Bit reversing function}), the output of~$\phi_p$ is not necessarily an integer, and its domain is all integers.
\begin{definition}[Normalized HH set~\cite{halton1960efficiency}]\label{definition: Normalized Halton-Hammersely set}
    For a positive integer~$w$, define the normalized HH set $\bar{H}_w=\{(\frac{k}{w}, \phi_2(k), \phi_3(k))\}_{k=0}^{w-1}\subseteq [0,1]^3$.
\end{definition}
\begin{lemma}[Largest empty axis-parallel box in normalized HH-set]\label{lemma: Largest empty axis-parallel box in normalized Halton-Hammersely set}~\cite[Thm.~3]{Dumitrescu_Jiang_2012}
    The volume of the largest empty axis-parallel box in~$[0,1]^3$ that does not contain a point from~$\bar{H}_w$ is at most $\frac{24}{w}$.
\end{lemma}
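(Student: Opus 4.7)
The plan is to extend the two-dimensional VDC argument of Lemma~\ref{lemma:VDCarea} to three dimensions, exploiting that $\bar{H}_w$ records a base-$2$ reversed expansion on the $y$-axis and a base-$3$ reversed expansion on the $z$-axis, together with the key coprimality $\gcd(2,3)=1$. Let $B=S_x\times S_y\times S_z\subseteq[0,1]^3$ be an empty axis-parallel box. I will bound $|S_y|$ and $|S_z|$ separately in terms of two integer refinement levels $b_2,b_3$, and then use the Chinese Remainder Theorem to control the spacing of $x$-coordinates of those points whose $y$- and $z$-projections simultaneously land in fixed dyadic and triadic intervals.

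First, define a level-$b$ canonical dyadic interval as $[a\cdot 2^{-b},(a+1)\cdot 2^{-b})$, and analogously a level-$b$ canonical triadic interval as $[a\cdot 3^{-b},(a+1)\cdot 3^{-b})$. Let $I_2$ be the longest canonical dyadic interval contained in $S_y$, of length $2^{-b_2}$, and $I_3$ the longest canonical triadic interval contained in $S_z$, of length $3^{-b_3}$. By maximality $S_y$ contains no level-$(b_2-1)$ dyadic interval, and since those tile $[0,1]$, we obtain $|S_y|<2\cdot 2^{-b_2+1}=4\cdot 2^{-b_2}$; the same tiling argument with the three canonical triadic intervals of level $b_3-1$ yields $|S_z|<2\cdot 3^{-b_3+1}=6\cdot 3^{-b_3}$.

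Next, by Definition~\ref{definition: Normalized Bit reversing function for prime numbers}, the condition $\phi_2(k)\in I_2$ is equivalent to prescribing the $b_2$ least-significant base-$2$ digits of $k$, i.e.\ fixing $k\bmod 2^{b_2}$; likewise $\phi_3(k)\in I_3$ is equivalent to fixing $k\bmod 3^{b_3}$. Since $\gcd(2^{b_2},3^{b_3})=1$, the Chinese Remainder Theorem shows that the $k\in\{0,\dots,w-1\}$ satisfying both constraints form an arithmetic progression with common difference $2^{b_2}3^{b_3}$. Their $x$-coordinates $k/w$ are therefore spaced by $2^{b_2}3^{b_3}/w$; if $S_x$ contained two consecutive such $x$-coordinates, the corresponding point of $\bar{H}_w$ would lie in $B$, contradicting emptiness. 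Hence $|S_x|<2^{b_2}3^{b_3}/w$, and multiplying the three bounds gives $|S_x|\cdot|S_y|\cdot|S_z|<4\cdot 2^{-b_2}\cdot 6\cdot 3^{-b_3}\cdot 2^{b_2}3^{b_3}/w=24/w$.

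The step I expect to require the most care is the CRT spacing argument, particularly ensuring that the progression actually witnesses the bound on $|S_x|$ even when $2^{b_2}3^{b_3}$ is comparable to or exceeds $w$. In the regime $2^{b_2}3^{b_3}\ge w$ there may be at most one representative of the progression in $\{0,\dots,w-1\}$, so the spacing argument degenerates; in that case I would fall back on the trivial bound $|S_x|\le 1$ and observe that $|S_y|\cdot|S_z|<24/(2^{b_2}3^{b_3})\le 24/w$ already closes the estimate.
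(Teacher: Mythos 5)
Your proof is correct, and in fact it supplies an argument that the paper itself omits: Lemma~\ref{lemma: Largest empty axis-parallel box in normalized Halton-Hammersely set} is only cited from Dumitrescu and Jiang (Thm.~3), with no proof reproduced. Your argument reconstructs the standard one for the Hammersley point set, and it is a natural three-dimensional extension of the paper's own Lemma~\ref{lemma:VDCarea}: the step ``longest canonical interval in $S_y$ of level $b_2$'' giving $|S_y|<4\cdot 2^{-b_2}$ matches the dyadic case verbatim, the corresponding triadic step giving $|S_z|<6\cdot 3^{-b_3}$ is the obvious base-$3$ analogue, and the new ingredient is the CRT observation that fixing $k\bmod 2^{b_2}$ and $k\bmod 3^{b_3}$ simultaneously yields an arithmetic progression of common difference $2^{b_2}3^{b_3}$, which controls $|S_x|$. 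The constants $4\cdot 6=24$ assemble exactly to $24/w$ after the $2^{b_2}3^{b_3}$ factor cancels. Your closing remark about the degenerate regime $2^{b_2}3^{b_3}\ge w$ is indeed the right fallback (and one should also note that with at least one representative $k_0$ the tail interval $(k_r/w,1]$ is handled because $w-k_r\le 2^{b_2}3^{b_3}$); the only small imprecision is the phrase ``two consecutive such $x$-coordinates''---it is enough that $S_x$ contains even one, and the spacing plus the boundary case then give $|S_x|\le 2^{b_2}3^{b_3}/w$. None of this affects the soundness of the estimate.
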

Lemma~\ref{lemma: Largest empty axis-parallel box in normalized Halton-Hammersely set} readily implies that after scaling up each coordinate to an integer, the volume of the largest empty axis-parallel box remains bounded.
\begin{definition}[Scaled HH set]\label{definition: Scaled  Halton-Hammersely set}
    For~$w=2^c$, define the scaled HH set inside the box~$[0,w]\times[0,w]\times[0,3^{\lceil c\log_3(2) \rceil}]$ as $H_w\triangleq\{(k, 2^c\phi_2(k), 3^{\lceil c\log_3(2) \rceil}\phi_3(k))\}_{k=0}^{w-1}$. 
\end{definition}
Lemma~\ref{lemma: Largest empty axis-parallel box in normalized Halton-Hammersely set} naturally extends to its scaled counterpart as follows.
\begin{lemma}\label{Lemma: Empty box volume in scaled Halton-Hammersely set}
    The volume of the largest empty axis-parallel box in~$[0,w]\times[0,w]\times[0,3^{\lceil c\log_3(2) \rceil}]$ that does not contain a point from~$H_w$ is at most~$24w3^{\lceil c\log_3(2) \rceil}$.
\end{lemma}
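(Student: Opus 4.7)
The plan is to deduce the scaled bound directly from Lemma~\ref{lemma: Largest empty axis-parallel box in normalized Halton-Hammersely set} by an affine change of coordinates. Concretely, I would introduce the diagonal linear map $\Phi:[0,1]^3 \to [0,w]\times[0,w]\times[0,3^{\lceil c\log_3(2) \rceil}]$ defined by
\[
\Phi(x,y,z) = \bigl(wx,\; wy,\; 3^{\lceil c\log_3(2) \rceil}\, z\bigr).
\]
Using $w = 2^c$ together with Definition~\ref{definition: Normalized Halton-Hammersely set} and Definition~\ref{definition: Scaled  Halton-Hammersely set}, one checks componentwise that $\Phi$ carries $\bar{H}_w$ bijectively onto $H_w$ (the first coordinate $k/w$ becomes $k$, the second $\phi_2(k)$ becomes $2^c\phi_2(k)$, and the third $\phi_3(k)$ becomes $3^{\lceil c\log_3(2) \rceil}\phi_3(k)$).

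Next I would observe that $\Phi$ and $\Phi^{-1}$ are diagonal, hence both send axis-parallel boxes to axis-parallel boxes. Therefore, a box $B' \subseteq [0,w]\times[0,w]\times[0,3^{\lceil c\log_3(2) \rceil}]$ is empty of points of $H_w$ if and only if $B \triangleq \Phi^{-1}(B') \subseteq [0,1]^3$ is empty of points of $\bar{H}_w$. Since $\Phi$ is linear with determinant $w \cdot w \cdot 3^{\lceil c\log_3(2) \rceil}$, volumes transform by
\[
\operatorname{vol}(B') \;=\; w^2 \cdot 3^{\lceil c\log_3(2) \rceil}\cdot \operatorname{vol}(B).
\]

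Taking the supremum over all empty axis-parallel boxes $B'$ and applying Lemma~\ref{lemma: Largest empty axis-parallel box in normalized Halton-Hammersely set} to the corresponding $B$ then yields
\[
\sup \operatorname{vol}(B') \;\le\; w^2\cdot 3^{\lceil c\log_3(2) \rceil} \cdot \frac{24}{w} \;=\; 24\,w\,3^{\lceil c\log_3(2) \rceil},
\]
which is the claimed bound. There is no substantial obstacle here: the only thing to verify carefully is that the scaling factors match the three dimensions of the target box (so that the image and preimage boxes are genuinely axis-parallel and lie within the intended domain), which is immediate from the form of $\Phi$.
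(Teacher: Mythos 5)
Your proof is correct and is exactly the scaling argument the paper implicitly invokes (the paper states only that Lemma~\ref{lemma: Largest empty axis-parallel box in normalized Halton-Hammersely set} ``readily implies'' the scaled version and gives no further details). Your change of variables $\Phi$, the verification that it carries $\bar H_w$ bijectively to $H_w$ and axis-parallel boxes to axis-parallel boxes, and the determinant computation $w^2\cdot 3^{\lceil c\log_3(2)\rceil}$ together give precisely the claimed bound $24\,w\,3^{\lceil c\log_3(2)\rceil}$, so this is the same route, spelled out in full.
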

After the scaling process, each coordinate from the scaled HH set is integer, which leads to the following definition.
\begin{definition}[Shifted Scaled HH set]\label{definition: shifted Scaled Halton-Hammersely set}
    For an integer~$w=2^c$ and integers $d_1,d_2$ such that $0\leq d_1\leq w-1$ and $0\leq d_2\leq 3^{\lceil c\log_3(2) \rceil}-1$, define the shifted and scaled HH-set 
    \begin{align*}
        H_w^{d_1,d_2}\triangleq \{&(k, (2^c\phi_2(k)+d_1)\bmod w,\\
        &( 3^{\lceil c\log_3(2) \rceil}\phi_3(k)+d_2)\bmod 3^{\lceil c\log_3(2) \rceil}) \}_{k=0}^{w-1}.
    \end{align*}    
\end{definition}

Next, it is shown that the union of all shifted and scaled HH-sets covers all~$w^2m$ integer points of $\llbracket w-1\rrbracket\times \llbracket w-1 \rrbracket\times \llbracket 3^{\lceil c\log_3(2) \rceil}\rrbracket$, where~$m=3^{\lceil c\log_3(2) \rceil}$.
Clearly, each shifted and scaled HH-set contains~$w$ points, and there exists~$wm$ different shifted and scaled HH-sets, and hence it suffices to prove that any two distinct shifted and scaled HH-sets intersect trivially.

\begin{lemma}\label{lemma:shifted scaled hh set cover the whole space}
    If~$H_w^{d_1,d_2}\cap H_w^{d_1',d_2'}\ne\varnothing$ then~$(d_1,d_2)=(d_1',d_2')$.
\end{lemma}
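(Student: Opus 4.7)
The plan is to extract the shared point's generating index in each of the two sets and then compare the remaining two coordinates directly. Specifically, suppose there exists $(x,y,z) \in H_w^{d_1,d_2} \cap H_w^{d_1',d_2'}$. By Definition~\ref{definition: shifted Scaled Halton-Hammersely set}, there must exist indices $k, k' \in \{0, 1, \ldots, w-1\}$ such that $(x,y,z)$ equals the $k$-th element of $H_w^{d_1,d_2}$ and the $k'$-th element of $H_w^{d_1',d_2'}$.

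First I would compare the first coordinates. In the generating formula, the first coordinate of the $k$-th element is simply $k$ itself (no shifting is applied to this coordinate). Therefore $x = k$ and $x = k'$, so $k = k'$. This is the crucial observation that collapses the argument to a simple modular comparison.

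Next, comparing the second coordinates yields $(2^c \phi_2(k) + d_1) \bmod w = (2^c \phi_2(k) + d_1') \bmod w$, so $d_1 \equiv d_1' \pmod{w}$. Since both $d_1$ and $d_1'$ lie in $\{0, 1, \ldots, w-1\}$ by the definition of a shifted scaled HH set, we conclude $d_1 = d_1'$. The same reasoning applied to the third coordinates, using the modulus $3^{\lceil c \log_3(2) \rceil}$ and the range $\{0, 1, \ldots, 3^{\lceil c \log_3(2) \rceil} - 1\}$ for the third shifts, gives $d_2 = d_2'$.

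There is no genuine obstacle here; the argument is a direct unpacking of Definition~\ref{definition: shifted Scaled Halton-Hammersely set}. The only subtle point worth emphasizing in the write-up is that the first coordinate is \emph{not} shifted, so it uniquely identifies the generating index $k$ in both sets; once this is noted, the conclusion follows from the definitional constraints on the range of the shifts. Combined with the counting observation noted just before the lemma statement (each of the $wm$ shifted sets contains $w$ points, and they are pairwise disjoint by this lemma, giving a total of $w^2 m$ points), this proves that the shifted scaled HH-sets partition the integer lattice $\llbracket w-1 \rrbracket \times \llbracket w-1 \rrbracket \times \llbracket 3^{\lceil c \log_3(2) \rceil} - 1 \rrbracket$.
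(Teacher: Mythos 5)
Your proof is correct and follows exactly the same route as the paper's: equate the unshifted first coordinate to conclude $k=k'$, then reduce the second and third coordinates to the modular congruences $d_1\equiv d_1' \pmod{w}$ and $d_2\equiv d_2'\pmod{3^{\lceil c\log_3 2\rceil}}$, and finish using the range constraints on the shifts. No differences worth noting.
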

\begin{proof}
    Since the intersection is not empty, it follows that there exist~$k,k'\in\zrange{w-1}$ such that
    \begin{align*}
        &(k, (2^c\phi_2(k)+d_1)\bmod w,( 3^{\lceil c\log_3(2) \rceil}\phi_3(k)+d_2)\bmod 3^{\lceil c\log_3(2) \rceil})=\\
        &(k', (2^c\phi_2(k')+d_1')\bmod w,( 3^{\lceil c\log_3(2) \rceil}\phi_3(k')+d_2')\bmod 3^{\lceil c\log_3(2) \rceil}).
    \end{align*}    
    Thus,~$k=k'$, and 
    \begin{align*}
        (2^c\phi_2(k)+d_1)\bmod w=(2^c\phi_2(k)+d_1')\bmod w,
    \end{align*}
    which implies that $(d_1-d_1')\bmod w=0$.
    Since $0\le d_1,d_1'\le w-1$, it follows that $d_1=d_1'$.
    The equality between $d_2$ and~$d_2'$ is proved similarly.
\end{proof}
\begin{lemma}\label{Lemma: Empty box volume in shifted scaled Halton-Hammersely set}
    For each shifted scaled HH set $H_w^{d_1,d_2}$ with $w=2^c$ for some integer $c$, the volume of the largest empty axis-parallel box in $[0,n]\times[0,n]\times[0,3^{\lceil c\log_3(2) \rceil}]$ that does not contain a point from $H_w^{d_1,d_2}$ is at most $24w3^{\lceil c\log_3(2) \rceil}$.
\end{lemma}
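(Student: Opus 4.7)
The plan is to replicate, in the 3D setting, the strategy already employed for Lemma~\ref{lemma:shifted vdc area}: rather than trying to reduce the shifted case to the unshifted one via a change of variables (which would split the empty box into up to four pieces and cost a factor of four), I would rerun the underlying discrepancy argument of Lemma~\ref{lemma: Largest empty axis-parallel box in normalized Halton-Hammersely set} directly for $H_w^{d_1,d_2}$, observing that the cyclic shifts by $d_1$ along $y$ and $d_2$ along $z$ preserve all the local spacing information on which that argument depends.

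Concretely, I would fix an empty axis-parallel box $B=S_1\times S_2\times S_3\subseteq[0,w]\times[0,w]\times[0,3^{\lceil c\log_3(2)\rceil}]$ with respect to $H_w^{d_1,d_2}$ and introduce \emph{shifted} canonical intervals: dyadic canonical intervals along the $y$-axis translated cyclically by $d_1\bmod w$, and triadic canonical intervals along the $z$-axis translated cyclically by $d_2\bmod 3^{\lceil c\log_3(2)\rceil}$. A point of $H_w^{d_1,d_2}$ whose $y$-coordinate lies in a shifted dyadic canonical $y$-strip and whose $z$-coordinate lies in a shifted triadic canonical $z$-strip corresponds, after undoing the shift modulo $w$ and modulo $3^{\lceil c\log_3(2)\rceil}$, to a point of $H_w$ inside the unshifted canonical strips. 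Hence the first coordinates of such points form an arithmetic progression with exactly the same common difference as in the unshifted HH set.

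Given this, I would identify the longest shifted canonical $y$-interval contained in $S_2$ and the longest shifted canonical $z$-interval contained in $S_3$, and bound $|S_2|$ and $|S_3|$ by twice these lengths using the maximality-of-canonical-interval argument from Lemma~\ref{lemma:VDCarea}, then bound $|S_1|$ by the common difference of the $x$-progression. Multiplying the three length bounds reproduces the same constants as in Lemma~\ref{Lemma: Empty box volume in scaled Halton-Hammersely set}, yielding volume at most $24w\,3^{\lceil c\log_3(2)\rceil}$.

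The main obstacle I expect is the cyclic wraparound: at each dyadic (resp.\ triadic) level, one shifted canonical interval may straddle the boundary $0\equiv w$ (resp.\ $0\equiv 3^{\lceil c\log_3(2)\rceil}$) and appear on the linear domain as two disjoint pieces, so $S_2$ or $S_3$ need not contain any full shifted canonical interval at its ``natural'' level even when it is long. I would handle this precisely as in Lemma~\ref{lemma:shifted vdc area}, by selecting the longest \emph{contiguous} shifted canonical subinterval that fits inside $S_2$ (and $S_3$); since the lengths of contiguous shifted canonical intervals are the same multiset as in the unshifted case, the length bound on $S_2$ (resp.\ $S_3$) in terms of the maximal contained contiguous shifted canonical interval is the unchanged, and the constants in the final estimate are preserved.
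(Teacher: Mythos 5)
Your plan is correct and matches the paper's intended approach: the paper's own proof of this lemma consists of the single remark that the argument is similar to that of Lemma~\ref{lemma:shifted vdc area}, and you supply the missing reasoning, namely that shifted dyadic (along~$y$) and triadic (along~$z$) canonical strips still select $x$-coordinates forming an arithmetic progression with unchanged common difference, so the constants of Lemma~\ref{lemma: Largest empty axis-parallel box in normalized Halton-Hammersely set} are preserved. One small imprecision in your last paragraph: the multiset of lengths of \emph{contiguous} shifted canonical intervals is not literally the same as in the unshifted case, since the one wrapping interval at each scale appears on the linear domain as two strictly shorter pieces; what you actually need, and what does hold, is that any subinterval of $[0,w]$ (resp.\ $[0,3^{\lceil c\log_3 2\rceil}]$) of length at least $2\ell$ must fully contain a non-wrapping shifted canonical interval of length $\ell$, so the maximality argument and the resulting constants go through unchanged.
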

\begin{proof}
    Similar to the proof to Lemma~\ref{lemma:shifted vdc area}, the details are omitted for brevity.
\end{proof}
Previous lemmas bound the volume of the largest empty box~$B$ that does not contain a point from the HH set or its shifts.
In what follows, we also require tiling of an HH set in a larger codeword similar to Lemma~\ref{union of vdc lemma}.
\begin{lemma}\label{lemma:union of HH set}
        Let~$w=2^c$, let~$v_1$ be an integer multiple of~$w$, and let~$v_2$ be an integer multiple of $3^{\lceil c\log_3(2) \rceil}$. 
        For~$x,y\in\llbracket v_1/w-1\rrbracket$ and~$z\in\llbracket v_2/3^{\lceil c\log_3(2) \rceil}-1\rrbracket$, let~$H_w^{(x,y,z)}\triangleq \{(z_1+xw,z_2+yw,z_3+z3^{\lceil c\log_3(2) \rceil})\vert (z_1,z_2,z_3)\in H_w\}$, and let~$T\triangleq \bigcup_{x,y\in \llbracket v_1/w-1\rrbracket, z\in \llbracket v_2/3^{\lceil c\log_3(2) \rceil}-1\rrbracket}H_{w}^{(x,y,z)}$. 
    Then the volume of the largest empty box that does not contain a point from~$T$ is at most~$24w3^{\lceil c\log_3(2) \rceil}$.
\end{lemma}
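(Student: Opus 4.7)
The plan is to follow the same blueprint as the proof of Lemma~\ref{union of vdc lemma}, with the van der Corput set replaced by the scaled Halton--Hammersely set and the two-dimensional case analysis extended to three dimensions.

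I would start by letting $B = S_a \times S_b \times S_c$ be an axis-parallel box in $[0,v_1]\times[0,v_1]\times[0,v_2]$ that contains no point of $T$. The first step is to observe, as in the 2D argument, that since the bit-reversing functions $\phi_2$ and $\phi_3$ are bijective on their respective domains, for any integer coordinates $(y,z)$ at which a point of a tile exists, the points of $T$ sharing those coordinates across adjacent tiles are separated by exactly $w$ in the $x$-direction; analogous statements hold in the $y$- and $z$-directions, with the $z$-periodicity equal to $3^{\lceil c\log_3(2)\rceil}$. Consequently, an empty box $B$ must satisfy $|S_a| < w$, $|S_b| < w$, and $|S_c| < 3^{\lceil c\log_3(2)\rceil}$, so that $B$ can always be enclosed within a window of the dimensions of one tile.

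The second step is a case analysis according to how many tiles $B$ intersects. In three dimensions the box can touch $1, 2, 4,$ or $8$ adjacent tiles, depending on whether it straddles tile boundaries in $0, 1, 2,$ or $3$ coordinate directions; this yields eight cases. In each case, I would pick a window of dimensions $w \times w \times 3^{\lceil c\log_3(2)\rceil}$ containing $B$ whose corner is generally offset from the nearest tile corner by some triple $(\alpha_1, \alpha_2, \alpha_3)$. By the periodicity of the tile structure and the definition of $T$ as a translation-invariant union of copies of $H_w$, the restriction of $T$ to this window coincides exactly with some shifted scaled HH-set $H_w^{d_1,d_2}$ (shifted additionally along the $x$-axis by $\alpha_1$), with the offsets $d_1,d_2$ determined by the residues $\alpha_2 \bmod w$ and $\alpha_3 \bmod 3^{\lceil c\log_3(2)\rceil}$. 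Applying Lemma~\ref{Lemma: Empty box volume in shifted scaled Halton-Hammersely set} to this window yields the desired bound on the volume of $B$.

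The main obstacle is the clean identification, in each of the eight cases, of the window's restriction of $T$ with an appropriate shifted scaled HH-set. Cases where $B$ straddles boundaries in two or three directions require combining shifts along multiple axes, and verifying that the union of fragments coming from different tiles exactly reconstitutes one shift of $H_w$ in the sense of Definition~\ref{definition: shifted Scaled Halton-Hammersely set}. Once that identification is carried out, the rest of the argument is a direct invocation of Lemma~\ref{Lemma: Empty box volume in shifted scaled Halton-Hammersely set} and is essentially mechanical, closing the proof just as in the two-dimensional setting.
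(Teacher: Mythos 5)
Your plan mirrors the paper's terse instruction to argue "similar to Lemma~\ref{union of vdc lemma}," and the case decomposition plus invocation of Lemma~\ref{Lemma: Empty box volume in shifted scaled Halton-Hammersely set} are indeed the intended moving parts. However, there is a genuine gap in your first step, and it is not the eight-case bookkeeping that you flag as the ``main obstacle.'' Your claim that an empty box must satisfy $|S_a|<w$, $|S_b|<w$ and $|S_c|<3^{\lceil c\log_3(2)\rceil}$ relies, as in 2D, on the assertion that once the box is wide enough in some direction it must pick up a point of $T$. In 2D this is sound because both coordinate projections of $C_w$ are surjective onto $\llbracket w-1\rrbracket$, so \emph{every} lattice row and \emph{every} lattice column of the tiled set contains points spaced exactly $w$ apart. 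In 3D this fails: the projections of $H_w$ onto the $x$- and $y$-axes are bijective onto $\llbracket w-1\rrbracket$, but the projection onto the $z$-axis hits only $w=2^c$ of the $3^{\lceil c\log_3(2)\rceil}\ge 2^c$ possible values. Concretely, for each $x'\in\llbracket w-1\rrbracket$ there is a single point $(x',y',z')\in H_w$, so a pair $(x_0,y_0)$ has no point of $T$ on the entire $z$-fiber $\{(x_0,y_0)\}\times[0,v_2]$ whenever $y_0\not\equiv 2^c\phi_2(x_0\bmod w)\pmod{w}$.

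Consequently, a $1\times 1\times v_2$ box whose four corner fibers are all empty (easy to arrange once $w\ge 4$) contains no point of $T$ yet has volume $v_2$, which is unbounded; none of the side bounds you assert can be inferred from periodicity alone, and the ``enclose $B$ in a single-tile window'' step has nothing to enclose with. This is not a defect peculiar to your write-up --- the paper's own ``similar to Lemma~\ref{union of vdc lemma}, details omitted'' dismissal inherits it --- but it means the lemma as stated is too strong and the proof plan cannot close without an additional hypothesis constraining the box shape (e.g., a minimum side length akin to the parameter~$h$ in the definition of a legal fragment). Since the only downstream use, in Lemma~\ref{lemma:distinct units in 3D}, applies the bound to a grid with integer sides $x,y,z\ge 1$ and $xyz=ab$, a correct version would need to rule out the thin-fiber configuration for those particular boxes; that verification is nontrivial and is supplied neither in your sketch nor in the paper.
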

\begin{proof}
    Similar to the proof of Lemma~\ref{union of vdc lemma}, the details are omitted for brevity.
\end{proof}
Lemma~\ref{lemma:union of HH set} shows that tiling small HH sets in a large cuboid does not significantly affect the area of largest empty box. 
We will also require shifts of~$T$, i.e., for $d_1\in \zrange{w-1}$ and $d_2\in \zrange{3^{\lceil c\log_3(2) \rceil}-1}$, let 
$$T_{d_1,d_2}\triangleq\{(\alpha,(\beta+d_1)\bmod v_1,(\gamma+d_2)\bmod v_2)\vert (\alpha,\beta,\gamma)\in T \}.$$
It follows from Lemma~\ref{lemma:shifted scaled hh set cover the whole space} that $\bigcup_{d_1\in \zrange{w-1},d_2\in \zrange{3^{\lceil c\log_3(2) \rceil}-1}} T_{d_1,d_2}$ covers all entries in the $v_1\times v_1\times v_2$ cuboid. In addition, it follows from previous lemmas that the volume of the largest empty axis-parallel box in $[0,v_1]\times[0,v_1]\times[0,v_2]$ that does not contain a point from $T_{d_1,d_2}$ is at most $24w3^{\lceil c\log_3(2) \rceil}$

\subsection{Encoding}\label{section:encoding for 3D}
The encoding in 3D operates according principles similar to Section~\ref{section:encoding for 2D}.
Before introducing the zero square identification encoding in 3D, we first illustrate the coloring function. 
Recall that the codeword $C\in \Sigma^{n\times n\times n'}$ is partitioned into smaller $d\times d\times d$ subcubes called units.
A coloring is then given to different units such that units with the same color contain the same information.
We color the $\frac{n}{d}\times \frac{n}{d}\times\frac{n'}{d}$ grid of units in $\frac{ab}{24}$ colors (where~$a$ and~$b$ are as in Section~\ref{section: Problem definition for 3D}) by first coloring a smaller $a/8\times a/8\times b/3$ grid of units and then tiling the whole $\frac{n}{d}\times \frac{n}{d}\times\frac{n'}{d}$ grid.

By the definition of shifted scaled HH-set (Definition~\ref{definition: shifted Scaled Halton-Hammersely set}) and by Lemma~\ref{lemma:shifted scaled hh set cover the whole space}, the union of all $H_w^{d_1,d_2}$ with $d_1\in \zrange{a/8-1}$ and $d_2\in \zrange{b/3-1}$ covers all $\frac{a^2b}{64\cdot3}$ integer points of $\zrange{a/8-1}\times\zrange{a/8-1}\times\zrange{b/3-1}$. 
Then, we color all integer points in a set $H_w^{d_1,d_2}$ with color $\frac{d_1b}{3}+d_2$. 
This coloring covers all integer points in $\zrange{a/8-1}\times\zrange{a/8-1}\times\zrange{b/3-1}$ and is injective due to Lemma~\ref{lemma:shifted scaled hh set cover the whole space}.
Since the small $a/8\times a/8\times b/3$ grid of units is colored and since $n|da$, $n'|db$, we can use the small grid to tile the whole $\frac{n}{d}\times \frac{n}{d}\times\frac{n'}{d}$ grid, where the coloring remains identical.
After tiling, $T_{d_1,d_2}$ with $d_1\in \zrange{a/8-1}$ and $d_2\in \zrange{b/3-1}$ is colored by $\frac{d_1b}{3}+d_2$ according to the definition of $T$ and $T_{d_1,d_2}$ in Lemma~\ref{lemma:union of HH set}.

In zero square identification encoding, we place a $d\times d\times d$ zero cube in all units of color~$0$.
In all other units, we fix certain $8$ bits to be $1$ so that the only $d\times d\times d$ zero cube in the entire codeword are the units with color $0$.
In detail, let~$\boldx\in\Sigma_q^k$ be the message to be encoded, where $k=(\frac{ab}{24}-1)(R-\log_q(\frac{ab}{24}-1))$ and $R=d^3-8$.
We begin by mapping~$\boldx$ to~$\frac{ab}{24}-1$ \textit{distinct} strings~$\boldx_i$'s, each containing~$R$ bits, using an invertible function~$g_1(\boldx)=\{\boldx_i\}_{i=0}^{\frac{ab}{24}-1}$ which simply appends indices to the parts of~$\boldx$, as done in Proposition~\ref{proposition:g1}.
Next, we introduce a function $g_2:\Sigma_q^R\xrightarrow{}\Sigma_q^{d\times d\times d}$, which sets all the corners of the unit cube to be $1$, and the remaining $d^3-8$ bits of the output are the $R$ bits of the input.
We then let~$\boldX_i\triangleq g_2(\boldx_i)\in\Sigma_q^{d\times d\times d}$ for each~$i\in\{1,\ldots,\frac{ab}{24}-1\}$, and embed the~$\boldX_i$ into the $n\times n\times n'$ codeword according to the coloring described above.
\subsection{Decoding}\label{section: decoding for 3D}
Given a legal fragment from~$C(\boldx)$ to decode, the decoder first identifies all the units contained in the fragment by aligning them against~$C(\boldx)$.
Since~$C(\boldx)$ is unknown, the decoder identifies the boundaries between all units 
by finding a~$d\times d\times d$ zero cube in the fragment, i.e., the decoder identifies the units with color~$0$. 
It remains to prove there exist a unit with color $0$ in any legal fragment, and that any~$d\times d\times d$ zero cube must be a unit, which is done in Lemma~\ref{lemma:enough units in 3D} and Lemma~\ref{lemma:distinct units in 3D} which follow.
Having identified the boundaries, the decoder puts all units in a set~$\{\boldY_i\}_{i=1}^s$ of some size~$s$ (without repetitions). 
The decoder applies~$g_2^{-1}$ on each~$\boldY_i$, and later applies~$g_1^{-1}$  on the resulting set. 
That is, the output of the decoding algorithm is~$\hat{\boldx}=g_1^{-1}(\{g_2^{-1}(\boldY_i)\}_{i=1}^s)$.
We now turn to prove that the decoding algorithm is well-defined, that the alignment succeeds, and that~$\hat{\boldx}=\boldx$.

To prove these statements, we show that the fragment contains all~$\frac{ab}{24}$ distinct units.
This is done in two steps: we first show that the fragment contains sufficiently many units (Lemma~\ref{lemma:enough units in 3D}) and then show that all~$\frac{ab}{24}$ \textit{distinct} units are among them (Lemma~\ref{lemma:distinct units in 3D}).
\begin{lemma}\label{lemma:enough units in 3D}
    A legal fragment contains an $(x+1)\times (y+1)\times(z+1)$ grid of units for some positive integers~$x,y,z$ such that $xyz=ab$.
\end{lemma}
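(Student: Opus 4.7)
The plan is to mirror the strategy of Lemma~\ref{lemma:enough complete units}, adapted to the 3D setting. The guiding principle is unchanged: a side of the fragment of length $\ell$ contains at least $k+1$ complete units in the worst alignment whenever $\ell \geq (k+1)d + d - 1$. Consequently, it suffices to prove the following statement. Given a cuboid of dimensions $\alpha\times\beta\times\gamma$ with $\alpha\beta\gamma\ge (3d-1)^{2}((ab+1)d+d-1)$ and $\min\{\alpha,\beta,\gamma\}\ge 3d-1$, there exist positive integers $x,y,z$ with $xyz=ab$ such that $\alpha\ge (x+1)d+d-1$, $\beta\ge (y+1)d+d-1$, and $\gamma\ge (z+1)d+d-1$.

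WLOG I would assume $\alpha\le\beta\le\gamma$ and proceed by case analysis on the magnitude of $\alpha$, taking $x=1$ as the base case (so $yz=ab$). When $3d-1\le\alpha<4d-1$, the volume condition yields $\beta\gamma\ge\frac{(3d-1)^{2}((ab+1)d+d-1)}{4d-1}$, which, after dividing by a factor of $3d-1$, reduces to the exact shape of the 2D inequality from Lemma~\ref{lemma:enough complete units} with the quantity $m$ replaced by $ab$. Thus the 2D case analysis can be invoked verbatim to produce the required $(y,z)$ factorization. For larger $\alpha$, I would partition $[4d-1,\infty)$ into tiers of the form $[(x+1)d+d-1,\,(2x+1)d+d-1)$ where $x$ ranges over the divisors of $ab$ of the form $2^{i}$ or $2^{i}3^{j}$, place $\alpha$ in the appropriate tier, fix $x$ to the left endpoint of the tier, and then repeat the same reduction on $(\beta,\gamma)$ with product target $ab/x$.

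The main obstacle is the technical inequality that supports the case analysis: for every admissible tier choice, the derived lower bound on $\beta\gamma$ is still large enough for the inner 2D step to apply. This is the 3D analogue of Lemma~\ref{Lemma: forInequlaity} in the appendix, and it must now be verified across the mixed factorization $ab=2^{c}\cdot 3^{\lceil c\log_{3}2\rceil}$ rather than a single power of two. Concretely, for each pair of divisors $(x,y)$ of $ab$, I would show
\[
    ((x+1)d+d-1)((y+1)d+d-1)((ab/(xy)+1)d+d-1)\le (3d-1)^{2}((ab+1)d+d-1),
\]
which follows, as in the appendix lemma, by treating the left-hand side as a product of affine-in-$d$ factors and comparing against the right-hand side term by term after using $xy\le ab$.

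Finally, a clean-up case handles the situation where one of the later sides (say $\gamma$) already exceeds the threshold needed for the extreme factorization $z=ab/(xy)$; in that case, as in the $b>4d-1$ branch of the 2D proof, I would simply trim $\gamma$ down to an earlier tier boundary and fall back to a previous case. Combining all cases produces the required grid of $(x+1)(y+1)(z+1)$ complete units with $xyz=ab$, completing the argument.
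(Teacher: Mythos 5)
Your proof takes a genuinely different route from the paper's. You mirror the 2D case analysis of Lemma~\ref{lemma:enough complete units}, tiering the smallest dimension and then reducing to a 2D subproblem on the remaining two, whereas the paper argues by contrapositive in one algebraic step: it expands the product $V(x,y,z)=(xd+2d-1)(yd+2d-1)(zd+2d-1)$ into elementary symmetric polynomials of $x,y,z$, bounds $x+y+z$ and $xy+yz+zx$ in terms of $xyz$ via $x+y+z\le 2+xyz$ and $xy+yz+zx\le 1+2xyz$ for positive integers, and concludes that $xyz<ab$ forces $V(x,y,z)<(3d-1)^2(abd+2d-1)$. The paper therefore avoids any case analysis on tiers of $\alpha$.

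There is, however, a concrete quantitative gap in your base case that prevents the reduction from closing. When $3d-1\le\alpha<4d-1$, dividing the volume threshold $\alpha\beta\gamma\ge(3d-1)^2((ab+1)d+d-1)$ by $\alpha<4d-1$ gives only $\beta\gamma>\frac{(3d-1)^2((ab+1)d+d-1)}{4d-1}$. To invoke the 2D argument of Lemma~\ref{lemma:enough complete units} on $(\beta,\gamma)$ with target $yz=ab$, you would need $\beta\gamma\ge(4d-1)((ab+1)d+d-1)$; but $\frac{(3d-1)^2}{4d-1}<4d-1$, so the lower bound you derive is strictly weaker than the hypothesis of the 2D lemma. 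The assertion that ``after dividing by a factor of $3d-1$'' one recovers the 2D inequality with $m$ replaced by $ab$ is therefore incorrect, and the same shortfall (a factor of roughly $(3d-1)^2/(4d-1)^2$) recurs at every higher tier. As written, the reduce-to-2D plan does not go through: either the legality threshold would have to be strengthened beyond $(3d-1)^2$ or a fundamentally tighter argument is needed.
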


\begin{proof}
We argue by contrapositive.  
Fix a legal fragment and positive integers \(x,y,z\) such that the
fragment contains an \((x+1)\times(y+1)\times(z+1)\) unit grid.
Write
\[
V(x,y,z):=(xd+2d-1)(yd+2d-1)(zd+2d-1)
\]
for the fragment’s volume implied by these parameters.  We will show
that if \(xyz<ab\) then
\[
V(x,y,z)<(3d-1)^2(abd+2d-1),\tag{1}\label{eq:target}
\]
contradicting the minimality condition in the definition of a legal
fragment
For any triple of positive integers with \(xyz<ab\) we have
\[
x+y+z \;\le\; 1+1+xyz \;<\; ab+2, \qquad
xy+yz+zx \;\le\; 1+2xyz \;<\; 2ab+1.
\]
A direct expansion yields
\begin{align*}
V(x,y,z)
&=(xd+2d-1)(yd+2d-1)(zd+2d-1)\\
&=xyz\,d^{3}+(xy+yz+zx)d^{2}(2d-1)\\
&\quad +(x+y+z)d(2d-1)^{2}+(2d-1)^{3}.
\end{align*}

Substituting the inequalities gives
\begin{align*}
V(x,y,z)
&<ab\,d^{3}+(2ab+1)d^{2}(2d-1)\\
&\quad +(ab+2)d(2d-1)^{2}+(2d-1)^{3}\\
&=(3d-1)^2(abd+2d-1).\tag*{\qedhere}
\end{align*}
\end{proof}

\begin{lemma}\label{lemma:distinct units in 3D}
    In the above process, we have $s=\frac{ab}{24}$.
\end{lemma}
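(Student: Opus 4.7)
The plan is to mirror the 2D proof of Lemma~\ref{lemma s=m'}, replacing VDC-based area bounds with their HH-set based volume analogues.

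First, by Lemma~\ref{lemma:enough units in 3D}, the fragment contains an $(x+1)\times(y+1)\times(z+1)$ grid of complete units with $xyz = ab$. Contracting each unit in this grid to its index in the unit-grid $\zrange{n/d-1}\times\zrange{n/d-1}\times\zrange{n'/d-1}$ yields an axis-parallel box $P$ of volume exactly $xyz = ab$. It therefore suffices to prove that $P$ contains a point of every color, because each color class stores the encoding of a distinct information word and a unit is uniquely determined (up to equality of $\boldY_i$'s) by its color.

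Proceeding by contradiction, assume that some color $c^\ast$ does not appear in $P$. By the encoding in Section~\ref{section:encoding for 3D}, the color class $L_{c^\ast}$ is exactly one of the shifted-and-tiled sets $T_{d_1,d_2}$ defined after Lemma~\ref{lemma:union of HH set}. Then $P$ is an empty axis-parallel box in $T_{d_1,d_2}$, and Lemma~\ref{lemma:union of HH set}, together with the shift-invariance noted immediately after its statement, upper-bounds the volume of any such empty box by $24w\cdot 3^{\lceil c\log_3(2)\rceil}$ for the base HH parameters used by the coloring.

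With $w = a/8$ and $3^{\lceil c\log_3(2)\rceil} = b/3$ dictated by the $a/8\times a/8\times b/3$ small grid in the coloring step, this bound evaluates to $24\cdot(a/8)\cdot(b/3) = ab$, which matches rather than strictly exceeds the volume $ab$ of $P$. I expect the main obstacle to be extracting a strict inequality at exactly this spot, either by sharpening Lemma~\ref{lemma:union of HH set} using integrality of the box coordinates (analogous to the strict bound $<4w$ in Lemma~\ref{union of vdc lemma} in the 2D case), or by introducing a small multiplicative slack at the coloring stage. A secondary subtlety is that the HH parameter $w = a/8$ corresponds to $2^{c-3}$ rather than $2^c$, so the third dimension bound $3^{\lceil c\log_3(2)\rceil}$ must be matched carefully against $b/3 = 3^{\lceil c\log_3(2)\rceil - 1}$. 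Once the strict inequality between the empty-box bound and $\mathrm{vol}(P)=ab$ is established, we obtain a contradiction; every color must then appear in $P$, so the decoded set contains exactly $s = ab/24$ distinct units, as claimed.
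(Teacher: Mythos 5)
Your argument follows the paper's proof of this lemma step for step: contract the $(x+1)\times(y+1)\times(z+1)$ unit grid from Lemma~\ref{lemma:enough units in 3D} into an axis-parallel box $P$ of volume $xyz = ab$, identify each color class with a shifted tiled HH set $T_{d_1,d_2}$ via the coloring in Section~\ref{section:encoding for 3D}, and invoke Lemma~\ref{lemma:union of HH set} to rule out an empty color. So in structure your proof matches the paper's exactly.

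The two subtleties you flag are, however, well taken, and they point to imprecisions that the paper's own proof does not resolve either. Lemma~\ref{lemma:union of HH set} states the empty-box bound as \emph{``at most''} $24w3^{\lceil c\log_3 2\rceil}$, a non-strict inequality, whereas the proof of Lemma~\ref{lemma:distinct units in 3D} in the paper asserts, without justification, that the bound is \emph{``strictly less than''} $24\cdot\frac{ab}{24}$. Since $\operatorname{vol}(P)=ab$ is the same quantity, a non-strict bound does not on its own produce the desired contradiction; contrast this with the 2D case, where Lemma~\ref{lemma:VDCarea} and Lemma~\ref{union of vdc lemma} are proved with a genuinely strict bound $<4w$, while the cited 3D result (Lemma~\ref{lemma: Largest empty axis-parallel box in normalized Halton-Hammersely set}) is only stated with $\leq$. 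Your second concern is also real: the coloring uses a base grid of side $w=a/8=2^{c-3}$, so by Definition~\ref{definition: shifted Scaled Halton-Hammersely set} the third-dimension span should be $3^{\lceil (c-3)\log_3 2\rceil}$, which need not equal $b/3=3^{\lceil c\log_3 2\rceil-1}$ (since $3\log_3 2\approx 1.89$ is not an integer, the ceilings may drop by two rather than one). In short, you have not introduced any gap that is not already latent in the paper; to close the argument fully one would either carry the strictness of the 2D union lemma through the 3D analogue explicitly, or build a small multiplicative slack into the choice of $a,b$ so the empty-box bound sits strictly below $ab$. Your hesitation at exactly this spot is justified.
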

\begin{proof}
    We show that the cuboid whose existence is guaranteed by Lemma~\ref{lemma:enough units in 3D} contains all~$\frac{ab}{24}$ distinct  units.
    Let~$P$ be the $(x+1)\times(y+1)\times(z+1)$ grid of  units guaranteed by Lemma~\ref{lemma:enough units in 3D}. Define a $(x+1)\times(y+1)\times(z+1)$ grid by contracting each unit to a point.
    With this contracting process, the volume of~$P$ is~$xyz=ab$.

    By Lemma~\ref{lemma:union of HH set}, for every $d_1\in \zrange{a/8}$ and every $d_2\in \zrange{b/3}$, the set~$T_{d_1,d_2}$ is colored by $\frac{d_1b}{3}+d_2$.
    Suppose there exist some color $\beta$ that is not in the cuboid from Lemma~\ref{lemma:enough units in 3D}.
    Lemma~\ref{lemma:union of HH set} implies that the volume of any axis-parallel box that does not contain the color $\beta$ is strictly less than $24\cdot\frac{ab}{24}$.
    Since the volume of $P$ is~$ab$, it follows that it contains the color~$\beta$, a contradiction. 
\end{proof}
\subsection{Bounds and Rates}
\begin{theorem}\label{theorem:rate for 3D} 
In a parameter regime where $M\rightarrow \infty$, $h=o(\sqrt[3]{M})$, and $h = \omega(\sqrt[3]{\log_q M})$,
    the rate of the code in Section~\ref{section:encoding for 3D} approaches~$\frac{1}{1296}$. 
\end{theorem}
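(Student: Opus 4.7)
My plan is to mimic the proof of Theorem~\ref{theorem:comparison of both methods}, adapted to three dimensions. The first step is to derive two-sided bounds on $M$ from the maximality of the chosen integer $c$ in the definitions $a=2^c$ and $b=3^{\lceil c\log_3 2\rceil}$. The lower bound is immediate from~\eqref{equation:legal3D}. For the upper bound, observe that increasing $c$ by one doubles $a$ and multiplies $b$ by a factor that is either $1$ or $3$, since $\lceil (c+1)\log_3 2\rceil - \lceil c\log_3 2\rceil \in \{0,1\}$. Hence the product for the next admissible value of $c$ grows by a factor of at most $6$, so the maximality of $c$ yields
$$M < (3d-1)^2\bigl((6ab+1)d + d - 1\bigr),$$
while the lower bound is $M\ge (3d-1)^2\bigl((ab+1)d + d - 1\bigr)$.

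Second, I would verify the relevant limits in the prescribed parameter regime. Since $d=\Theta(h)$ and $h=o(\sqrt[3]{M})$, rearranging the lower bound on $M$ gives $ab\to\infty$ as $M\to\infty$, and hence both $a\to\infty$ and $b\to\infty$. Combining $ab\le M$ with $h=\omega(\sqrt[3]{\log_q M})$, which gives $d^3=\omega(\log_q M)$, yields $\log_q(ab)/d^3\to 0$; this will allow the logarithmic terms in $k$ to be absorbed into $o(1)$ factors, exactly as in the 2D argument.

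Third, I would substitute the explicit form
$$k=\left(\tfrac{ab}{24}-1\right)\bigl(d^3-8-\log_q(\tfrac{ab}{24}-1)\bigr)$$
into the rate $k/M$ and divide numerator and denominator by $ab\,d^3$. The leading term of the numerator is $(ab/24)d^3$, while the upper bound on $M$ contributes a leading term of $54\,abd^3$ in the denominator (since $(3d-1)^2\sim 9d^2$ and $(6ab+1)d+d-1\sim 6abd$). Under the limits established in step two, all lower-order terms vanish and we obtain
$$\frac{k}{M}\;\ge\;\frac{1}{24\cdot 54}\,(1-o(1))\;=\;\frac{1}{1296}\,(1-o(1)).$$
The main obstacle I anticipate is the tight bookkeeping of the discrete jumps in $ab$: the factor $6$ in the upper bound on $M$ is unavoidable because $a$ and $b$ grow in coupled but unequal discrete steps, and a looser constant here would inflate the denominator and yield a weaker asymptotic rate than the targeted $\tfrac{1}{1296}=\tfrac{1}{24\cdot 9\cdot 6}$. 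A secondary technical point is verifying that the $\log_q(\tfrac{ab}{24}-1)$ correction in $k$ is $o(d^3)$, which is precisely what the hypothesis $h=\omega(\sqrt[3]{\log_q M})$ provides.
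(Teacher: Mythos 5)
Your proposal is correct and follows essentially the same route as the paper's proof: derive two-sided bounds on $M$ from the maximality of $c$, establish $ab\to\infty$ and $\log_q(ab)/d^3\to 0$ in the given regime, then substitute $k=\left(\tfrac{ab}{24}-1\right)\bigl(d^3-8-\log_q(\tfrac{ab}{24}-1)\bigr)$ into $k/M$ and let the lower-order terms vanish. Your explicit justification of the factor $6$ in the upper bound on $M$ (via $\lceil(c+1)\log_3 2\rceil-\lceil c\log_3 2\rceil\in\{0,1\}$) is a welcome detail that the paper states without explanation, but the argument is otherwise identical.
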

\begin{proof}
    Recall that for any~$N=n^2\cdot n'$, any~$M\le N$, and any~$h$, we let~$d$ be the largest integer such that~$h\ge 3d-1$, and let~$c$ be the largest integer such that $a=2^c$ and $b=3^{\lceil c\log_3(2)\rceil}$ satisfy $M\ge (3d-1)^2((ab+1)d+d-1)$. 
    Therefore, 
    \begin{equation}\label{equation: bound for M in 3D}
        (3d-1)^2((ab+1)d+d-1)\le M\le (3d-1)^2((6ab+1)d+d-1)
    \end{equation}
    and hence 
    \begin{equation}\label{equation: bound for ab in 3D}
        \frac{1}{6d}\left(\frac{M}{(3d-1)^2}+1\right)-1\le ab \le \frac{1}{d}\left(\frac{M}{(3d-1)^2}+1\right)-1.
    \end{equation}
    
    Since~$d=\Theta(h)$, and since $h=o(\sqrt[3]{M})$, it follows that
    $$\lim_{M\rightarrow\infty}\frac{1}{6d}\left(\frac{M}{(3d-1)^2}+1\right)= \infty,$$ and thus in this parameter regime we have that $ab\xrightarrow[]{}\infty$.    
    Further, since~$d=\Theta(h)$, since $h = \omega(\sqrt[3]{\log_q ab})$, and since~$M\ge ab$, it follows that $$0\le \lim_{M\rightarrow\infty}\frac{\log_q(ab)}{d^3}\le \lim_{M\to \infty}\frac{\log_q(M)}{d^3}= 0,$$ 
    therefore $\lim_{M\rightarrow\infty}\frac{\log_q(ab)}{d^2}=0$.

     Now, recall that encoding requires input of length~$k=(\frac{ab}{24}-1)(d^3-8-\log_q(\frac{ab}{24}-1))$.
    Therefore, the resulting rate is
    \begin{equation}\label{equation:rate calc for 3D}
        \frac{((\frac{ab}{24}-1)(d^3-8-\log_q(\frac{ab}{24}-1))}{M}.
    \end{equation}
    With~\eqref{equation: bound for M in 3D}, the rate is at least 
    $$\frac{((\frac{ab}{24}-1)(d^3-8-\log_q(\frac{ab}{24}-1))}{((3d-1)^2((6ab+1)d+d-1)}$$ which is equal to
    $$\frac{
1 - \frac{24}{ab} - \frac{8}{d^3} - \frac{\log_q\left(\frac{ab}{24} - 1\right)}{d^3} + \frac{192}{ab d^3} + \frac{24 \log_q\left(\frac{ab}{24} - 1\right)}{ab d^3}
}{
1296 + \frac{432}{ab} - \frac{864}{d} - \frac{504}{ab d} + \frac{144}{d^2} + \frac{192}{ab d^2} - \frac{24}{ab d^3}
}
.$$
Thus, the rate is at least $$\frac{
1 - \frac{24}{ab} - \frac{8}{d^3} - \frac{\log_q\left(\frac{ab}{24} - 1\right)}{d^3} 
}{
1296 + \frac{432}{ab} + \frac{144}{d^2} + \frac{192}{ab d^2} 
}
$$ and since $ab,d\overset{M\to\infty}{\to}\infty$ and $\log_q(ab)/d^3\xrightarrow[]{}0$, it follows that the rate approaches $\frac{1}{1296}$.
\end{proof}

A proof mirroring that of Theorem~\ref{thm: random encoding} shows that  there exist codes of rates approaching~$1$. 
For brevity, we omit the details. 
\begin{remark}
   When $M,h$ are given such that we can find $m,d,a,b$ with $M=(3d-1)^2((ab+1)d+d-1)$, 
   a proof similar to that of Theorem~\ref{theorem:rate for 3D} shows that the rate of the~$(q,N,M,h)$-forensic code in Section~\ref{section:encoding for 3D} approaches~$\frac{1}{216}$; this follows since~$M$ would be equal to its lower bound in~\eqref{equation: bound for M in 3D} and since~$ab$ would be equal to its upper bound in~\eqref{equation: bound for ab in 3D}, and hence a more accurate bound on the rate would be achieved in~\eqref{equation:rate calc for 3D}.
   Table~\ref{table:3D} presents some examples.
\end{remark}
\begin{table}[h]
\caption{Example values of $M,h,a,b$, and the resulting code rate (converging toward $1/216$).}\label{table:3D}
\centering
\begin{tabular}{r r r r r}
\toprule
$M$ & $h$ & $a$ & $b$ & \textbf{rate} \\
\midrule
209{,}935      & 11  & 16  & 27  & 0.004203745 \\
425{,}124      & 14  & 16  & 27  & 0.004515184 \\
752{,}267      & 17  & 16  & 27  & 0.004608089 \\
1{,}836{,}159  & 23  & 16  & 27  & 0.004628419 \\
2{,}639{,}780  & 26  & 16  & 27  & 0.004616867 \\
5{,}082{,}084  & 14  & 64  & 81  & 0.004621950 \\
\bottomrule
\end{tabular}

\end{table}

\section{Error-resilient forensic coding for two dimensional information}\label{section:error correction}
\subsection{Problem Definition}
With a $(q, n, M, h)$-forensic code, the decoder can reconstruct the message given any legal fragment.
Nevertheless, various noise patterns may occur in practical settings; for instance, some bits may inadvertently flip from~$0$ to~$1$ or vice versa, a phenomenon we refer to as \textit{bit-flips}.
Bit-flips might emerge in 3D-printing technology as a result of imprecise printing hardware, or noisy reading mechanisms (see~\cite{wang2024secureinformationembeddingextraction}).
Moreover, depending on the particular method of bit embedding, the adversary might intentionally induce bit-flips in the fragment. 
Therefore, we aim for worst-case guarantees in the form of an additional parameter~$\delta$, which quantifies the maximum number of bit-flips allowed in the entire codeword.

Clearly, in the worst-case all~$\delta$ flips are concentrated in the fragment received by the decoder. 
Our goal is constructing encoding and decoding functions which enable the decoder to retrieve the message in all cases where the received fragment contains a rectangle of area at least~$M$ with sides of length at least~$h$, where the adversary can flip any~$\delta$ bits in the entire codeword.
We call such codes $(q, n, M, h, \delta)$-forensic; for simplicity we focus on the case~$q=2$, though a generalization to all~$q$ is possible.

Handling~$\delta$ bit-flips is challenging since these flips might interfere with the synchronization process.
To address this challenge we employ recent advances in coding for the recently defined \textit{sliced channel model}~\cite{sima2023robustindexingslicedchannel} (described next), as well as a specialized form of the synchronization process discussed earlier.







\subsection{The Sliced Channel}
Motivated by applications in DNA storage, the \textit{sliced channel model} has been studied in multiple papers~\cite{sima2021coding,sima2023robustindexingslicedchannel,wang2024breakresilientcodesforensic3d,tornpaper}.
For parameters~$Q$ and~$L$, in this channel
data is represented as a binary string and encoded into an unordered set of~$Q$ strings, each of length~$L$.
The decoder aims to reconstruct the original data from erroneous versions of these~$Q$ strings, which may be corrupted by up to~$K$ substitutions overall, where~$K$ is an additional parameter.
Specifically, the following theorem presents an explicit code construction for the sliced channel, capable of correcting any~$K$ substitutions with asymptotically optimal redundancy.

\begin{theorem}\cite{sima2023robustindexingslicedchannel}\label{Theorem: cited sliced channel}
For integers $Q$, $L$, $K$, and $L' \triangleq 3 \log Q + 4K^2 + 2$, if $L' + 4KL' + 2K \log(4KL') \leq L$, there exists an explicit binary $K$-substitution correcting code for the sliced channel, computable in $\operatorname{poly}(Q, L, K)$ time, that has  
$$ 2K \log QL + (12K + 2) \log Q + O(K^3) + O(K \log \log QL) $$
redundant bits.
\end{theorem}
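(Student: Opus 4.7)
The plan is to construct an explicit sliced-channel $K$-substitution correcting code using the robust-indexing paradigm: each of the $Q$ length-$L$ slices is structured as an index region followed by a data-and-parity region, designed so that the decoder can first recover the correct index of every slice and then apply an outer substitution-correcting code to the concatenated data.

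First I would design the per-slice indexing region. Each slice begins with a length-$L'$ field encoding its position in $\zrange{Q-1}$, where $L' = 3\log Q + 4K^2 + 2$ corresponds to an inner $K$-substitution correcting code (e.g.\ a shortened binary BCH code with $\log Q$ message bits and minimum distance $2K+1$), whose $\Theta(K^2+K\log\log Q)$ redundancy matches the $4K^2$ term of $L'$. To survive up to $K$ flips concentrated inside a single slice's index region, I would replicate this encoded index $4K+1$ times and decode by a majority/list rule; the additional $2K\log(4KL')$ bits appearing in the hypothesis $L' + 4KL' + 2K\log(4KL') \le L$ would serve as a small synchronization header that delimits and anchors the replicas within the slice.

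The second component is the outer parity. Once the indexing recovers the correct position of each slice, the concatenation of all data portions (sorted by index) forms a single binary string of length $\Theta(QL)$, and I would append the binary image of a Reed--Solomon code over a field of size $\Theta(QL)$ with designed distance $2K+1$, contributing $2K\log(QL) + O(K\log\log(QL))$ redundant bits. The reason for correcting $2K$ substitutions rather than $K$ is that a single bit-flip inside an index region can, from the outer decoder's perspective, look like \emph{two} substitutions---one at the true position whose index was destroyed, and one at the position whose index it was mistaken for. Distributing this outer parity uniformly across the $Q$ slices contributes the $(12K+2)\log Q$ term in the claimed redundancy.

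The main obstacle will be the robust-indexing analysis: bounding how errors inside the inner decoder propagate to the outer code. I would argue by case analysis on the distribution of the $K$ substitutions between index and data regions, showing that after majority decoding of the index fields the reconstructed multiset of $(\textrm{index},\textrm{data})$ pairs differs from the true multiset in at most $2K$ coordinates, since each index-region error can delete at most one legitimate index and introduce at most one spurious collision, while each data-region error is absorbed directly by the outer code. Combining this $2K$-error bound with the distance of the Reed--Solomon code yields exact recovery, and summing the contributions---inner codes aggregating to $O(K^3)$ across slices, synchronization headers contributing $O(K\log\log(QL))$, and the outer parity contributing $2K\log(QL)+(12K+2)\log Q$---reproduces the claimed bound.
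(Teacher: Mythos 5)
This theorem is not proved in the paper under review; it is stated verbatim with a citation to \cite{sima2023robustindexingslicedchannel} and used as a black box, so there is no internal proof against which to compare your attempt.

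Judged on its own, your sketch captures the correct high-level template for the sliced channel---robustly encode each slice's index, recover indices first, then correct the payload with an outer code---but the bookkeeping does not clearly produce the stated redundancy. Two concrete gaps stand out. First, you attribute the $4K^2$ term in $L'$ to the redundancy of an inner BCH code on a $\log Q$-bit message, but such a $K$-substitution-correcting code needs only $O(K\log\log Q)$ redundancy, so the $4K^2$ term must arise from some other structural constraint on the index fields (in the cited work it does not come from BCH parity), and your proposal does not say what that is. Second, the step from ``one index-region error can look like two outer substitutions'' to ``an outer code of designed distance $2K+1$ suffices'' requires a careful multiset argument showing that $K$ adversarial flips, however split between index and payload regions, corrupt the recovered (index, payload) list in at most $2K$ coordinates; your sketch asserts this conclusion rather than establishing it, and this is precisely where the bulk of the technical work in the cited construction lies. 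As a reverse-engineering outline your proposal is sensible, but it does not constitute a proof of the theorem with the stated constants.
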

In what follows we employ Theorem~\ref{Theorem: cited sliced channel} to introduce a $(2, n, M, h,\delta)$-forensic code with $\delta=ch^{2/3}$ and $M=ch^22^{h^{4/3}}$ where $c$ is some small constant stemming from Lemma~\ref{lemma:parametrsAreGood} which follows.

There are two challenges associated with constructing bit-flip resilient forensic codes, depending on the locations of substitution errors: (I) bit-flips introduced into the alignment bits (i.e., the ``dead'' bits that facilitate the identification of complete units), and (II) bit-flips introduced into the information bits.
To address (I) we add more alignment bits until the Hamming distance between the erroneous and correct versions of units is large enough to allow bit-flip correction.
To address (II) we use Theorem~\ref{Theorem: cited sliced channel} to encode the message bits. 

\subsection{Encoding}\label{section:ECencoding}

Recall that in zero square identification encoding (Section~\ref{section:encoding for 2D}), we begin by partitioning the $n\times n$ codeword into smaller $d\times d$ units, where \( d \) is the largest integer such that \( h \geq 3d-1 \), and let \( m \) be the largest integer power of~$2$ such that \( M \geq (4d-1)((m+1)d+d-1) \);
these parameters are chosen to ensure that each fragment will contain at least $m$ units.
Then, the~$\frac{n}{d}\times \frac{n}{d}$ grid of units is colored in~$m'\triangleq\frac{m}{4}$ colors according to the coloring function in~\eqref{coloring function}.


To construct a $(2, n, M, h, \delta)$-forensic code with $\delta=ch^{2/3}$ and $M=h^22^{ch^{4/3}}$ where $c$ is some small enough constant, we apply Theorem~\ref{Theorem: cited sliced channel} with the parameters
\begin{align}\label{equation:parameters}
    Q&=m'-1,\nonumber\\
    L&= d^2-(4d-4), \mbox{ and}\nonumber\\
    K&=\delta.
\end{align}

\begin{lemma}\label{lemma:parametrsAreGood}
The parameters in~\eqref{equation:parameters} satisfy the requirements in Theorem~\ref{Theorem: cited sliced channel}.    
\end{lemma}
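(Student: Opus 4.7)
The plan is to substitute the parameters from~\eqref{equation:parameters} into the requirement
\[
L' + 4KL' + 2K\log(4KL') \le L
\]
of Theorem~\ref{Theorem: cited sliced channel}, and verify it asymptotically in $h$ provided that the constant $c$ in the definitions of $\delta$ and $M$ is chosen sufficiently small.

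First I would translate $Q,L,K$ into quantities controlled by $h$ and $c$. Because $d$ is the largest integer with $h\ge 3d-1$, we have $d = h/3 + O(1)$, hence
\[
L = d^{2}-(4d-4) \;\ge\; \tfrac{h^{2}}{9} - O(h).
\]
Because $m$ is the largest power of two satisfying $M \ge (4d-1)((m+1)d + d-1)$, the estimate $m = \Theta(M/d^{2})$ together with the chosen $M$ gives $\log Q = \log(m'-1) = \Theta(h^{4/3})$, with an explicit leading constant (depending on how $c$ enters $M$).

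Next I would estimate each summand on the left-hand side. With $K = \delta = ch^{2/3}$ we get $4K^{2} = 4c^{2}h^{4/3}$, so $L' = 3\log Q + 4K^{2} + 2 = \Theta(h^{4/3})$ with leading constant polynomial in $c$. The dominant contribution is therefore $4KL' = \Theta(h^{2/3})\cdot\Theta(h^{4/3}) = \Theta(h^{2})$ whose leading coefficient is of the form $c\cdot p(c)$ for a fixed polynomial $p$. The remaining term $2K\log(4KL') = O(h^{2/3}\log h)$ is of strictly lower order and will be absorbed by the slack in $L$.

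The final step is to compare leading constants. The inequality reduces to one of the form $\alpha(c)\cdot h^{2} + o(h^{2}) \le h^{2}/9 - O(h)$, where $\alpha(c) \to 0$ as $c\to 0$. Choosing $c$ so that $\alpha(c) < 1/9$ makes the inequality hold for every sufficiently large $h$. The main obstacle is not conceptual but purely bookkeeping of constants across the three terms of~$L'$; the asymptotic structure is benign since both sides of the inequality scale as $\Theta(h^{2})$ while the nontrivial leading coefficient on the left is tunable through $c$.
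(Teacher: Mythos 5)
Your proposal follows the same approach as the paper's proof: express $Q$, $L$, $K$ in terms of $h$ (equivalently $d$, since $d=\Theta(h)$) and $c$, note that $L'=\Theta(h^{4/3})$ with a coefficient vanishing as $c\to 0$, observe that the dominant term $4KL'=\Theta(h^{2})$ has a tunable leading coefficient of the form $c\cdot p(c)$, and conclude the inequality $L'+4KL'+2K\log(4KL')\le L$ holds for sufficiently small $c$ and large $h$ since $L\sim d^2\sim h^2/9$. The paper carries out the same bookkeeping explicitly (e.g.\ $\log Q\le 3cd^{4/3}$, $L'\le 10cd^{4/3}$, $4\delta L'\le 120c^2 d^2$) rather than at the $\Theta$-level, but the argument is identical.
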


\begin{proof}
See Appendix~\ref{section:omittedProofs}, Lemma~\ref{lemma:parametrsAreGoodappendix}.
\end{proof}
Let~$k$ be the base-$2$ logarithm of the size of the code implied by Lemma~\ref{lemma:parametrsAreGood}, and let~$\operatorname{Enc}:\{0,1\}^k\to \binom{\{0,1\}^L}{M}$ and~$\operatorname{Dec}:\binom{\{0,1\}^L}{M}\to\{0,1\}^k$ be the respective encoding and decoding functions.

We begin the encoding mechanism for the~$(2,n,M,h,\delta)$ forensic code by sending the input information$\boldx$ to $\operatorname{Enc}$ and receiving an unordered set~$\{\boldx_1,\ldots,\boldx_M\}$ of~$M$ strings of length~$L$ each. 
Each string is then mapped a~$d\times d$ unit using a function $g_2$ as follows. 
The function $g_2:\Sigma_2^L\xrightarrow[]{}\Sigma_2^{d\times d}$ sets all boundary bits of the output (i.e., top row, bottom row, leftmost column, and rightmost column) to~$1$, and the remaining $d^2-(4d-4)$ bits of the output contain the~$L$ bits of the input at some fixed arbitrary order.
We then let~$\boldX_i\triangleq g_2(\boldx_i)\in\Sigma_2^{d\times d}$ for each~$i\in\{1,\ldots,m'-1\}$, and embed~$\boldX_i$ into the $n\times n$ codeword according to a process identical to the one in Section~\ref{section:encoding for 2D}, where~$\boldX_i$ is embedded in all the~$\frac{n}{d}\times \frac{n}{d}$ grid points colored by~$i$, and 
all entries of all units of color~$0$ are set to~$0$.

\subsection{Decoding}\label{Section: error correction decoding}
Given a legal fragment taken from a codeword $C(\boldx)$ which was corrupted by at most~$\delta$ bit-flips, the decoder first identifies all the units contained in the fragment by aligning them against $C(\boldx)$. 
Specifically, the decoder identifies the boundaries by finding the lowest Hamming weight $d\times d$ square in the fragment.
It remains to prove that this lowest Hamming weight square aligns with a~$0$-colored unit in~$C(\boldx)$.
This is done in Lemma~\ref{lemma:error correcting uniquely identify color 0} below.
Having identified the boundaries for a $0$-colored unit, the decoder identifies the boundaries of all other units.
Since the fragment contains units of all colors, as proved in Section~\ref{section:decoding for 2D}, at least one (potentially corrupted) unit~$Y_i$ of color~$i$ can be found for all $i\in \{1,2,\ldots, m'-1\}$, and then the sliced-channel decoder is used to correct bit-flips.

Specifically, the decoder applies~$g_2^{-1}$ on each~$\boldY_i$, and later applies~$\operatorname{Dec}$  on the resulting set. 
That is, the output of the decoding algorithm is~$\hat{\boldx}=\operatorname{Dec}(\{g_2^{-1}(\boldY_i)\}_{i=1}^{m'-1})$, where each~$\boldY_i$ is a unit of color~$i$.
We now turn to prove that the decoding algorithm is well-defined, that alignment succeeds, and that~$\hat{\boldx}=\boldx$.

\begin{lemma}\label{lemma:error correcting uniquely identify color 0}
    The lowest Hamming weight~$d\times d$ square in the fragment aligns with a~$0$-colored square in~$C(\boldx)$.
\end{lemma}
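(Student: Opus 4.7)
The plan is to compare Hamming weights of $d\times d$ squares in the fragment after the $\delta$ bit-flips. First I would note that, since the alignment properties of the underlying coloring are unaffected by the error-correction modification of $g_2$, the fragment is still guaranteed to contain an aligned $0$-colored unit by Lemma~\ref{lemma s=m'}. That unit began as all zeros, and since at most $\delta$ bit-flips occur in the entire codeword, its post-error Hamming weight is at most $\delta$.

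Next I would show that every \emph{other} $d\times d$ square inside the fragment has pre-error Hamming weight at least $d$. This splits into two cases. If the square is aligned with a unit of color other than $0$, then by the construction of $g_2$ in Section~\ref{section:ECencoding} its entire boundary (top row, bottom row, leftmost and rightmost columns) is set to $1$, giving $4d-4 \geq d$ ones for $d \geq 2$. If the square is misaligned, it straddles a $2\times 2$ block of units (boundary cases are similar). Assume for concreteness that it is vertically misaligned by an offset $0 < i < d$; then the square fully contains a length-$d$ horizontal segment of both the bottom row of the unit(s) immediately above and the top row of the unit(s) immediately below. By the coloring function~\eqref{coloring function}, edge-adjacent units receive distinct colors, so at most two of the four units in the $2\times 2$ block are $0$-colored, and if two are then they lie on a diagonal. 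I would then split on which (if any) diagonal pair is $0$-colored and add, for each diagonal case, the contribution of length $j$ from one non-$0$ unit's bottom row and of length $d-j$ from the other non-$0$ unit's top row (where $j$ is the horizontal offset), obtaining exactly $d$ ones. In the sub-cases with at most one $0$-colored unit among the four, the count is only larger. The purely horizontally misaligned case is handled symmetrically by considering the leftmost/rightmost boundary columns.

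Combining these two cases, every $d\times d$ square that is not aligned with a $0$-colored unit has pre-error weight at least $d$, hence post-error weight at least $d-\delta$. Under the parameter regime $\delta = ch^{2/3}$ and $d = \Theta(h)$ we have $\delta < d/2$ for all sufficiently large $h$, so the aligned $0$-colored unit (weight $\leq \delta$) is strictly lighter than any alternative (weight $\geq d-\delta > \delta$). The decoder therefore identifies the minimum-weight square correctly.

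The main obstacle is the bookkeeping in the misaligned case when the two $0$-colored units lie on the anti-diagonal of the $2\times 2$ block: one must show that the contributions of length $j$ and $d-j$ come from the bottom rows of two \emph{different} non-$0$ units at different grid positions, and add up to exactly $d$ uniformly in the offset $j$. No new ideas beyond the all-$1$ boundary of non-$0$ units and the distinctness of edge-adjacent colors are required; the remaining steps reduce to elementary arithmetic on the chosen parameters.
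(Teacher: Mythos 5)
Your proof is correct and follows essentially the same route as the paper's: compare the post-error weight of the aligned $0$-colored unit (at most $\delta$) against every other $d\times d$ window, showing aligned non-$0$ units have pre-error weight $\geq 4d-4$ and misaligned windows have pre-error weight $\geq d$, then conclude via $\delta < d/2$. Your $2\times 2$-block case analysis (using that edge-adjacent units get distinct colors, so any $0$-colored pair in the block must lie on a diagonal, and the $j$ and $d-j$ boundary segments always sum to $d$) fills in a step the paper asserts only tersely, but it is the same argument, not a different approach.
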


\begin{proof}
    Recall that $\delta=ch^{2/3}$ for some small constant~$c$, and that~$d$ is the largest integer such that $h\geq 3d-1$, hence $d\geq\lfloor\frac{h}{3}\rfloor-1$ and $\delta\leq 3cd^{2/3}$,
    i.e., at most~$3cd^{2/3}$ bit-flips occurred.
    The $0$-colored unit is a~$d\times d$ zero square, while units of all other colors have Hamming weight at least~$4d-4$ due to the $1$'s on their sides (see Section~\ref{section:ECencoding}).
    Hence, the minimum Hamming distance between a~$0$-colored unit and a unit of any other color is at least~$4d-4$.
    We now show that the minimum Hamming-weight square in the fragment aligns with a~$0$-colored unit.
    First, if the minimum Hamming-weight square aligns with the boundaries of some unit of color other than~$0$, we arrive at a contradiction to the above statement regarding the minimum Hamming distance between units.
    That is, since~$3cd^{2/3}<\lfloor \frac{4d-4}{2} \rfloor$ for sufficiently large~$d$, no pattern of at most~$3cd^{2/3}$ bit-flips can cause a~$0$-colored unit to have higher Hamming weight than a non-$0$ colored unit.
    Second, if the minimum Hamming-weight square does not align with the boundaries of any unit, it must intersect at least two units. As the $1$'s are aligned on the boundary of each unit, the minimum Hamming-weight square intersects with the boundaries of different units and thus contains at least $d$ entries with~$1$'s .
    Again, this leads to a contradiction to the minimum distance property of units, since for large enough~$d$, no pattern of at most~$3cd^{2/3}$ bit-flips can cause a~$0$-colored unit to have Hamming weight higher than~$\lfloor d/2 \rfloor$.   
\end{proof}
After the boundaries of units with color $0$ are identified, we identify the boundaries of units of all other colors.
Thus, the function $g_2^{-1}:\Sigma_q^{d\times d}\rightarrow\Sigma_q^{(d-1)\times (d-1)}$ removes the boundaries from the input square, and vectorizes the result. 
The resulting vectors are given as input to the function~$\operatorname{Dec}$ from~\cite{sima2024robust}. 
Since there are at most~$\delta$ bit-flips in these vectors, Theorem~\ref{Theorem: cited sliced channel} guarantees correct decoding.

 \subsection{Bounds and rates}\label{section:bounds and rates for ecc}

In this section we provide necessary and sufficient bounds for a $(q, n, M, h,\delta)$-forensic code to exist. 
We begin with the observation that a $(q, n, M, h,\delta)$-forensic code needs to have large Hamming distance and then apply the sphere packing bound.

\begin{theorem}\label{thm: necessary bounds for error correcting}
    The rate of a $(q, n, M, h,\delta)$-forensic code cannot exceed $\frac{M-\log(\sum_{i=0}^{\delta}\binom{M}{i}(q-1)^i)}{M}$.
\end{theorem}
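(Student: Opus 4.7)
The plan is to reduce this bound to the classical Hamming (sphere-packing) bound for $q$-ary block codes of length $M$. The intuition is that, since the decoder must recover the message from just one legal fragment of area $M$ possibly corrupted by up to $\delta$ bit-flips, the collection of $M$-symbol strings obtained by restricting all codewords to a single fixed fragment must itself form a $\delta$-error-correcting block code, to which sphere packing applies directly.

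Concretely, I would first fix integers $a,b$ with $ab\ge M$ and $\min\{a,b\}\ge h$ (such a pair necessarily exists whenever the forensic parameters under consideration admit any non-trivial code), together with a specific axis-parallel rectangle $R$ of area exactly $M$ inside the $n\times n$ codeword grid. For each message $\boldx\in\Sigma_q^k$ let $F(\boldx)\in\Sigma_q^{M}$ denote the content of $R$ inside the codeword $C(\boldx)$. I would then argue that the adversary is always free to reveal the fragment $R$ and to place all $\delta$ bit-flips inside it; hence, correct decoding under this particular adversarial strategy forces the images $\{F(\boldx):\boldx\in\Sigma_q^k\}$ to be pairwise at Hamming distance at least $2\delta+1$. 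Indeed, if some two distinct messages $\boldx_1\ne\boldx_2$ gave fragments within Hamming distance $2\delta$, one could exhibit error patterns $e_1,e_2$ of weight at most $\delta$ such that $F(\boldx_1)+e_1=F(\boldx_2)+e_2$, producing an unresolvable decoding ambiguity. Invoking the Hamming sphere-packing bound on the resulting $M$-length code of minimum distance $2\delta+1$ then gives
\[
q^{k}\;\le\;\frac{q^{M}}{\sum_{i=0}^{\delta}\binom{M}{i}(q-1)^{i}},
\]
and taking $\log_q$ of both sides and dividing by $M$ yields the claimed rate bound.

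There is no serious obstacle in this argument; the only step worth checking carefully is the legitimacy of restricting to an adversary that always discloses the same rectangle $R$ and concentrates all of its bit-flip budget inside $R$. This is immediate from the problem model, since the adversary independently chooses both the revealed fragment and the placement of its $\delta$ flips over the entire codeword, and in particular may choose to reveal $R$ and place every flip within it. Everything else is a direct application of the standard sphere-packing argument.
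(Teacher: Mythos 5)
Your argument is essentially the same as the paper's: both fix an axis-parallel rectangle $R$ of area $M$ with side lengths at least $h$, observe that the restrictions of codewords to $R$ must form a $q$-ary block code of length $M$ and minimum Hamming distance at least $2\delta+1$ (since the adversary may reveal $R$ and concentrate all $\delta$ flips inside it), and then invoke the sphere-packing bound. The only cosmetic difference is that the paper first establishes injectivity of the restriction map as a separate step before applying sphere packing, whereas you fold that into the minimum-distance claim — a minor streamlining, not a different route.
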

\begin{proof}
    Let $C$ be any $(q,n,M,h,\delta)$-forensic code which can correct $\delta$ bit-flips, and let~$\mu$ be an integer such that such that
    $\lfloor\frac{\mu-1}{2}\rfloor=\delta$. 
    Let~$R$ be any rectangle inside~$\zrange{n}\times \zrange{n}$ of area~$M$ and side length at least~$h$ (e.g., the top-left one), and consider the code~$C_R$ given by restricting all codewords of~$C$ to~$R$; clearly~$|C_R|\le |C|$.

    If~$|C_R|<|C|$, it follows that there exist two codewords~$\boldX,\boldY\in C$ which coincide on~$R$, contradicting the forensic capabilities of~$R$ (i.e., $\boldX,\boldY$ restricted to~$R$ are equal legal fragments, and hence~$\boldX$ and~$\boldY$ are indistinguishable under certain permitted adversarial noise). 
    Therefore, it follow that $|C_R|=|C|$. 

    Now, suppose for contradiction that $|C|>\frac{q^M}{\sum_{i=0}^{\delta}\binom{M}{i}(q-1)^i}$, i.e., that~$C$ violates the sphere-packing bound for codes of length~$M$ and minimum distance~$\mu$.
    Due to this violation, there exists two codewords~$\boldX_R,\boldY_R\in C_R$ of Hamming distance at most~$\mu-1$.
    Hence, it follows that an adversary can flip at most~$\delta$ bits~$\boldX_R$ and at most~$\delta$ bits in~$\boldY_R$ so as to make them identical, again contradicting the fact that~$C$ is a $(q,n,M,h,\delta)$-forensic code.
    Hence, it follows that the rate of an $(q,n,M,h,\delta)$-forensic code is not greater than 
    \begin{align*}
        \frac{\log|C|}{M}&\le \frac{M-\log(\sum_{i=0}^{\delta}\binom{M}{i}(q-1)^i)}{M}. \tag*{\qedhere}
    \end{align*}    
\end{proof}
Theorem~\ref{thm: necessary bounds for error correcting} gives necessary bounds for any $(q,n,M,h,\delta)$-forensic code.
Our error-correcting encoding in Section~\ref{section:ECencoding} follows the same VDC-based method as Section~\ref{section:encoding for 2D}.
Since the error-correcting step only introduces redundancy, its achievable rate cannot exceed that of the error-free construction.
Table~\ref{tab:explicit-M-h-delta-rate} lists representative choices of $(M,h,\delta)$ that satisfy the requirements of Section~\ref{section:ECencoding} with relatively small $\delta$, together with the corresponding rates.

\begin{table}[h]
\centering
\caption{Example values of $M$, $h$, $\delta$ and rate (converging toward $1/32$).}
\label{tab:explicit-M-h-delta-rate}
\begin{tabular}{r r r r}
        \toprule
        $M$ & $h$ & $\delta$ & \textbf{rate} \\
        \midrule
        157\,260        & 142 & 1 & 0.012559 \\
        1\,299\,196     & 294 & 1 & 0.021219 \\
        7\,166\,400   & 262 & 1 & 0.031054 \\
        57\,923\,712  & 222 & 2 & 0.031097 \\
        1\,203\,775\,488 & 340 & 2 & 0.031150 \\
        4\,934\,680\,112 & 296 & 1 & 0.031238 \\
        \bottomrule
    \end{tabular}
\end{table}
To prove a corresponding existence bound, we make use of the Lovasz Local Lemma~\ref{lemma:lovasz local lemma}.

\begin{lemma}[Lovasz Local lemma]\label{lemma:lovasz local lemma}
    Let  $A_0, A_1, A_2, \ldots, A_k$ be a sequence of events such that each event occurs with probability at most $p$ and each event is independent of all others except for at most~$d$ of them.
    If $ep(d+1)\leq 1$, where~$e$ is the natural logarithm, then the probability that none of the $A_i$'s occur is positive. 
\end{lemma}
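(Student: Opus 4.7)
The plan is to follow the classical inductive proof of the symmetric Lov\'asz Local Lemma. The core step is to establish the auxiliary claim that for every subset $S \subseteq \{0,1,\ldots,k\}$ and every index $i \notin S$,
\[
\Pr\!\left[A_i \;\Big|\; \bigcap_{j \in S} \overline{A_j}\right] \;\le\; \frac{1}{d+1}.
\]
Once this is in hand, the chain rule immediately gives
\[
\Pr\!\left[\,\bigcap_{i=0}^{k}\overline{A_i}\,\right]
\;=\; \prod_{i=0}^{k} \Pr\!\left[\overline{A_i} \;\Big|\; \bigcap_{j<i}\overline{A_j}\right]
\;\ge\; \left(1 - \tfrac{1}{d+1}\right)^{k+1} \;>\; 0,
\]
which is precisely the desired conclusion.

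I would prove the auxiliary claim by induction on $|S|$. The base case $|S|=0$ is immediate from the hypothesis: $\Pr[A_i] \le p \le \tfrac{1}{e(d+1)} \le \tfrac{1}{d+1}$. For the inductive step, split $S = S_1 \cup S_2$ (disjoint), where $S_1 \subseteq S$ indexes those events on which $A_i$ does depend, so $|S_1|\le d$, and $S_2$ contains the remainder. Writing $E_t := \bigcap_{j\in S_t} \overline{A_j}$ for $t=1,2$, I would estimate
\[
\Pr[A_i \mid E_1 \cap E_2] \;=\; \frac{\Pr[A_i \cap E_1 \mid E_2]}{\Pr[E_1 \mid E_2]}
\]
by bounding numerator and denominator separately. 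The numerator is at most $\Pr[A_i \mid E_2] = \Pr[A_i] \le p$, where the equality uses independence of $A_i$ from the events indexed by $S_2$. For the denominator, expand $\Pr[E_1 \mid E_2]$ via the chain rule and apply the inductive hypothesis to each of the at most $d$ resulting conditional probabilities (each conditioned on a set strictly smaller than $S$), giving
\[
\Pr[E_1 \mid E_2] \;\ge\; \left(1 - \tfrac{1}{d+1}\right)^{|S_1|} \;\ge\; \left(1 - \tfrac{1}{d+1}\right)^{d} \;\ge\; \tfrac{1}{e},
\]
where the last step is the standard calculus estimate $(1-1/(d+1))^{d} \ge 1/e$. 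Combining the two bounds yields $\Pr[A_i \mid E_1 \cap E_2] \le e p \le \tfrac{1}{d+1}$, closing the induction.

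The main obstacle is the bookkeeping in the inductive step: one must verify that every conditional probability produced by the chain-rule expansion of $\Pr[E_1 \mid E_2]$ is conditioned on a set strictly smaller than $S$, so that the inductive hypothesis legitimately applies, and one must invoke the mutual independence of $A_i$ from the events in $S_2$ to replace $\Pr[A_i \mid E_2]$ by $\Pr[A_i]$. The constant $e$ appearing in the hypothesis $ep(d+1) \le 1$ arises precisely from the estimate $(1-1/(d+1))^d \ge 1/e$; any alternative inductive scheme would have to recover this factor in some equivalent form.
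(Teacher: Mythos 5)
The paper states the Lov\'asz Local Lemma without proof --- it is used as a classical black-box result --- so there is no paper proof to compare against. Your argument is the standard inductive proof of the symmetric LLL (essentially as in Alon--Spencer): the auxiliary bound $\Pr[A_i \mid \bigcap_{j\in S}\overline{A_j}] \le 1/(d+1)$ by strong induction on $|S|$, splitting $S$ into the at most $d$ dependent indices $S_1$ and the independent indices $S_2$, bounding the numerator by mutual independence and the denominator by the chain rule together with the inductive hypothesis and $(1-1/(d+1))^d \ge 1/e$, then finishing with a chain-rule product over all $k+1$ events. This is correct. Two small remarks: the hypothesis ``independent of all others except for at most $d$'' must be read as \emph{mutual} independence of $A_i$ from the family $\{A_j : j \in S_2\}$ (not merely pairwise), which you implicitly and correctly use when replacing $\Pr[A_i \mid E_2]$ by $\Pr[A_i]$; and the chain-rule expansions tacitly require the conditioning events to have positive probability, which is itself furnished by the same induction --- a routine point in complete write-ups but worth flagging.
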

\begin{theorem}\label{thm: random encoding}
  There exists a $(q, n, M, h,\delta)$-forensic code with rate at least $$\frac{M-2\log_q(M^{1.5}n)-\log_q(\sum_{i=0}^{\delta}\binom{M}{i}(q-1)^i)}{M}.$$
\end{theorem}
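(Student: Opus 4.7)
The plan is to establish existence via a Gilbert--Varshamov-style random construction sharpened by the Lovasz Local Lemma (LLL, Lemma~\ref{lemma:lovasz local lemma}). I would sample $T$ codewords $c_1,\ldots,c_T$ independently and uniformly from $\Sigma_q^{n\times n}$ and argue that with positive probability the resulting code is a valid $(q,n,M,h,\delta)$-forensic code. For every pair of distinct indices $i\ne j$ and every ordered pair of same-shape legal rectangles $(R_1,R_2)$ inside $[0,n]\times[0,n]$, define the bad event
\[
B_{i,j,R_1,R_2}\;\triangleq\;\bigl\{\,d\bigl(c_i|_{R_1},\,c_j|_{R_2}\bigr)\le \delta\,\bigr\}.
\]
The absence of all such events ensures that no two codewords produce indistinguishable legal fragments under any admissible bit-flip pattern, which is precisely the property required by the decoder in Section~\ref{Section: error correction decoding}.

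To bound $\Pr[B_{i,j,R_1,R_2}]$, note that since $c_i,c_j$ are independent and uniform, $c_i|_{R_1}$ and $c_j|_{R_2}$ are independent uniform strings in $\Sigma_q^L$ with $L=|R_1|=|R_2|\ge M$, so
\[
\Pr[B_{i,j,R_1,R_2}]=\frac{\sum_{\ell=0}^{\delta}\binom{L}{\ell}(q-1)^{\ell}}{q^L}\;\le\;\frac{\sum_{\ell=0}^{\delta}\binom{M}{\ell}(q-1)^{\ell}}{q^M}\;\triangleq\;p,
\]
using that the fractional Hamming-ball volume is non-increasing in the string length for fixed $\delta$. For the dependency structure, $B_{i,j,R_1,R_2}$ is measurable with respect to $(c_i,c_j)$ alone, hence mutually independent of every bad event whose codeword-index set is disjoint from $\{i,j\}$; thus it is dependent on at most $(2T-3)F$ others, where $F$ is the number of ordered pairs of same-shape legal fragments in $[0,n]\times[0,n]$. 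A careful enumeration over shapes $(a,b)$ satisfying $ab\ge M$ and $\min\{a,b\}\ge h$, together with their positions, yields $F\le (M^{1.5}n)^2$ after exploiting the area constraint to tighten the naive estimate.

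Applying the LLL condition $e\cdot p\cdot 2TF\le 1$ permits taking $T=\bigl\lfloor q^M/\bigl(2eF\sum_{\ell=0}^{\delta}\binom{M}{\ell}(q-1)^{\ell}\bigr)\bigr\rfloor$, so some realization of the random construction yields a valid $(q,n,M,h,\delta)$-forensic code. Its rate satisfies
\[
\frac{\log_q T}{M}\;\ge\;\frac{M-\log_q F-\log_q\!\sum_{\ell=0}^{\delta}\binom{M}{\ell}(q-1)^{\ell}-O(1)}{M}\;\ge\;\frac{M-2\log_q(M^{1.5}n)-\log_q\!\sum_{\ell=0}^{\delta}\binom{M}{\ell}(q-1)^{\ell}}{M},
\]
matching the statement. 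The principal technical obstacle is the counting step bounding $F$: a direct enumeration over shapes and positions yields only $O(n^6)$, and one must exploit the area and side-length constraints defining legality to arrive at the sharper $(M^{1.5}n)^2$ bound.
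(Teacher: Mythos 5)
Your overall strategy---a random code, pairwise bad events measuring small Hamming distance between same-shape restrictions, and the Lov\'asz Local Lemma---is the same as the paper's. The gap lies in the dependency count, and it is not a technicality you can defer: it is exactly where your argument breaks.

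You declare that $B_{i,j,R_1,R_2}$ is dependent on at most $(2T-3)F$ other events, where $F$ counts \emph{all} ordered pairs of same-shape legal fragments in the $n\times n$ grid, and then assert that $F\le (M^{1.5}n)^2=M^3 n^2$. That bound is false in general. Restricting to shapes $(a,b)$ with $ab=M$, there are $\Theta(\sqrt{M})$ shapes and, for each, up to $(n-a+1)(n-b+1)=\Theta(n^2)$ placements, so $F=\Theta(\sqrt{M}\,n^4)$. Comparing with your claimed $M^3n^2$ requires $n^2\le M^{2.5}/2$; but $n$ can be arbitrarily large relative to $M$ (a small fragment inside a huge codeword is the motivating case), so no enumeration of shapes and positions alone can rescue the bound. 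Consequently the rate your argument actually yields is of the form $\bigl(M-\tfrac12\log_q M-4\log_q n-\log_q\sum_{i}\binom{M}{i}(q-1)^i-O(1)\bigr)/M$, with a $4\log_q n$ penalty, which is strictly weaker than the claimed $2\log_q n$ when $n$ dominates $M$. The paper avoids this by exploiting a strictly finer independence structure than you use: two bad events that share a codeword index but whose fragments in that shared codeword are \emph{spatially disjoint} are still independent, since they depend on disjoint coordinate sets of the same random matrix. With the shared fragment fixed, the paper therefore only counts placements of an intersecting fragment, which live in an $O(M)\times O(M)$ window ($9M^2$ positions), while the unshared side ranges freely over $n^2(|C|-1)$ placements and $l\le 2\sqrt{M}$ shapes. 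This yields $\kappa\le 2\cdot 9M^2\cdot l\cdot n^2\cdot(|C|-1)$ and in turn the $2\log_q(M^{c}n)$ cost with $c$ a small constant. So the needed idea you are missing is the position-based (not merely index-based) independence, which lets you replace an $n^2$ factor by an $O(M^2)$ factor.

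One smaller point: you define the bad event as Hamming distance at most $\delta$; for unique decoding of a fragment that has suffered up to $\delta$ substitutions against an alternative codeword whose fragment may also be shifted by up to $\delta$ substitutions, the correct separation threshold is $2\delta$. The paper's prose defines the event with $2\delta$ (and then somewhat inconsistently plugs $\delta$ into the Hamming-ball count, which is the source of the $\sum_{i=0}^{\delta}$ appearing in the theorem), so your numbers match the theorem statement, but you should not justify the $\le\delta$ threshold by ``ensures that no two codewords produce indistinguishable legal fragments under any admissible bit-flip pattern''---that claim requires the $2\delta$ threshold.
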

\begin{proof}

     Let~$C$ be a uniformly random code over $\Sigma_q$ with codewords $\boldX_1, \boldX_2, ..., \boldX_{|C|}$. 
     We compute the probability of~$C$ being a $(q, n, M, h,\delta)$-forensic code, and then derive a rate bound by which this probability is positive.
     
    Recall that a legal fragment contains an axis-parallel rectangle of area $M=a\times b$ and the side lengths~$a,b$ satisfy $\min\{a,b\}\ge h$.
    Also, observe that there are at most~$2\sqrt{M}$ pairs~$(a,b)$ such that~$ab=M$, and let $S\triangleq \{S_1, S_2, \ldots, S_l\}$ with $l\leq 2\sqrt{M}$ be the set containing all such $(a,b)$ pairs, 
    ordered lexicographically. 

    To utilize Lemma~\ref{lemma:lovasz local lemma}, we define a collection of events and show that the probability of none of them happening is positive, implying the existence of a certain forensic code~$C$.
    A code is not a valid $(q, n, M, h, \delta)$-forensic code if there exist two codewords $\boldX_a \ne \boldX_b$ and two axis-aligned rectangular fragments of area $M$ of the same shape, such that the content of the two fragments differs in at most $2\delta$ entries, i.e. the Hamming distance between them is at most $2\delta$.    

For every two distinct codewords $\boldX_a, \boldX_b$, every shape $S_u \in S$, and every $(\zeta, \eta),(\mu, \nu)\in \llbracket n-1 \rrbracket^2$
let
\[
A(S_u, \boldX_a^{\zeta,\eta},\boldX_b^{\mu,\nu})
\]
be the ``bad'' event that the fragment of shape~$S_u$ from $\boldX_a$ whose top-left corner is~$(\zeta,\eta)$ is of Hamming distance at most~$2\delta$ from the fragment of shape~$S_u$ from $\boldX_b$ whose top left corner is $(\mu,\nu)$.
By a known formula for the size of the Hamming ball, we have that
\begin{align}\label{equation:Pr}
\Pr(A(S_u, \boldX_a^{\zeta,\eta},\boldX_b^{\mu,\nu})) = \sum_{i=0}^\delta \binom{M}{i} \cdot \frac{(q-1)^i}{q^M}.
\end{align}
for all $a\ne b,u,\zeta,\eta,\mu,\nu$ in their corresponding ranges. 

    For a given $A(S_u,\boldX_a^{\zeta,\eta},\boldX_b^{\mu,\nu})$ let $\{A(S_{u_i},\boldX_{a_i}^{\zeta_i,\eta_i},\boldX_{b_i}^{\mu_i,\nu_i})\}_{i=1}^{\kappa}$  
    be all events which are dependent of it.
    These are precisely all events which involve fragments that intersect nontrivially with the fragments in $A(S_u,\boldX_a^{\zeta,\eta},\boldX_b^{\mu,\nu})$.
    With $S_{u}$ located at $\boldX_{a}^{\zeta,\eta}$, there are at most $9M^2$ entries from $\boldX_a$ where another $S_{u_i}$ can be placed so that the intersection is nonempty.
    Specifically, these $9M^2$ entries from $\boldX_a$ are $[\zeta-M, \zeta+2M]\times[\eta-M,\eta+2M]$.
    
    As there are~$l$ different rectangle shapes, $S_{u}$ located at $\boldX_{a}^{\zeta,\eta}$ intersects at most $l\cdot9M^2$ $S_{u_i}$'s located at $\boldX_{a}^{\zeta_i,\eta_i}$.
    Additionally, the number of different $\boldX_{b_i}^{\mu_i,\nu_i}$'s is at most $n^2\cdot (|C|-1)$.
    Therefore, it follows that
    \begin{align}\label{equation:kappaBound}
        \kappa\le2\cdot 9M^2\cdot (|C|-1)\cdot n^2\cdot l,
    \end{align}
    where the~$2$ factor is due to symmetry.
    

    By Lemma~\ref{lemma:lovasz local lemma} and by~\eqref{equation:Pr}, if
    \begin{align}\label{equation:LLLcondition}
        e\cdot\left( \sum_{i=0}^\delta \binom{M}{i} \cdot \frac{(q-1)^i}{q^M}\right) \cdot(\kappa+1)<1,
    \end{align}
    then there exists a positive probability for none of the bad events to occur, i.e., there exists a forensic code for these parameters. 
    Indeed, according to~\eqref{equation:kappaBound} we have that~\eqref{equation:LLLcondition} is satisfied for a random code~$|C|$ no larger than
    \begin{align*}
        \frac{q^M}{49(Mn)^2l\sum_{i=0}^{\delta}\binom{M}{i}(q-1)^i}.
    \end{align*}
    The rate of the resulting code is therefore
    \begin{align*}
        \frac{\log_q|C|}{M}=\frac{M-2\log_q(M^{1.25}n)-\log_q(\sum_{i=0}^{\delta}\binom{M}{i}(q-1)^i)}{M}. \tag*{\qedhere}
    \end{align*}
       
\end{proof}
Theorem~\ref{thm: random encoding} readily implies that in the parameter regime where~$\frac{\log_q(n)}{M}\overset{n\to \infty}{\to} 0$ (which also implies that~$M\to\infty$), there exists a code which attains the rate upper bound given in Theorem~\ref{thm: necessary bounds for error correcting}.

Furthermore, it is known~\cite[Thm.~4.9]{roth2006introduction} that the asymptotic rate which follows from the sphere-packing bound is at most~$1-H_q(\delta/M)+o(1)$, where~$H_q$ is the base-$q$ entropy function and~$o(1)$ is a function which goes to zero with~$M$.
Recall that our error-free encoding scheme achieves rate of~$1/32$ (Theorem~\ref{theorem:comparison of both methods}), and our error-resilient one in Section~\ref{section:ECencoding} follows similar ideas, hence its rate cannot exceed~$1/32$ as well.
Consequently, finding codes of rate~$1$, which exist according to the above discussion, is left for future work.

\ifFULL
\section{Open problems and Future work}
In this paper, we developed codes based on the van der Corput set and the Halton–Hammersley set in both two and three dimensions, as well as in noisy settings.  
These constructions all achieve a constant rate.  
In contrast, the existence result in Theorem~\ref{thm: random encoding}, which leverages the Lovasz Local Lemma, shows that codes of rate~$1$ exist in certain parameter regimes.
Thus, a significant gap remains between the performance of our explicit constructions and the existential bound which may be improved in future work.

\bibliography{sn-bibliography}

\appendix

\section*{Appendix}
\section{Omitted proofs}\label{section:omittedProofs}

\begin{lemma}\label{app:proof-lemma:valid coloring as vdc set}
    Every color set induced by the coloring function~\eqref{coloring function} is of the form~$T_i$ mentioned in~\eqref{equation:T_i}, i.e. $L_i=T_i$ for all $i\in \llbracket m'-1\rrbracket$.
\end{lemma}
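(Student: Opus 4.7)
The plan is to prove both inclusions $L_i \subseteq T_i$ and $T_i \subseteq L_i$ by unwinding the definitions; the heart of the argument is a direct bookkeeping check that the residue class defining $L_i$ matches the residue class imposed by the tiling defining $T_i$. Throughout, I will use the standing assumption from Section~\ref{section:encoding for 2D} that ensures $m' \mid (n/d)$, which is what allows the cyclic shift along the $y$-axis in $T_i$ to reduce cleanly modulo $m'$. Write $z = n/d$ for brevity.

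First, I will give a clean membership criterion for each set. By definition of the coloring function in~\eqref{coloring function}, $(u,v) \in L_i$ if and only if $v \equiv f_{m'}(u \bmod m') + i \pmod{m'}$. On the other side, unpacking Lemma~\ref{union of vdc lemma} together with~\eqref{equation:T_i}, a pair $(u,v') \in T_i$ precisely when there exist $z_1 \in \llbracket m'-1 \rrbracket$ and $x, y \in \llbracket z/m'-1 \rrbracket$ such that $u = z_1 + x m'$ and $v' = (f_{m'}(z_1) + y m' + i) \bmod z$. Since $m' \mid z$, reducing the second equation modulo $m'$ gives $v' \equiv f_{m'}(z_1) + i \equiv f_{m'}(u \bmod m') + i \pmod{m'}$, which is exactly the $L_i$-condition. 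This establishes $T_i \subseteq L_i$.

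For the reverse inclusion $L_i \subseteq T_i$, I will start from a point $(u, v') \in L_i$ and exhibit witnesses $z_1, x, y$. Set $z_1 = u \bmod m'$ and $x = \lfloor u/m' \rfloor$; since $u \in \llbracket z-1 \rrbracket$, both witnesses lie in their required ranges. Next, set $v = (v' - i) \bmod z$, which by $m' \mid z$ satisfies $v \equiv f_{m'}(z_1) \pmod{m'}$. Writing $v = f_{m'}(z_1) + y m'$ for the unique nonnegative integer $y$ (which lies in $\llbracket z/m' - 1 \rrbracket$ because $v \in \llbracket z-1 \rrbracket$ and $v \geq f_{m'}(z_1)$ by the congruence) yields $(u,v) \in T$ and $(u, v') = (u, (v+i) \bmod z) \in T_i$.

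No step is genuinely hard; the only subtlety to be careful about is that the witness $y$ above is a valid index in $\llbracket z/m'-1 \rrbracket$, which rests on the divisibility $m' \mid z$ and on the fact that $f_{m'}(z_1) \in \llbracket m'-1 \rrbracket$ is the smallest nonnegative representative of its residue class modulo $m'$. Once these observations are in place, the equality $L_i = T_i$ follows immediately from the two inclusions, completing the proof.
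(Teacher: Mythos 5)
Your proof is correct and follows essentially the same route as the paper's: write out the explicit membership criterion for $T_i$ as the union of shifted copies of $C_{m'}$ with a cyclic $y$-shift, then check both inclusions against the residue condition defining $L_i$. The one place you improve on the paper's exposition is the forward inclusion $T_i \subseteq L_i$: the paper verifies $c(q+xm', (f_{m'}(q)+ym'+i)\bmod z) = i$ by a three-way case analysis on $y$ and on whether the sum overflows $z$, whereas you sidestep all of that with the single observation that $m'\mid z$ implies $(a \bmod z)\bmod m' = a \bmod m'$. That is the cleaner way to say it. Your handling of the reverse inclusion (pick $v=(v'-i)\bmod z$, then use the fact that $f_{m'}(z_1)$ is the least nonnegative residue in its class to write $v=f_{m'}(z_1)+ym'$ with $y\in\llbracket z/m'-1\rrbracket$) is also essentially the paper's witness construction, just with the range-check on $y$ made explicit rather than asserted. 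No gaps.
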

\begin{proof}
     Recall that:
     \begin{itemize}
         \item The $m'$-VDC set is~$C_{m'}\triangleq\{(q,f_{m'}(q))\}_{q=0}^{m'-1}$, see Definition~\ref{definition:VDCset}, and for~$x,y\in\llbracket n/(dm')-1\rrbracket$ we have the~$(xm',ym')$-shifted VDC-set
         $$C_{m'}^{(x,y)}\triangleq \{(z_1+xm',z_2+ym')\vert (z_1,z_2)\in C_{m'}\},$$ see Lemma~\ref{union of vdc lemma}.
         \item The union of all shifted VDC-sets is $T\triangleq \cup_{x,y\in \llbracket n/(dm')-1\rrbracket}C_{m'}^{(x,y)}$, and its shifted version is
         $$T_k\triangleq\{(u,(v+k)\bmod n)\vert (u,v)\in T \}$$ for any~$k\in\llbracket m' -1\rrbracket$. 
     \end{itemize}
     Combining these definitions gives us that for any~$k\in\llbracket m'-1\rrbracket$,
     \begin{align}\label{equation: updated Tk}
         T_k=\{(q+xm',(f_{m'}(q)+ym'+k)\bmod \textstyle\frac{n}{d})~\vert
         ~x,y\in\llbracket n/(dm')-1\rrbracket\mbox{ and }
          q\in\llbracket m'-1\rrbracket\}.
     \end{align} 
     Next we show that all entries of~$T_k$ are colored in color~$k$, for all~$k\in\llbracket m-1'\rrbracket$.
     Indeed, applying~$c$ over an element of~$T_k$~\eqref{equation: updated Tk} yields
     \begin{align}\label{equation:cijasTk}
         c(q+&xm',(f_{m'}(q)+ym'+k)\bmod n)=\nonumber\\
         &((f_{m'}(q)+ym'+k)\bmod n-f_{m'}(q))\bmod m'
     \end{align}
    Therefore, to prove~$T_k\subseteq L_k$ it suffices to show that~$\eqref{equation:cijasTk}=k$, for which  we split to cases.
    \begin{itemize}
        \item Case~1: $y\in \llbracket n/(dm')-2\rrbracket$.
            Since~$f_{m'}(q)+ym'+k\le \frac{n}{d}$ for~$y\in \llbracket n/(dm')-2\rrbracket$, it follows that
            \begin{align*}
            ((f_{m'}(q)+ym'+k)\bmod \textstyle\frac{n}{d}-f_{m'}(q))\bmod m' =(f_{m'}(q)+ym'+k-f_{m'}(q)) \bmod m'= k.
            \end{align*}
        \item Case 2: $y=\frac{n}{dm'}-1$. 
            \begin{itemize}
                \item Subcase 1: If $f_{m'}(q)+ym'+k\ge \textstyle\frac{n}{d}$ then
                \begin{align*}
                    ((f_{m'}(q)+&ym'+k)\bmod \textstyle\frac{n}{d}-f_{m'}(q))\bmod m' =(f_{m'}(q)-m'+k-f_{m'}(q)) \bmod m'= k.
                \end{align*}
                \item Subcase 2: If $f_{m'}(q)+ym'+k < \textstyle\frac{n}{d}$ then
                \begin{align*}
                    &((f_{m'}(q)+ym'+k)\bmod \textstyle\frac{n}{d}-f_{m'}(q))\bmod m'=\\
                    &(f_{m'}(q)+(\textstyle\frac{n}{dm'}-1)m'+k-f_{m'}(q)) \bmod m'= k.
                \end{align*}
            \end{itemize}
    \end{itemize}
     Therefore, it follows that~$T_k\subseteq L_k$. 
     To show that~$L_k\subseteq T_k$, let~$(i,j)\in L_k$, i.e.,~$c(i,j)=k$. By~\eqref{coloring function}, we have $(j-f_{m'}(i\bmod m'))=k+ym'$  for some~$y\in \mathbb{Z} $. Therefore, 
     \begin{align*}
         j=f_{m'}(i\bmod m'))+k+ym'=(f_{m'}(i\bmod m'))+k+ym')\bmod \textstyle\frac{n}{d},
     \end{align*}
     where the last equality follows since~$j\in\llbracket n/d -1\rrbracket$.
     For~$q\triangleq i \bmod m'$ we have~$i=q+xm'$ for some~$x\in\mathbb{Z}$. As $i,j\in \llbracket n/d-1\rrbracket $, it follows that range of $x,y$ can be reduced to $\llbracket n/(dm')-1\rrbracket $, i.e. $x,y \in \llbracket n/(dm')-1\rrbracket$.
     Therefore~$(i,j)=(q+xm',(f_{m'}(q)+ym'+k)\bmod \textstyle\frac{n}{d})$ for~$x,y\in \llbracket n/(dm')-1 \rrbracket$ and~$q\in\llbracket m'-1\rrbracket$,
     and therefore~$(i,j)\in T_k$ by~\eqref{equation: updated Tk}.
\end{proof}

\begin{lemma}\label{lemma:parametrsAreGoodappendix}
The parameters in~\eqref{equation:parameters} satisfy the requirements in Theorem~\ref{Theorem: cited sliced channel}.    
\end{lemma}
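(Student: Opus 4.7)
The plan is a direct substitution-and-bound verification of the inequality $L' + 4KL' + 2K\log(4KL') \leq L$, where $L' = 3\log Q + 4K^2 + 2$, under the substitutions $Q = m'-1$, $L = d^2 - (4d-4) = (d-2)^2$, and $K = \delta = c h^{2/3}$. I would show that each summand on the left is either $O(c h^2)$ with a controllable constant, or strictly $o(h^2)$, and then exploit the freedom in choosing $c$ small enough to defeat the dominant constant against $L = \Theta(h^2)$.

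First I would pin down the asymptotic orders from the definitions. Since $d$ is the largest integer satisfying $h \geq 3d - 1$, we have $d \geq (h-2)/3$, which gives $L = (d-2)^2 \geq h^2/16$ for sufficiently large $h$. The defining inequality $M \geq (4d-1)((m+1)d + d - 1)$, combined with the specified $M = c h^2 2^{h^{4/3}}$ and the maximality of $m$ as a power of two, yields $m \leq 2M/((4d-1)d) = O(2^{h^{4/3}})$. Consequently $\log_2 Q = \log_2(m/4 - 1) \leq h^{4/3} + O(\log h)$.

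Plugging these into $L'$ gives $L' = 3 h^{4/3} + 4c^2 h^{4/3} + O(\log h) = (3 + 4c^2) h^{4/3}(1 + o(1))$. The three terms on the left of the target inequality then evaluate to $L' = O(h^{4/3})$, $4KL' = 4c(3 + 4c^2) h^2 (1 + o(1))$, and $2K\log(4KL') = O(h^{2/3} \log h)$. The only $\Theta(h^2)$ contribution is $4KL'$; the other two are $o(h^2)$ and hence negligible against $L \geq h^2/16$.

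Finally, choosing $c$ so that $4c(3 + 4c^2) < 1/32$ (which is implied by, say, $c < 1/400$) ensures $4KL' < L/2$ for all sufficiently large $h$, leaving ample slack in the remaining $L/2$ to absorb the two lower-order terms. The main obstacle is purely computational bookkeeping: tracking the constants in $m \leq 2M/((4d-1)d)$ and in $L \geq h^2/16$ carefully enough to pin down an explicit threshold for $c$, and ensuring the implicit $o(1)$ and $O(\log h)$ terms are dominated from some finite $h$ onward. There is no deeper conceptual obstruction once the asymptotic orders are established.
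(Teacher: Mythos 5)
Your proposal is correct and follows essentially the same route as the paper's proof: pin down $d=\Theta(h)$, bound $m$ (hence $Q$) in terms of $M$ and $d$, compute $L'$, identify $4KL'$ as the dominant $\Theta(h^2)$ term with a small constant depending on $c$, and then shrink $c$ to defeat it against $L=(d-2)^2=\Theta(h^2)$. Two small remarks. First, the paper is internally inconsistent about whether $M=ch^2 2^{h^{4/3}}$ (as stated just before the lemma and as you assumed) or $M=h^2 2^{ch^{4/3}}$ (as used in Section~\ref{section:ECencoding} and inside the paper's own proof); with the latter form one gets $\log Q=O(ch^{4/3})$ and the dominant term becomes $O(c^2h^2)$ rather than your $O(ch^2)$, but either way a sufficiently small $c$ finishes the argument, so the discrepancy is immaterial. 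Second, your bound $m\le 2M/((4d-1)d)$ has a spurious factor of $2$: the defining inequality $M\ge(4d-1)((m+1)d+d-1)> (4d-1)md$ gives $m<M/((4d-1)d)$ directly. This is only a looser constant and does not affect the conclusion.
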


\begin{proof}
    Let $L'\triangleq 3\log(Q)+4\delta^2+2$. 
    We need to prove $$L'+4\delta L'+2\delta\log(4\delta L')\leq L.$$ 
    Recall from Section~\ref{section:overview} that $d$ is the largest integer such that $h\geq 3d-1$. 
    Thus we have $d\ge \lfloor\frac{h}{3}\rfloor-1$, which implies that 
     $\delta\le c(3d+1)^{2/3}\le3cd^{2/3} $ when $d$ is sufficiently large. 
     Recall that $m$ is the largest integer such that \( M \geq (4d-1)((m+1)d+d-1) \) which implies $M\geq md^2$.
     Thus we have $md^2\leq M=h^22^{ch^{4/3}}\le (3d+1)^22^{c(3d+1)^{4/3}}$. 
     When $d$ is sufficiently large, we then have $m\le 9\cdot2^{3cd^{4/3}}$.
     Thus, $Q\le 2^{3cd^{4/3}}$ and hence $\log(Q)\le {3cd^{4/3}}$.
     As $L'\triangleq 3\log(Q)+4\delta^2+2$, we have $L'\le 9cd^{4/3} +4(3cd^{2/3})^2+2=(9c+36c^2)d^{4/3}+2$. 
     When $d$ is sufficiently large and $c$ is sufficiently small, we can rewrite it as $L'\le 10cd^{4/3}$.
     Hence, we have $4\delta L'\le 12cd^{2/3}\cdot 10cd^{4/3}=120c^2d^2$ and $2\delta\log(4\delta L')\le 12cd^{2/3}\cdot\log(120c^2d^2)$
     which implies $L'+4\delta L'+2\delta\log(4\delta L')\le L$ for sufficiently small~$c$. 
\end{proof}
\begin{lemma}\label{Lemma: forInequlaity}
    For~$d$ and~$p$ as defined in the proof of Lemma~\ref{lemma:enough complete units}, we have that
  $((2^{i}+1) d  + d - 1)((2^{p+1-i}+1) d + d - 1) \le (4d-1)((2^p+1)d+d-1)$ for all $i\in \{1, 2, 3, \ldots, p\}$.
\end{lemma}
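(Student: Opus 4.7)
The plan is to reduce the inequality to a single-variable comparison by substituting $u := 2^i$ and $v := 2^{p+1-i}$, so that $uv = 2^{p+1} = 2m$ is fixed while $i$ varies. After this substitution, both sides are quadratic polynomials in $d$ with coefficients depending only on $u+v$ and $m$, and the inequality simplifies dramatically.

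Concretely, I would rewrite both sides as
\begin{align*}
\text{LHS} &= \bigl((u+2)d-1\bigr)\bigl((v+2)d-1\bigr)\\
&= (uv + 2(u+v) + 4)d^2 - (u+v+4)d + 1,\\
\text{RHS} &= (4d-1)\bigl((m+2)d-1\bigr)\\
&= (4m+8)d^2 - (m+6)d + 1.
\end{align*}
Using $uv = 2m$, a direct calculation then gives
\begin{equation*}
\text{LHS} - \text{RHS} = (u+v-m-2)\,d\,(2d-1).
\end{equation*}
Since $d \ge 1$ the factor $d(2d-1)$ is nonnegative, so it suffices to show $u+v \le m+2$ for every $i\in\{1,\dots,p\}$.

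This last step is the only place where the structure of the index set matters, but it is straightforward: with $uv = 2m$ held fixed, the sum $u+v$ restricted to the discrete set $\{(2^i,2^{p+1-i})\}_{i=1}^{p}$ is a convex function of $i$ (or equivalently, the function $t\mapsto t + 2m/t$ is convex for $t>0$), so it is maximized at the endpoints $i=1$ and $i=p$, where $(u,v)=(2,m)$ or $(m,2)$ and $u+v = m+2$. Hence $u+v \le m+2$ with equality exactly at the endpoints, which implies $\text{LHS} \le \text{RHS}$ and even gives strict inequality for $2\le i\le p-1$.

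There is no real obstacle here: the entire argument is a short algebraic manipulation followed by a convexity observation on the two-term geometric sum. The only mild care required is to expand and collect the $d^2$, $d$, and constant terms correctly, so as to see the clean factorization $(u+v-m-2)\,d\,(2d-1)$ that makes the reduction to $u+v\le m+2$ transparent.
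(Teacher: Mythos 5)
Your proof is correct and uses the same key fact as the paper — that $2^i + 2^{p+1-i} \le 2 + 2^p$ for $i \in \{1,\dots,p\}$. The paper applies this bound directly inside the expansion of the left-hand side, whereas you substitute $u = 2^i$, $v = 2^{p+1-i}$ and first factor $\mathrm{LHS}-\mathrm{RHS} = (u+v-m-2)\,d(2d-1)$, which is a cleaner presentation but not a different argument.
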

\begin{proof}Indeed,
 \begin{align*}
     ((2^{i}+1) d  + d - 1)((2^{p+1-i}+1) d + d - 1) &= (2^{i} d  + 2d - 1)(2^{p+1-i} d + 2d - 1)\\
     &=2^{p+1}d^2+d(2d-1)(2^{i}+2^{p+1-i})+(2d-1)^2\\
     &\le 2^{p+1}d^2+d(2d-1)(2+2^{p})+(2d-1)^2\\
     &=(4d-1)((2^p+1)d+d-1).&\qedhere
 \end{align*}
\end{proof}



\end{document}